\documentclass[11pt]{article}

\usepackage{amssymb}
\usepackage{amsfonts}
\usepackage{amsmath}
\usepackage{amsthm}
\usepackage[english]{babel}
\usepackage{latexsym}
\usepackage{color}
\usepackage{enumerate}
\usepackage{enumitem}
\usepackage{nicefrac}
\usepackage{mathrsfs}
\usepackage{comment}
\usepackage[hmargin=2cm,vmargin=1.9cm]{geometry}
\usepackage{bm}
\usepackage{bbm}

\usepackage{centernot}

\usepackage[affil-it]{authblk}
\setcounter{Maxaffil}{3}


\theoremstyle{plain}
\newtheorem{theorem}{Theorem}[section]
\newtheorem{lemma}[theorem]{Lemma}
\newtheorem{example}[theorem]{Example}
\newtheorem{proposition}[theorem]{Proposition}
\newtheorem{corollary}[theorem]{Corollary}

\theoremstyle{definition}
\newtheorem{definition}[theorem]{Definition}
\newtheorem{assumption}[theorem]{Assumption}
\newtheorem*{assumption*}{Standing Assumption}
\newtheorem{remark}[theorem]{Remark}

\theoremstyle{remark}

\newtheoremstyle{claim}
  {10pt}   
  {10pt}   
  {\itshape}  
  {0pt}       
  {\bfseries} 
  {.}         
  {5pt plus 1pt minus 1pt} 
  {}          

\theoremstyle{claim}

\numberwithin{equation}{section}

\newcommand{\ba}{\begin{array}{ll}}
\newcommand{\bal}{\begin{array}{ll}}
\newcommand{\ea}{\end{array}}

\newcommand{\E}{\mathbb{E}}

\newcommand{\probp}{\mathbb{P}}
\newcommand{\probq}{\mathbb{Q}}
\newcommand{\R}{\mathbb{R}}
\newcommand{\N}{\mathbb{N}}

\newcommand{\cN}{{\mathcal{N}}}
\newcommand{\cF}{{\mathcal{F}}}

\newcommand{\cG}{{\mathcal{G}}}
\newcommand{\cB}{{\mathcal{B}}}
\newcommand{\cS}{{\mathcal{S}}}
\newcommand{\cA}{\mathcal{A}}
\newcommand{\cC}{\mathcal{C}}
\newcommand{\cD}{\mathcal{D}}

\newcommand{\cK}{\mathcal{K}}

\newcommand{\cL}{\mathcal{L}}
\newcommand{\cM}{\mathcal{M}}

\newcommand{\cP}{\mathcal{P}}

\newcommand{\cX}{{\mathcal{X}}}
\newcommand{\cY}{{\mathcal{Y}}}

\newcommand{\one}{\mathbbm 1}

\newcommand{\barr}{\mathop{\rm bar}\nolimits}

\newcommand{\cone}{\mathop{\rm cone}\nolimits}
\newcommand{\co}{\mathop{\rm co}\nolimits}

\newcommand{\Span}{\mathop{\rm span}\nolimits}
\newcommand{\cl}{\mathop{\rm cl}\nolimits}

\newcommand{\VaR}{\mathop {\rm VaR}\nolimits}

\newcommand{\ES}{\mathop {\rm ES}\nolimits}

\newcommand{\MCP}{\mathop {\rm MCP}\nolimits}

\def\keywords{\vspace{.5em}
{\noindent\textbf{Keywords}:\,\relax%
}}

\def\JELclassification{\vspace{.5em}
{\noindent\textbf{JEL classification}:\,\relax%
}}

\def\MSCclassification{\vspace{.5em}
{\noindent\textbf{MSC}:\,\relax%
}}

\makeatletter
\def\@fnsymbol#1{\ensuremath{\ifcase#1\or 1\or 2\or 3\or 4\or 5\or 6\or 7\or 8\else\@ctrerr\fi}}
\makeatother

\makeatletter
\renewcommand\@biblabel[1]{}
\makeatother

\begin{document}

\title{Fundamental theorem of asset pricing with acceptable risk\\ in markets with frictions}

\author{\sc{Maria Arduca}}
\affil{Department of Economics and Finance\\
LUISS University, Rome, Italy\\
\texttt{\normalsize marduca@luiss.it}}
\author{\sc{Cosimo Munari}}
\affil{Center for Finance and Insurance and Swiss Finance Institute\\
University of Zurich, Switzerland\\
\texttt{\normalsize cosimo.munari@bf.uzh.ch}}

\date{\today}

\maketitle

\begin{abstract}
We study the range of prices at which a rational agent should contemplate transacting a financial contract outside a given securities market. Trading is subject to nonproportional transaction costs and portfolio constraints and full replication by way of market instruments is not always possible. Rationality is defined in terms of consistency with market prices and acceptable risk thresholds. We obtain a direct and a dual description of market-consistent prices with acceptable risk. The dual characterization requires an appropriate extension of the classical Fundamental Theorem of Asset Pricing where the role of arbitrage opportunities is played by good deals, i.e., costless investment opportunities with acceptable risk-reward tradeoff. In particular, we highlight the importance of scalable good deals, i.e., investment opportunities that are good deals regardless of their volume.
\end{abstract}

\keywords{arbitrage pricing, good deal pricing, transaction costs, portfolio constraints, risk measures}

\JELclassification{D81, G12}

\MSCclassification{91B25, 91G20, 91G30, 91G70}

\parindent 0em \noindent


\section{Introduction}


One of the fundamental goals of financial economics is to investigate at which price(s) a rational agent should contemplate transacting a financial contract. The pricing problem can be addressed in a number of ways depending on the assumptions on the agent as well as on the contract considered. The point of departure of classical arbitrage pricing is the assumption that agents are wealth maximizers and have access to an outstanding market where a number of basic financial securities are traded for a known price in an arbitrage-free way. The task is to find at which prices an agent would be willing to transact a given financial contract outside of the market. As is well known, the corresponding range of rational prices coincides with the interval of arbitrage-free prices. Since the pioneering contributions of Black and Scholes (1973), Merton (1973), Cox and Ross (1976), Rubinstein (1976), Ross (1978), Harrison and Kreps (1979), Kreps (1981), this framework has successfully been extended in several directions. A prominent line of research has contributed to what may be broadly called a general theory of ``subjective pricing''. This has been achieved by investigating the pricing problem under suitable relaxations of the classical notion of an arbitrage opportunity. A key contribution in this direction is the theory of good deal pricing initiated by Cochrane and Saa Requejo (2000) and Bernardo and Ledoit (2000) and based on the idea of restricting the interval of arbitrage-free prices by incorporating individual ``preferences'' into the pricing problem. This leads to tighter pricing bounds called good deal bounds. In this setting, arbitrage opportunities are replaced by good deals, i.e., investment opportunities that require no funding costs and deliver terminal payoffs that are sufficiently attractive based on the agent's ``preferences''. The crucial point is that, differently from arbitrage opportunities, good deals may expose to downside risk and the agent's task is therefore that of determining acceptable risk thresholds. Several ways to define 
risk thresholds have been considered in the literature, e.g., by means of Sharpe ratios in Cochrane and Saa Requejo (2000), Bj\"{o}rk and Slinko (2006), and Bion-Nadal and Di Nunno (2013), gain-loss ratios in Bernardo and Ledoit (2000), test probabilities in Carr et al.\ (2001), utility functions in \v{C}ern\'{y} and Hodges (2002), \v{C}ern\'{y} (2003), Kl\"{o}ppel and Schweizer (2007), and Arai (2011), expected shortfall in Cherny (2008), distance functions in Bondarenko and Longarela (2009), and acceptability indices in Madan and Cherny (2010). A theory for general acceptance sets has been developed by Jaschke and K\"{u}chler (2001), \v{C}ern\'{y} and Hodges (2002), Staum (2004), Cherny (2008), and Cheridito et al.\ (2017). We also refer to Arai and Fukasawa (2014) and Arai (2017) for a study of optimal good deal pricing bounds. One can distinguish between two research directions in the field. A first strand of literature starts by imposing suitable constraints on price deflators or, equivalently, martingale measures with the aim of  restricting the interval of arbitrage-free prices. The resulting good deal bounds can be therefore expressed in dual terms. The rationale for discarding some arbitrage-free prices is that transacting at those prices would create good deals with respect to a suitable acceptance set. The task is precisely to characterize the corresponding acceptance set. A second strand of literature starts by tightening the superreplication price through a suitable enlargement of the cone of positive random variables, which is replaced by a larger acceptance set. The task is to establish a dual description of the resulting good deal bounds. This is achieved by extending the Fundamental Theorem of Asset Pricing to a good deal pricing setting.

\smallskip

In this paper we follow the second strand of research mentioned above. Our goal is to contribute to the literature on good deal pricing in a static setting by establishing a version of the Fundamental Theorem of Asset Pricing in incomplete markets with frictions where agents use general acceptance sets to define good deals based on their individual ``preferences''. The presence of general acceptance sets poses technical challenges and requires pursuing a new strategy as the standard change of numeraire and exhaustion arguments behind the classical proof of the Fundamental Theorem can no longer be exploited. The highlights of our contribution are the following:
\begin{itemize}[leftmargin=*]
  \item The point of departure is a clear and economically motivated definition of rational prices that is missing in the good deal pricing literature with the exception of Cherny (2008). Our approach is different and inspired by Koch-Medina and Munari (2020). We assume that an agent willing to purchase a financial contract outside of the market will never accept to buy at a price at which he or she could find a better replicable payoff in the market. In the spirit of good deal pricing, the agent is prepared to accept a suitable ``replication error'', which is formally captured by an acceptance set. The corresponding rational prices are called market-consistent prices. In a frictionless setting where agents accept no ``replication error'' our notion boils down to the classical notion of an arbitrage-free price.
  \item We work under general convex transaction costs and portfolio constraints, which allows us to model both proportional and nonproportional frictions. The bulk of the literature has focused on frictionless markets or markets with proportional trasaction costs. Portfolio constraints have been rarely considered. Moreover, instead of focusing on the set of attainable payoffs at zero cost as a whole, we state our results by explicitly highlighting the specific role played by each source of frictions.
  \item The payoff space is taken to be the space of random variables over a general probability space. This is different from the bulk of the literature, with the exception of Cherny (2008), where regularity conditions on payoffs, e.g., integrability, are stipulated upfront in view of the application of special mathematical results, e.g., duality theory. The advantage of our approach is that we are able to highlight where and why a restriction to a special class of payoffs is needed, e.g., to apply duality theory, and what are its consequences in terms of the original pricing problem. This also allows us to point out the failure of change of numeraire techniques applied to acceptance sets. This important aspect, which distinguishes good deal pricing from arbitrage pricing, has never been discussed in the literature.
  \item  We introduce the notion of scalable good deals, i.e., payoffs that are good deals independently of their size, which extends to a good deal pricing setting the notion of a scalable arbitrage opportunity by Pennanen (2011a). The absence of scalable good deals is key to deriving our characterizations of market-consistent prices. This condition is weaker than the absence of good deals commonly stipulated in the literature. In particular, there are situations where absence of arbitrage is sufficient to ensure absence of scalable good deals. We also argue that absence of scalable good deals is economically sounder than absence of good deals.
  \item We adapt the classical notion of a price deflator to our good deal setting with frictions and introduce the class of strictly-consistent price deflators, which correspond to the Riesz densities of a pricing rule in a complete frictionless market where the basic traded securities are ``priced'' in accordance with their (suitably adjusted in the presence of nonproportional frictions) bid-ask spreads and every nonzero acceptable payoff has a strictly positive ``price''. This is different from similar notions in the literature, where no bid-ask spread adjustments are considered and acceptable payoffs, including positive payoffs, are often assumed to have a nonnegative ``price'' only.
  \item We establish direct and dual characterizations of market-consistent prices. The direct characterization is based on the analysis of superreplication prices and extends to a good deal pricing setting the classical findings of Bensaid et al.\ (1992) in markets with frictions. The dual characterization is based on a general version of the Fundamental Theorem of Asset Pricing and underpins the appropriate extension of the classical superreplication duality. Under suitable assumptions on the underlying model space, the Fundamental Theorem establishes equivalence between absence of scalable good deals and existence of strictly-consistent price deflators. This extends to a good deal pricing setting the static version of the Fundamental Theorem obtained by Pennanen (2011a). We provide a detailed comparison with the literature to highlight in which sense our result extends and sharpens the various formulations of the Fundamental Theorem in the good deal pricing literature. The only work on good deal pricing featuring a strong result with strictly-consistent price deflators is \v{C}ern\'{y} and Hodges (2002). In that paper the market is frictionless and the acceptance set is assumed to be boundedly generated, a condition that often forces the underlying probability space to be finite.
\end{itemize}
The paper is organized as follows. In Section~\ref{sect: market model} we describe the market model and the agent's acceptance set, and we introduce the notion of market-consistent prices with acceptable risk. In Section~\ref{sect: acceptable deals} we focus on good deals and show a number of sufficient conditions for the absence of scalable good deals (Proposition~\ref{prop: no acc deals}). Our main results are recorded in Section~\ref{sect: FTAP}. We establish a direct and a dual characterization of market-consistent prices with acceptable risk (Propositions~\ref{theo: characterization mcp superreplication}
and~\ref{theo: dual MCP}). The dual characterization is based on our general version of the Fundamental Theorem of Asset Pricing (Theorem~\ref{theo: FTAP})
and the corresponding Superreplication Theorem (Theorem~\ref{theo: superhedging theorem}), which are, from a technical perspective, the highlights of the paper. Throughout we prove sharpness of our results by means of suitable examples, which are always presented in the simplest possible setting, namely that of a two-states model, to demonstrate their general validity.

\section{The pricing problem}
\label{sect: market model}

In this section we state the pricing problem and describe the underlying mathematical framework. The bulk of the presentation is aligned with our reference literature on good deal pricing, e.g., Carr et al.\ (2001), Jaschke and K\"{u}chler (2001), \v{C}ern\'{y} and Hodges (2002), Staum (2004), Cherny (2008), Madan and Cherny (2010). We will highlight discrepancies where needed.

\smallskip

We consider an agent who has access to a financial market where a finite number of basic securities are traded. The agent's problem is to determine the range of prices at which he or she should contemplate transacting a financial contract outside of the market. The candidate prices should satisfy the following rationality conditions. On the one hand, they should be {\em consistent with the market}, i.e., the agent should not be willing to transact if the market offers a better contract for a better price. On the other hand, they should be {\em consistent with individual ``preferences''}, i.e., the agent should determine when a marketed contract is better based on a pre-specified criterion of acceptability. To define market-consistent prices with acceptable risk we thus have to describe the underlying market model and the agent's acceptance set. From now on, we always take a buyer's perspective and we therefore focus on ask prices. The conversion to a seller's perspective and to bid prices is straightforward.


\subsection{The market model}
\label{sect: market model details}

We consider a one-period financial market where uncertainty about the terminal state of the economy is captured by a probability space $(\Omega,\cF,\probp)$. We denote by $L^0$ the space of random variables modulo almost-sure equality under $\probp$ and equip it with its canonical algebraic operations and partial order. The set of positive random variables is denoted by $L^0_+$ and is referred to as the standard positive cone. Similarly, for $\cL\subset L^0$ we define $\cL_+:=\cL\cap L^0_+$. The expectation under $\probp$ is denoted by $\E$. For every $X\in L^0$ we define $\E_\probp[X] := \E_\probp[X^+]-\E_\probp[X^-]$, where $X^+$ and $X^-$ are the positive and negative part of $X$, and we follow the sign convention $\infty-\infty=-\infty$. The standard Lebesgue spaces are denoted by $L^p$ for $p\in[1,\infty]$. The elements of $L^0$ represent {\em payoffs} of financial contracts at the terminal date. We identify the elements of $\R$ with constant payoffs and refer to them as {\em risk-free payoffs}.

\smallskip

We assume that a finite number of basic securities are traded in the market and denote by $\cS\subset L^0$ the vector space spanned by their payoffs. The elements of $\cS$ are called {\em replicable payoffs}. Contrary to most of the good deal pricing literature, we do not assume the existence of risk-free replicable payoffs. To each replicable payoff we associate an ask price via a {\em pricing rule} $\pi:\cS\to(-\infty,\infty]$. In line with the literature, we allow for nonfinite prices to account for the existence of physical limitations in the availability of replicable payoffs. These limitations affect every agent. Moreover, we fix a nonempty set $\cM\subset\cS$ consisting of those replicable payoffs that can be effectively bought by our agent. The elements of $\cM$ are called {\em attainable payoffs} and account for the existence of, e.g., regulatory limitations in the purchase of replicable payoffs. These limitations are specific to our agent. Even though the agent has access to a (possibly strict) subset of $\cS$ only, it is mathematically convenient to define $\pi$ on the entire set $\cS$ to exploit its natural vector space structure. Recall that, by finite dimensionality, every linear Hausdorff topology on $\cS$ is normable and any two norms on $\cS$ are equivalent. We stipulate the following assumptions on the market primitives.

\begin{assumption}
We denote by $\|\cdot\|$ a fixed norm on $\cS$. We assume that $\pi$ is convex, lower semicontinuous, and satisfies $\pi(0)=0$. Moreover, we assume that $\cM$ is convex, closed, and satisfies $0\in\cM$.
\end{assumption}

Our general setting is compatible with a variety of market models encountered in the literature.

\begin{example}
\label{ex: market models}
Let $S_1,\dots,S_N\in L^0$ be the payoffs of the basic securities. To avoid redundant securities, assume that they are linearly independent. Through their trading activity, agents can set up portfolios of basic securities at the initial date. A portfolio of basic securities is represented by a vector $x=(x_1,\dots,x_N)\in\R^N$. We adopt the standard convention according to which a positive entry refers to a long position and a negative entry to a short position. Since in our setting no trading occurs at the terminal date and each security delivers its terminal state-contingent contractual payoff, portfolio $x$ generates the payoff $\sum_{i=1}^Nx_iS_i$, and the set of replicable payoffs $\cS$ coincides with the linear space generated by $S_1,\dots,S_N$. To each portfolio we associate an ask price via $V_0:\R^N\to(-\infty,\infty]$. As no basic security is redundant, two portfolios generating the same payoff must coincide and, hence, command the same ask price. This ``law of one price'' allows us to define for every replicable payoff $X\in\cS$
\[
\pi(X) = V_0(x)
\]
where $x\in\R^N$ is any portfolio satisfying $X=\sum_{i=1}^Nx_iS_i$. The pricing rule $\pi$ satisfies the stipulated assumptions whenever $V_0$ is convex, lower semicontinuous, and satisfies $V_0(0)=0$. This is the case in any of the following situations.
\begin{itemize}[leftmargin=*]
  \item {\em No transaction costs}. In a frictionless market the bid-ask spread associated with every basic security is zero so that every unit of the $i$th basic security can be bought or sold for the same price $p_i\in\R$. This yields the classical linear pricing functional
\[
V_0(x) = \sum_{i=1}^Np_ix_i.
\]
  \item {\em Proportional transaction costs}. In a market with proportional transaction costs every unit of the $i$th basic security can be bought for the price $p^b_i\in\R$ and sold for the price $p^s_i\in\R$. It is natural to assume that $p^b_i\geq p^s_i$ so that the corresponding bid-ask spread is nonnegative. In this setting, it is natural to consider the sublinear pricing functional used, e.g., in Jouini and Kallal (1995)
\[
V_0(x) = \sum_{x_i\geq0}p^b_ix_i+\sum_{x_i<0}p^s_ix_i.
\]
  \item {\em Nonproportional transaction costs}. In a market with nonproportional transaction costs the unitary buying and selling prices for the $i$th basic security vary with the volume traded according to some functions $p^b_i,p^s_i:\R_+\to\R\cup\{\infty\}$. Again, it makes sense to assume that $p^b_i(x)\geq p^s_i(x)$ for every $x\in\R_+$ so that the corresponding bid-ask spread is nonnegative. In many market models, see, e.g., the careful discussion about limit-order markets in Pennanen (2011a), it is natural to assume that $p^b_i$ is convex and $p^s_i$ is concave and that both are null and right continuous at zero as well as left continuous at the point where they jump at infinity. In addition, their one-sided derivatives should satisfy $\partial^+p^b_i(0)\geq\partial^+p^s_i(0)$. The assumption that $p^b_i$ and $p^s_i$ take nonfinite values represents a cap on the total number of units available in the market. In this setting, it is natural to consider the convex pricing functional used, e.g., in \c{C}etin and Rogers (2007)
\[
V_0(x) = \sum_{x_i\geq0}p^b_i(x_i)-\sum_{x_i<0}p^s_i(-x_i).
\]
\item {\em General convex pricing functional}. By standard convex duality, all the preceding examples are special instances of the general convex pricing functional defined by
\[
V_0(x) = \sup_{p\in\R^N}\left\{\sum_{i=1}^Np_ix_i-\delta(p)\right\},
\]
where $\delta:\R^N\to[0,\infty]$ is a map attaining the value zero. The map $\delta$ can be used to generate pre-specified deviations from frictionless prices. In particular, differently from the previous rules, this general pricing rule allows for a nonadditive structure across the different basic securities. We refer to Kaval and Molchanov (2006) and Pennanen (2011a) for concrete examples in the setting of link-saved trading and limit-order markets.
\end{itemize}
We model portfolio constraints such as borrowing and short selling restrictions on specific basic securities by restricting the set of admissible portfolios to a subset $\cP\subset\R^N$. The set $\cM$ thus corresponds to
\[
\cM = \left\{\sum_{i=1}^Nx_iS_i \,; \ x\in\cP\right\}.
\]
The set $\cM$ satisfies the stipulated assumptions whenever $\cP$ is convex, closed, and satisfies $0\in\cP$. This is the case in any of the following situations. We refer to Pennanen (2011a) and the references therein for additional examples of portfolio constraints that are compatible with our setting.
\begin{itemize}[leftmargin=*]
  \item {\em No portfolio constraints}. This corresponds to $\cP=\R^N$.
  \item {\em No short selling}. This corresponds to $\cP=\R^N_+$.
  \item {\em Caps on short and long positions}. This corresponds to $\cP=[\underline{x}_1,\overline{x}_1]\times\cdots\times[\underline{x}_N,\overline{x}_N]$ for suitable $\underline{x},\overline{x}\in\R^N$ such that $\underline{x}_i\leq\overline{x}_i$ for every $i=1,\dots,N$. In particular, this allows us to impose no short selling and caps on long positions at the same time.
\end{itemize}
\end{example}

\subsection{The acceptance set}

As said, the agent's problem is to determine a range of prices at which he or she should be prepared to acquire a financial contract with payoff $X\in L^0$ outside of the market. To tackle this problem, the agent will identify among all attainable payoffs $Z\in\cM$ those that are ``preferable'' to $X$ (from a buyer's perspective) and use the corresponding market prices to determine an upper bound on the candidate prices for $X$. In line with good deal pricing theory, we define said ``preference'' relationship by means of an acceptance set $\cA\subset L^0$. More precisely, we assume that $Z$ is ``preferable'' to $X$ whenever
$Z-X\in\cA$. It should be noted that the relation induced by $\cA$ is not a preference relation in a technical sense unless $\cA$ is a convex cone. The bulk of the good deal pricing literature has focused on this special case. This is, however, unsatisfactory as there exist relevant acceptance sets that are convex but fail to be conic, e.g., acceptance sets defined through utility functions or stochastic dominance. To include these examples, we join \v{C}ern\'{y} and Hodges (2002) and Staum (2004) and dispense with conicity. In this case, we find necessary to consequently dispense with the language of ``preferences'' and to provide a new, more general, interpretation to the acceptance set. In this paper, we interpret $\cA$ as the set of all replication errors that are deemed acceptable by the agent. In other words, the agent will try to replicate $X$ by means of attainable payoffs $Z$ available in the market and will use the acceptance set to determine when the residual payoff $Z-X$ is acceptable or not. If $\cA=L^0_+$, the agent will target perfect superreplication, thereby accepting no downside risk in the replication procedure. This choice corresponds to the classical setting of arbitrage pricing. If $\cA$ is strictly larger than $L^0_+$, the agent will be prepared to accept a suitable amount of downside risk. This may be achieved, e.g., by setting a cap on the downside risk alone or by balancing upside and downside risk.

\smallskip

The elements of $\cA$ are called {\em acceptable payoffs}. We assume that every payoff dominating an acceptable payoff is also acceptable and that the notion of acceptability is well behaved with respect to aggregation in the sense that every convex combination of acceptable payoffs remains acceptable. The first property corresponds to the usual monotonicity requirement stipulated in risk measure theory; see, e.g., Artzner et al.\ (1999). Formally, we stipulate the following assumptions
on the acceptance set.

\begin{assumption}
\label{ass: acceptance set}
The set $\cA$ is a strict convex subset of $L^0$ and satisfies $0\in\cA$ as well as $\cA+L^0_+\subset\cA$.
\end{assumption}


Our assumptions are compatible with many relevant acceptability criteria.

\begin{example}
\label{ex: acceptance sets}
The following sets fulfill the defining properties of an acceptance set.
\begin{itemize}[leftmargin=*]
  \item {\em Expected shortfall}. Let $\alpha\in(0,1)$. For given $X\in L^0$ we define the Value at Risk of $X$ at level $\alpha$ as the negative of the upper $\alpha$-quantile of $X$, i.e.,
\[
\VaR_\alpha(X) := \inf\{x\in\R \,; \ \probp(X+x<0)\leq\alpha\} = -\inf\{x\in\R \,; \ \probp(X\leq x)>\alpha\}.
\]
The Expected Shortfall of $X$ at level $\alpha$ and the corresponding acceptance set are defined as
\[
\ES_\alpha(X) := \frac{1}{\alpha}\int_0^\alpha \VaR_p(X)dp, \ \ \ \ \ \ \cA_{\ES}(\alpha) := \{X\in L^0 \,; \ \ES_\alpha(X)\leq0\}.
\]
The set $\cA_{\ES}(\alpha)$ consists of those payoffs that are positive on average on the left tail beyond their upper $\alpha$-quantile. This acceptability criterion has been used in a pricing context by Cherny (2008).
  \item {\em Gain-loss ratios}. Let $\alpha\in\big(0,\frac{1}{2}\big]$. For a given $X\in L^0$ we define the expectile of $X$ at level $\alpha$ as the unique solution $e_\alpha(X)\in[-\infty,\infty]$ of the equation
\[
\alpha\E[(X-e_\alpha(X))^+]=(1-\alpha)\E[(e_\alpha(X)-X)^+]
\]
provided that either $X^+$ or $X^-$ is integrable, and $e_\alpha(X)=-\infty$ otherwise. The corresponding acceptance set is defined by
\[
\cA_e(\alpha) := \{X\in L^0 \,; \ e_\alpha(X)\geq0\} = \left\{X\in L^0 \, ; \ \frac{\E[X^+]}{\E[X^-]}\geq\frac{1-\alpha}{\alpha}\right\},
\]
with the convention $\frac{\infty}{\infty}=-\infty$ and $\frac{0}{0}=\infty$. This set consists of all the payoffs for which the ratio between the expected inflow of money (gains) and the expected outflow of money (losses) is sufficiently large. In particular, note that $\frac{1-\alpha}{\alpha}\geq1$, which implies that the expected gain must be at least as large as the the expected loss. This type of acceptability criterion has been investigated in a pricing context by Bernardo and Ledoit (2000), even though the link with expectiles was not discussed there.
  \item {\em Test scenarios}. Let $E\in\cF$ such that $\probp(E)>0$. The acceptance set given by
\[
\cA_E := \{X\in L^0 \,; \ X\one_E\geq0\}
\]
consists of all the payoffs that are positive on the event $E$. In this case, the elements of $E$ can be seen as pre-specified test or control scenarios and the acceptability criterion boils down to requiring a positive payment in each of these scenarios. Clearly, the set $\cA_E$ corresponds to the standard positive cone provided that we take $E=\Omega$ or more generally $\probp(E)=1$.
  \item {\em Test probabilities}. Let $\probq=(\probq_1,\dots,\probq_n)$ be a vector of probability measures on $(\Omega,\cF)$ that are absolutely continuous with respect to $\probp$. For a given vector $\alpha=(\alpha_1,\dots,\alpha_n)\in\R^n$ with nonpositive components we define the acceptance set
\[
\cA_\probq(\alpha) := \left\{X\in L^0 \,; \ \E\left[\frac{d\probq_i}{d\probp}X\right]\geq\alpha_i, \ \forall \ i\in\{1,\dots,n\}\right\},
\]
which consists of all the payoffs whose expected value under each of the pre-specified test probabilities is above the corresponding floor. The test probabilities may be designed, e.g., based on expert opinions or may correspond to appropriate distortions of the underlying probability measure $\probp$. This type of acceptability criterion has been investigated in a pricing context by Carr et al.\ (2001). In that paper, the probability measures used to define the acceptance set are called valuation test measures or stress test measures depending on whether the associated floor is zero or not.
  \item {\em Utility functions}. Let $u:\R\to[-\infty,\infty)$ be a nonconstant, increasing, concave function satisfying $u(0)=0$, which is interpreted as a von Neumann-Morgenstern utility function. For $\alpha\in(-\infty,0]$ we define the acceptance set by
\[
\cA_u(\alpha) := \{X\in L^0 \, ; \ \E[u(X)]\geq\alpha\},
\]
which consists of all the payoffs that yield a sufficiently large expected utility. In particular, the level $\alpha$ could coincide with some utility level, in which case $\cA_u(\alpha)$ would consist of all the payoffs that are preferable, from the perspective of the utility function $u$, to a pre-specified deterministic monetary loss. This type of acceptability criteria has been considered in a pricing context by \v{C}ern\'{y} and Hodges (2002), \v{C}ern\'{y} (2003), Kl\"{o}ppel and Schweizer (2007), and Arai (2011).
  \item {\em Stochastic dominance}. Recall that a random variable $X\in L^0$ with cumulative distribution function $F_X$ dominates a random variable $Y\in L^0$ with cumulative distribution function $F_Y$ in the sense of second-order stochastic dominance whenever for every $t\in\R$ we have
\[
\int_{-\infty}^t F_X(x)dx \leq \int_{-\infty}^t F_Y(y)dy.
\]
In this case, we write $X\succeq_{SSD}Y$. Now, fix $Z\in L^0$ with $0\succeq_{SSD}Z$ and define the acceptance set
\[
\cA_{SSD}(Z):=\{X\in L^0 \,; \ X\succeq_{SSD}Z\}.
\]
The reference payoff $Z$ may represent the terminal value of a pre-specified benchmark portfolio. Note that, by definition, we have $\E[Z]\leq0$. The use of stochastic dominance rules in pricing problems dates back at least to Levy (1985).
\end{itemize}
\end{example}


\subsection{Market-consistent prices with acceptable risk}

As already said, to determine a range of rational prices, the agent will identify among all attainable payoffs available in the market those that deliver an acceptable replication error and will use the corresponding market prices to assess whether a candidate buying price is too high or not. These prices are called market-consistent prices (with acceptable risk) and constitute the natural range of prices for a buyer who has access to the market, respects the existing portfolio constraints, and is willing to take up replication risk according to the chosen acceptance set. Indeed, if a price is not market consistent, then the agent can always invest that amount (or less) in the market to purchase an attainable payoff that ensures an acceptable replication error. This leads to the following formal definition, which was never explicitly formulated in the literature. We refer to Remark~\ref{rem: MCP} for a comparison with the literature.

\begin{definition}
A number $p\in\R$ is a {\em market-consistent price (with acceptable risk)} for $X\in L^0$ if:
\begin{enumerate}
\item[(1)] $p<\pi(Z)$ for every $Z\in\cM$ such that $Z-X\in\cA\setminus\{0\}$;
\item[(2)] $p\leq\pi(X)$ whenever $X\in\cM$.
\end{enumerate}
We denote by $\MCP(X)$ the set of market-consistent prices for $X$.
\end{definition}

The set of market-consistent prices for a payoff $X\in L^0$ is an interval that is bounded to the right. The upper bound is the natural generalization of the classical superreplication price to our setting, i.e.,
\[
\pi^+(X) := \inf\{\pi(Z) \,; \ Z\in\cM, \ Z-X\in\cA\}.
\]
We call $\pi^+(X)$ the {\em superreplication price (with acceptable risk)} of $X$. In other words, the superreplication price is the natural pricing threshold for a buyer who prices in a market-consistent way according to the underlying acceptance set. We record this observation in the next proposition.

\begin{proposition}
\label{prop: interval MCP}
For every $X\in L^0$ the set $\MCP(X)$ is an interval such that $\inf\MCP(X)=-\infty$ and $\sup\MCP(X)=\pi^+(X)$.
\end{proposition}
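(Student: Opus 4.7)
The plan is to prove the two structural claims separately: first, that $\MCP(X)$ is an interval unbounded below, and second, that its supremum coincides with $\pi^+(X)$. The whole argument is essentially bookkeeping with the two defining conditions, so the only real care needed is to track whether $X\in\cM$ or not and to handle the strict vs.\ non-strict inequalities cleanly.

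For the interval/unboundedness claim, I would show that $\MCP(X)$ is downward closed: if $p\in\MCP(X)$ and $r\in\R$ satisfies $r\leq p$, then $r\in\MCP(X)$. Indeed, condition (1) for $p$ gives $r\leq p<\pi(Z)$ for any admissible $Z\in\cM$ with $Z-X\in\cA\setminus\{0\}$, hence condition (1) for $r$ follows. Condition (2) for $r$ is equally immediate from $r\leq p\leq\pi(X)$ whenever $X\in\cM$. A downward-closed subset of $\R$ is necessarily of the form $(-\infty,s)$ or $(-\infty,s]$ (or $\emptyset$, or all of $\R$), and this yields both the interval property and $\inf\MCP(X)=-\infty$ whenever $\MCP(X)$ is nonempty.

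For the supremum identity, I would set
\[
s:=\inf\{\pi(Z)\,;\ Z\in\cM,\ Z-X\in\cA\setminus\{0\}\},
\]
with the convention $\inf\emptyset=+\infty$, and then treat the two cases separately. If $X\notin\cM$, condition (2) is vacuous, so $p\in\MCP(X)$ iff $p<s$; moreover the constraint $Z-X\in\cA$ is equivalent to $Z-X\in\cA\setminus\{0\}$ because $X\notin\cM$ means $Z=X$ is excluded. Hence $\sup\MCP(X)=s=\pi^+(X)$. If $X\in\cM$, then $p\in\MCP(X)$ iff $p<s$ and $p\leq\pi(X)$, and a straightforward case split on whether $\pi(X)<s$ or $\pi(X)\geq s$ shows $\sup\MCP(X)=\min\{s,\pi(X)\}$; on the other side, the admissible set in the definition of $\pi^+(X)$ now also contains $Z=X$, contributing $\pi(X)$, so $\pi^+(X)=\min\{s,\pi(X)\}$ as well.

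The mild subtlety, and the only point that could be called an obstacle, is the interaction between the strict inequality in condition (1) and the weak inequality in condition (2): when $\pi(X)\geq s$ the supremum is not attained (as it equals $s$), while when $\pi(X)<s$ it is attained (as it equals $\pi(X)$). In either situation the supremum still matches $\min\{s,\pi(X)\}=\pi^+(X)$, so no extra argument is needed beyond verifying these two subcases.
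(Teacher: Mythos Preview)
Your proof is correct and follows essentially the same approach as the paper's: both establish downward closedness of $\MCP(X)$ and then identify the supremum with $\pi^+(X)$. The paper's version is slightly more streamlined in that it avoids the case split on $X\in\cM$ versus $X\notin\cM$ by directly showing $(-\infty,\pi^+(X))\subset\MCP(X)\subset(-\infty,\pi^+(X)]$ (the upper inclusion comes from observing that conditions (1) and (2) together yield $p\leq\pi(Z)$ for every $Z\in\cM$ with $Z-X\in\cA$, including $Z=X$); one small imprecision in your write-up is the equivalence ``$p\in\MCP(X)$ iff $p<s$'', since when $s$ is not attained $p=s$ also satisfies condition (1), but this does not affect the supremum and you clearly recognize the attainment issue in your final paragraph.
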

\begin{proof}
It is clear that $(-\infty,p)\subset\MCP(X)$ for every market-consistent price $p\in\MCP(X)$. Now, take any $p\in(-\infty,\pi^+(X))$ and note that, by definition of $\pi^+$, we have $p<\pi(Z)$ for every $Z\in\cM$ such that $Z-X\in\cA$. This shows that $p$ is a market-consistent price for $X$ and implies that $\pi^+(X)\leq\sup\MCP(X)$. Conversely, take an arbitrary market-consistent price $p\in\MCP(X)$. If $Z\in\cM$ is such that $Z-X\in\cA$, then $\pi(Z)\geq p$. Taking the infimum over such $Z$'s and the supremum over such $p$'s delivers the inequality $\pi^+(X)\geq\sup\MCP(X)$. This shows that $\pi^+(X)$ is the supremum of the set $\MCP(X)$.
\end{proof}

\begin{remark}
\label{rem: MCP}
(i) In line with our pricing problem, the notion of a market-consistent price is stated from a buyer's perspective. Following the same logic, one could define market-consistent prices from a seller's perspective and restrict the focus on prices that are simultaneously market consistent for both parties. One may wonder why we focus only on buyer's prices. From an economical perspective, this is because, as stressed above, the choice of the acceptance set is based on individual ``preferences'', implying that the general financial situation is that of a buyer and seller equipped with {\em different} acceptance sets. From a mathematical perspective, the buyer's and seller's problems are related to each other and one can easily adapt our results to obtain the corresponding results for seller's prices.

\smallskip

(ii) In the good deal pricing literature the focus is typically on superreplication prices and an explicit notion of a rational price is not explicitly discussed. The exception is Cherny (2008), where, in line with classical arbitrage pricing theory, rational prices are defined through extensions of the pricing rule preserving the absence of (suitably defined) good deals. Even though the pricing rule is not linear, the extension is assumed to be linear in the direction of the payoff that is ``added'' to the market. Our definition is not based on market extensions and does not require the absence of good deals, which, differently from the absence of arbitrage opportunities, is a debatable requirement; see Section~\ref{sect: acceptable deals}. Our approach extends that of Koch-Medina and Munari (2020) beyond the setting of frictionless markets and to general acceptance sets beyond the standard positive cone. We believe this approach is preferable from an economic perspective to the usual approach pursued in the good deal pricing literature.

\smallskip

(iii) Note that, in the definition of a market-consistent price, condition (1) need not imply condition (2), which is a natural requirement for a market-consistent price of an attainable payoff. The implication holds if, for instance, for every payoff $X\in\cM$ there exist a nonzero $U\in\cA$ and $c\in\R$ such that $X+\frac{1}{n}U\in\cM$ and $\pi(X+\frac{1}{n}U)\leq\pi(X)+\frac{1}{n}c$ for every $n\in\N$. In particular, this holds if $\cA$ and $\cM$ have nonzero intersection and $\pi$ and $\cM$ are both conic.
\end{remark}

\section{Good deals}
\label{sect: acceptable deals}

In this section we introduce the notion of good deals and discuss some important types of good deals. The absence of (a suitable class of) good deals will prove essential to establish our desired characterizations of market-consistent prices with acceptable risk. As explained below, this section deviates considerably from the path that is usually taken in the good deal pricing literature. In particular, we question the economic plausibility of the absence of good deals and replace it with the weaker condition of absence of scalable good deals.

\smallskip

A good deal is any nonzero acceptable payoff that is attainable and can be acquired at zero cost. As such, a good deal constitutes a natural generalization of an arbitrage opportunity, which corresponds to the situation where the acceptance set reduces to the standard positive cone. An important class of good deals is that of scalable good deals, i.e., payoffs that are good deals independently of their size. The notion of a good deal has appeared, sometimes with a slightly different meaning, under various names in the literature including good deal in Cochrane and Saa Requejo (2000), \v{C}ern\'{y} and Hodges (2002), Bj\"{o}rk and Slinko (2006),  Kl\"{o}ppel and Schweizer (2007), Bion-Nadal and Di Nunno (2013), Baes et al.\ (2020), good deal of first kind in Jaschke and K\"{u}chler (2001), good opportunity in Bernardo and Ledoit (2000), acceptable opportunity in Carr et al.\ (2001). The notion of a scalable good deal is a direct extension of that of a  scalable arbitrage opportunity introduced by Pennanen (2011a) and appeared, in a frictionless setting, in Baes et al.\ (2020). In this paper, we extend it to markets with frictions. The formal notions are recorded in the next definition, where we use recession cones and recession functionals as recalled in the appendix. Under our standing assumptions, $\cA^\infty$ and $\cM^\infty$ are the largest convex cones contained in $\cA$ and $\cM$, respectively. Similarly, $\pi^\infty$ is the smallest sublinear functional dominating $\pi$.

\begin{definition}
We say that a nonzero replicable payoff $X\in\cS$ is:
\begin{enumerate}
  \item[(1)] a {\em good deal (with respect to $\cA$)} if $X\in\cA\cap\cM$ and $\pi(X)\leq0$.
  \item[(2)] a {\em scalable good deal (with respect to $\cA$)} if $X\in\cA^\infty\cap\cM^\infty$ and $\pi^\infty(X)\leq0$.
  \item[(3)] a {\em strong scalable good deal (with respect to $\cA$)} if $X$ is a scalable good deal while $-X$ is not.
\end{enumerate}
We replace the term ``good deal'' with ``arbitrage opportunity'' whenever $\cA=L^0_+$.
\end{definition}


\begin{remark}
Note that if $X\in L^0$ is a strong scalable good deal, by definition there exists $\lambda>0$ such that $-\lambda X$ is not a good deal. However, this ``short'' position can be completely offset at zero cost by acquiring the attainable payoff $\lambda X$. This is what makes the scalable good deal ``strong''.
\end{remark}


It is clear that every strong scalable good deal is a scalable good deal, which in turn is a good deal. It is also clear that every (scalable) arbitrage opportunity is a (scalable) good deal. The absence of (scalable) good deals will be critical in our study of market-consistent prices. This condition plays the role of the absence of arbitrage opportunities in the classical arbitrage pricing theory. In that setting, an arbitrage opportunity constitutes an anomaly in the market because every rational agent will seek to exploit it thereby raising its demand until prices will also rise and the arbitrage opportunity will eventually vanish. The situation is quite different when we consider good deals as there might be no consensus across agents in the identification of a common criterion of acceptability, thereby casting doubts on the economic foundation of the absence of good deals. In our opinion, this crucial point has not been appropriately highlighted in the literature. The key observation about our paper is that the absence of good deals is not needed to develop our theory. Indeed, everything we have to ensure is that no (strong) scalable good deal exists. As shown by the next proposition, this weaker condition holds in a number of standard situations. In particular, we show that the absence of scalable arbitrage opportunities is sometimes sufficient to rule out scalable good deals. The condition $\cM^\infty\subset\cS_+$ is typically implied by caps on short positions (it holds if the payoffs of the basic securities are positive and the set of admissible portfolios is bounded from below so that short selling is possible but restricted for each security) while the condition $\cM^\infty=\{0\}$ is satisfied whenever there are caps on short and long positions alike (it is equivalent to the boundedness of the set of admissible portfolios); see Example~\ref{ex: market models}.

\begin{proposition}
\label{prop: no acc deals}
Assume that one of the following conditions holds:
\begin{enumerate}
  \item[(i)] $\cA^\infty=L^0_+$ and there exists no scalable arbitrage opportunity.
  \item[(ii)] $\cM^\infty\subset\cS_+$ and there exists no scalable arbitrage opportunity.
  \item[(iii)] $\cM^\infty=\{0\}$.
\end{enumerate}
Then, there exists no scalable good deal.
\end{proposition}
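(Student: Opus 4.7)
The plan is a case analysis. Suppose, toward a contradiction, that $X \in \cS \setminus\{0\}$ is a scalable good deal, meaning $X \in \cA^\infty \cap \cM^\infty$ and $\pi^\infty(X) \leq 0$. I will show that each of the three hypotheses forces $X$ either to be zero or to be a scalable arbitrage opportunity, contradicting the respective assumption. The key elementary fact I will use throughout is that $L^0_+$ is itself a convex cone, so $(L^0_+)^\infty = L^0_+$, and therefore a scalable arbitrage opportunity is simply a nonzero $X \in L^0_+ \cap \cM^\infty$ satisfying $\pi^\infty(X)\leq 0$.

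For case (i), assuming $\cA^\infty = L^0_+$, the condition $X \in \cA^\infty$ immediately gives $X \in L^0_+$. Combined with $X \in \cM^\infty$, $\pi^\infty(X)\leq 0$ and $X\neq 0$, this makes $X$ a scalable arbitrage opportunity, contradicting (i).

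For case (ii), assuming $\cM^\infty \subset \cS_+ \subset L^0_+$, the condition $X \in \cM^\infty$ forces $X \in L^0_+$ directly, independently of $\cA^\infty$. Together with the remaining defining properties of a scalable good deal, this again exhibits $X$ as a scalable arbitrage opportunity, contradicting (ii).

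For case (iii), the assumption $\cM^\infty = \{0\}$ combined with $X \in \cM^\infty$ gives $X = 0$, contradicting the requirement that a scalable good deal be nonzero.

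There is no serious obstacle: the argument is a matter of unwinding the definitions of scalable good deal and scalable arbitrage opportunity, together with the observation that $L^0_+$ is a cone so that its recession cone coincides with itself. The only point requiring a moment of care is to notice that cases (ii) and (iii) do not need any hypothesis on $\cA^\infty$ at all, since the restriction comes entirely from $\cM^\infty$.
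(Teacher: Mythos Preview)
Your proof is correct and follows essentially the same approach as the paper: in cases (i) and (ii) any scalable good deal is forced into $L^0_+$ and hence becomes a scalable arbitrage opportunity, while case (iii) kills nonzero elements of $\cM^\infty$ outright. The only cosmetic difference is that you phrase it as a proof by contradiction whereas the paper argues directly, but the content is identical.
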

\begin{proof}
It is clear that no scalable good deal can exist if (iii) holds. Now, take a nonzero $X\in\cA^\infty\cap\cM^\infty$. Under either (i) or (ii), we have $X\in L^0_+$. Hence, we must have $\pi^\infty(X)>0$, for otherwise $X$ would be a scalable arbitrage opportunity. As a result, there cannot exist any scalable good deal.
\end{proof}


The next proposition records a simple equivalent condition for the absence of strong scalable good deals that will be used in the sequel. The condition is a one-period equivalent to the condition in Theorem 8 in Pennanen (2011b). In that paper the condition is expressed in terms of portfolios instead of payoffs and the acceptance set is the standard positive cone.

\begin{proposition}
\label{prop: characterization no strong acc}
There exists no strong scalable good deal if and only if $\cA^\infty\cap\{X\in\cM^\infty \,; \ \pi^\infty(X)\leq0\}$
is a vector space.
\end{proposition}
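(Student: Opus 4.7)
The plan is to reduce the statement to a purely convex-geometric observation about convex cones. Let me write $C := \cA^\infty \cap \{X \in \cM^\infty \,;\ \pi^\infty(X) \leq 0\}$ for brevity. By the standing assumptions and the preamble recalled before the definition of good deals, $\cA^\infty$ and $\cM^\infty$ are convex cones, and $\pi^\infty$ is sublinear; in particular $\pi^\infty$ is positively homogeneous, so $\{X \in \cS \,;\ \pi^\infty(X) \leq 0\}$ is also a convex cone. Hence $C$, being the intersection of three convex cones, is itself a convex cone containing the origin. The first step is therefore to record this and the well-known fact that a convex cone $K$ is a vector subspace if and only if $K = -K$.

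The second step is to rewrite the definitions in terms of $C$: a nonzero $X \in \cS$ is a scalable good deal precisely when $X \in C \setminus \{0\}$, and it is a strong scalable good deal when additionally $-X \notin C$. So the condition ``no strong scalable good deal exists'' is exactly the statement that every nonzero $X \in C$ satisfies $-X \in C$. Since $0 \in C$ already, this is equivalent to $C \subset -C$, i.e., to $C = -C$, which by Step~1 is equivalent to $C$ being a vector space.

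Both implications then follow immediately: if no strong scalable good deal exists, symmetry of $C$ is forced and $C$ is a vector space; conversely, if $C$ is a vector space and $X$ is a scalable good deal, then $X \in C$ implies $-X \in C$, so $-X$ is also a scalable good deal and $X$ cannot be strong. I do not expect any real obstacle here: the only point where care is needed is checking that $\pi^\infty$ is positively homogeneous so that its $0$-sublevel set is a cone, which is part of the standard properties of the recession functional of a convex function collected in the appendix.
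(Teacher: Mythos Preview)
Your proof is correct and follows essentially the same approach as the paper: both identify the set $\cN$ (your $C$) as a convex cone and observe that the absence of strong scalable good deals is equivalent to $-\cN\subset\cN$, hence to $\cN$ being a vector space. The paper's argument is slightly more terse, but the logical content is identical.
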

\begin{proof}
Let $\cN=\cA^\infty\cap\{X\in\cM^\infty \,; \ \pi^\infty(X)\leq0\}$. Note that $\cN$ is a convex cone. To prove the ``only if'' implication, assume that there exist no strong scalable good deal. This implies that $-\cN\subset\cN$, showing that $\cN$ is a vector space. To prove the ``if'' implication, assume that $\cN$ is a vector space. Take a nonzero $X\in\cA^\infty\cap\cM^\infty$ such that $\pi^\infty(X)\leq0$. Note that $X\in\cN$, so that $-X\in\cN$ as well. It follows that no strong scalable good deal can exist, concluding the proof.
\end{proof}


\section{Fundamental Theorem of Asset Pricing}
\label{sect: FTAP}

This section contains our main results. We establish a direct and a dual characterization of market-consistent pricing with acceptable risk. In line with classical results on arbitrage-free prices, the dual characterization relies on an appropriate extension of the Fundamental Theorem of Asset Pricing. Most of this section, including our dual results, is new and both extends and sharpens the corresponding results in the good deal pricing literature. We refer to the dedicated remarks for a detailed comparison with the literature.


\subsection{The reference payoff space}
\label{sect: payoff space}

The remainder of the paper is concerned with establishing characterizations of market-consistent prices with acceptable risk. In line with classical results on arbitrage-free prices, our characterizations will be obtained by means of topological methods. This will sometimes force us to restrict the set of payoffs we are able to price. This set is called {\em payoff space} and is denoted by $\cX\subset L^0$. Note that, for consistency, the payoff space should always contain those payoffs that already carry a market price, i.e., $\cS\subset\cX$. Of course, the natural choice is to take $\cX=L^0$ endowed with its canonical topology of convergence in probability. This choice, however, becomes problematic whenever we wish to apply duality theory. In this case, the payoff space has to be a strict subspace of $L^0$.

\smallskip

At first sight, the introduction of the payoff space at this stage may seem cumbersome and one may wonder why we have not focused our attention on $\cX$ instead of $L^0$ right from the beginning. In fact, this is the standard approach of the entire good deal pricing literature, where $\cX$ is taken to be a Lebesgue space or, more generally, a locally-convex topological vector space equipped with a partial order. Our approach has two main advantages. On the one side, we do not want to rule out the natural choice $\cX=L^0$. On the contrary, we aim to develop a pricing theory for general contracts as far as our methodology permits us. On the other side, the distinction between $L^0$ and $\cX$ makes it possible to unveil a number of critical aspects of good deal pricing that have never been highlighted in the literature.

\smallskip

A first aspect has to do with the assumption $\cS\subset\cX$. This inclusion implies that the choice of the payoff space has an impact on the range of basic securities, thereby inducing restrictions on the underlying market model. For example, the choice $\cX=L^1$ implies that each basic security has an integrable payoff, a condition that need not be satisfied by all realistic market models. Ensuring a flexible choice of the payoff space is therefore desirable from a modelling perspective. A second aspect has to do with the comparison between arbitrage pricing and good deal pricing. In arbitrage pricing one works under $\cX=L^0$, so that the explicit introduction of the payoff space is not necessary. The application of duality theory is nevertheless possible through a change of probability making the payoffs of the basic securities integrable. A similar trick cannot be reproduced in the setting of good deal pricing because the structure of the acceptance set, differently from the standard positive cone underpinning arbitrage pricing, is often disrupted through a change of probability. The analysis of the interplay between the payoff space and the acceptance set should thus be an integral part of good deal pricing. We refer to Remark~\ref{rem: on S contained in X} for more details about this critical, yet neglected, aspect of the theory.

\smallskip

We collect the standing assumptions on the payoff space below. As said, the choice $\cX=L^0$ endowed with the topology of convergence in probability is not ruled out for the moment. We will be careful to highlight when we are forced to restrict our analysis to a strict subspace of $L^0$.

\begin{assumption}
\label{assumption direct}
We assume that $\cX$ is a linear subspace of $L^0$ equipped with a linear Hausdorff topology. We also assume that $\cS\subset\cX$ and $\cA\cap\cX$ is closed.
\end{assumption}

\begin{remark}
\label{rem: on S contained in X}
(i) The assumption $\cS\subset\cX$ will be crucial to establish our characterizations of market-consistent prices. At the same time, we will sometimes need to apply duality theory and we will therefore have to assume that $\cX$ is a strict subspace of $L^0$. In this case, our assumption will force the payoffs of the basic traded securities to display some degree of regularity, e.g., integrability with respect to $\probp$. Inspired by standard arguments from arbitrage pricing theory, one may wonder whether this issue can be overcome by a simple change of probability. Indeed, define
\[
d\probq = \frac{c}{1+\sum_{i=1}^N|S_i|}d\probp, \ \ \ \ \ \  c=\E\left[\frac{1}{1+\sum_{i=1}^N|S_i|}\right],
\]
where $S_1,\dots,S_N\in L^0$ are the payoffs of the basic securities. It is immediate to see that the probability $\probq$ is equivalent to $\probp$ and every payoff in $\cS$ is integrable with respect to $\probq$. As a result, it would seem possible to work with $\cX=L^1(\probq)$, where $L^1(\probq)$ is the space of $\probq$-integrable random variables. This is precisely what is done in arbitrage pricing theory to make the application of duality theory possible; see, e.g., the proof of Theorem 1.7 in F\"{o}llmer and Schied (2016). The problem with this approach is that the acceptance set often depends explicitly on the natural probability $\probp$ and its (topological) properties are typically lost after we pass to $\probq$. Most importantly for our applications, the set $\cA\cap L^1(\probq)$ is seldom closed with respect to the norm topology of $L^1(\probq)$. Interestingly, this issue does not arise in arbitrage pricing theory because the acceptance set used there, namely $L^0_+$, is invariant with respect to changes of equivalent probability. More generally, the change of probability would not be problematic if the acceptance set is invariant with respect to changes of the numeraire. Unfortunately, as shown in Koch-Medina et al.\ (2017), numeraire invariance is only compatible with acceptance sets based on test scenarios as defined in Example~\ref{ex: acceptance sets}.

\smallskip

(ii) For technical reasons we need to require that the restriction of the acceptance set to $\cX$ is closed. This implies that the natural choice $\cX=L^0$ is feasible only if the chosen acceptance set is closed with respect to the topology of convergence in probability. This condition is sometimes satisfied, e.g., by $L^0_+$, but often fails. As a result, the choice of $\cX$ will generally depend on the underlying acceptance set.
\end{remark}


\subsection{A key auxiliary set}
\label{sect: superreplication C}

This subsection features a technical result that will play a crucial role in the sequel. We show that, under the absence of strong scalable good deals, the set
\[
\cC := \{(X,m)\in\cX\times\R \,; \ \exists Z\in\cM \,:\, Z-X\in\cA, \ \pi(Z)\leq-m\}
\]
is closed (in the natural product topology). This set consists of all the payoff-price couples $(X,m)\in\cX\times\R$ such that $X$ can be superreplicated with acceptable risk by means of admissible payoffs available in the market for less than $-m$. The set $\cC$ plays the same role that in classical arbitrage pricing theory is played by the set of payoffs that can be superreplicated at zero cost. To see the link, consider a frictionless market, i.e., a market where $\pi$ is linear and $\cM=\cS$, and assume that $\cA=L^0_+$. The set of payoffs that can be superreplicated at zero cost is given by
\[
\cK := \{X\in\cX \,; \ \exists Z\in\cS, \ Z\geq X, \ \pi(Z)\leq0\}.
\]
It is easily verified that, taking any $U\in\cS$ satisfying $\pi(U)=1$, we can rewrite $\cC$ as
\[
\cC = \{(X,m)\in\cX\times\R \,; \ X+mU\in\cK\}.
\]
In this classical setting, it is well known that the absence of arbitrage opportunities implies closedness of $\cK$ and, hence, of $\cC$. This is key to establish the classical Fundamental Theorem of Asset Pricing; see, e.g., F\"{o}llmer and Schied (2016). The closedness of $\cC$ in our general framework will allow us to establish a general version of the Fundamental Theorem in the next subsections.

\begin{lemma}
\label{lem: closedness C}
If there is no strong scalable good deal, then $\cC$ is closed and $(0,n)\notin\cC$ for some $n\in\N$.
\end{lemma}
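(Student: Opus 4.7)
The plan is to prove closedness by the usual recipe: take a convergent sequence $(X_n, m_n) \to (X, m)$ with $(X_n, m_n) \in \cC$, pick witnesses $Z_n \in \cM$ satisfying $Z_n - X_n \in \cA$ and $\pi(Z_n) \leq -m_n$, and produce a subsequential limit of a suitably modified witness sequence that verifies $(X, m) \in \cC$. The core obstruction is that $\{Z_n\}$ need not be bounded in the finite-dimensional space $\cS$; the no strong scalable good deal hypothesis will be used to ``trim'' the divergent part of each $Z_n$ without breaking the witnessing conditions. Once boundedness is achieved, finite-dimensional compactness, the closedness of $\cM$ and $\cA \cap \cX$, the lower semicontinuity of $\pi$, and the continuity of the embedding $\cS \hookrightarrow \cX$ (valid since $\cS$ is finite-dimensional) will deliver the conclusion.

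To set up the trimming step, let $V := \cA^\infty \cap \{Y \in \cM^\infty : \pi^\infty(Y) \leq 0\}$, which is a vector subspace of $\cS$ by Proposition~\ref{prop: characterization no strong acc}, and fix an inner product on $\cS$ giving an orthogonal decomposition $\cS = V \oplus V^\perp$. Writing $Z_n = Y_n + \tilde Z_n$ with $Y_n \in V$ and $\tilde Z_n \in V^\perp$, I will replace $Z_n$ by $\tilde Z_n$; since $-Y_n \in V$ lies in $\cA^\infty \cap \cM^\infty$ and satisfies $\pi^\infty(-Y_n) \leq 0$, the standard recession-cone inclusions $\cA + \cA^\infty \subset \cA$, $\cM + \cM^\infty \subset \cM$, together with the subadditivity bound $\pi(U + W) \leq \pi(U) + \pi^\infty(W)$, ensure that $\tilde Z_n$ is again a witness for $(X_n, m_n)$. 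Next I will show $\|\tilde Z_n\|$ is bounded: if not, then along a subsequence the unit vectors $\tilde Z_n/\|\tilde Z_n\| \in V^\perp$ converge by finite-dimensional compactness to some $\tilde W \in V^\perp$ with $\|\tilde W\| = 1$; the usual recession-limit arguments (using that $X_n$ is bounded in $\cX$, $m_n$ is bounded in $\R$, $\cA \cap \cX$ is closed in $\cX$, $\cM$ is closed in $\cS$, and $\pi$ is convex and lower semicontinuous) will force $\tilde W \in \cA^\infty \cap \cM^\infty$ with $\pi^\infty(\tilde W) \leq 0$, i.e., $\tilde W \in V \cap V^\perp = \{0\}$, contradicting $\|\tilde W\| = 1$. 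Extracting a convergent subsequence $\tilde Z_n \to \tilde Z$ in $\cS$ (and hence in $\cX$), the closedness and lower-semicontinuity properties deliver $\tilde Z \in \cM$, $\tilde Z - X \in \cA$, $\pi(\tilde Z) \leq -m$, proving $(X, m) \in \cC$.

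For the second assertion, I will argue by contradiction: if $(0, n) \in \cC$ for every $n \in \N$, pick witnesses $Z_n \in \cM \cap \cA$ with $\pi(Z_n) \leq -n$ and apply the same orthogonal decomposition to obtain $\tilde Z_n \in V^\perp \cap \cM \cap \cA$ with $\pi(\tilde Z_n) \leq -n$. If $\|\tilde Z_n\|$ were bounded, a subsequential limit $\tilde Z$ would satisfy $\pi(\tilde Z) \leq \liminf(-n) = -\infty$, violating $\pi(\cS) \subset (-\infty, \infty]$; otherwise, the normalization argument of the closedness proof yields a unit vector in $V^\perp \cap V = \{0\}$, again a contradiction. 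The chief technical hurdle throughout will be carrying out the recession computations cleanly in the mixed topological setting, exploiting at each step the equivalence of all linear Hausdorff topologies on the finite-dimensional space $\cS$ to translate freely between norm convergence of $(\tilde Z_n)$ and convergence in $\cX$.
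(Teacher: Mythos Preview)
Your approach is essentially identical to the paper's: the same vector space $V=\cN=\cA^\infty\cap\{Y\in\cM^\infty:\pi^\infty(Y)\le0\}$, the same orthogonal trimming of witnesses into $V^\perp$ via the recession inclusions $\cA+\cA^\infty\subset\cA$, $\cM+\cM^\infty\subset\cM$ and $\pi(\cdot+W)\le\pi(\cdot)+\pi^\infty(W)$, the same normalization-and-recession contradiction to force boundedness in $V^\perp$, and the same dichotomy for the second assertion.

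One point to fix: you argue with sequences, but the topology on $\cX$ fixed in Assumption~\ref{assumption direct} is an arbitrary linear Hausdorff topology and is later taken to be the weak topology $\sigma(\cX,\cX')$, which is typically not first countable; hence sequential closedness is not enough. The paper runs the identical argument with nets, and every one of your steps (finite-dimensional compactness of the unit sphere in $\cS$, joint continuity of scalar multiplication giving $X_\alpha/\|\tilde Z_\alpha\|\to0$, the recession characterization~\eqref{eq: recession cones 1}, lower semicontinuity of $\pi$) carries over verbatim to nets, so the correction is purely cosmetic.
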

\begin{proof}
Set $\cN=\{X\in\cA^\infty\cap\cM^\infty \,; \ \pi^\infty(X)\leq 0\}$ and denote by $\cN^\perp$ the orthogonal complement of $\cN$ in $\cS$. We claim that for every $(X,m)\in\cC$ there exists $Z\in\cM\cap\cN^\perp$ such that $Z-X\in\cA$ and $\pi(Z)\leq-m$. To see this, note that we find $W\in\cM$ such that $W-X\in\cA$ and $\pi(W)\leq-m$. We can write $W=W_{\cN}+W_{\cN^\perp}$ for unique elements $W_{\cN}\in\cN$ and $W_{\cN^\perp}\in\cN^\perp$. Note that $W_{\cN}$ belongs to $-\cN$ because the set $\cN$ is a vector space by Proposition~\ref{prop: characterization no strong acc}. Hence, setting $Z=W_{\cN^\perp}$, we infer that $Z=W-W_{\cN}\in\cM+\cM^\infty\subset\cM$ as well as $Z-X=(W-X)-W_{\cN}\in\cA+\cA^\infty\subset\cA$ by \eqref{eq: recession cones 1}. Moreover, $\pi(Z)=\pi(W-W_{\cN})\leq-m$ by combining \eqref{eq: recession cones 1} with \eqref{eq: recession cones 2}. This shows the desired claim.

\smallskip

Next, we establish closedness. To this end, take a net $(X_\alpha,m_\alpha)\subset\cC$ indexed on the directed set $(A,\succeq)$ and a point $(X,m)\in\cX\times\R$ and assume that $(X_\alpha,m_\alpha)\to(X,m)$. By assumption, we find a net $(Z_\alpha)\subset\cM$ such that $Z_\alpha-X_\alpha\in\cA$ and $\pi(Z_\alpha)\leq-m_\alpha$ for every $\alpha\in A$. Without loss of generality we can assume that $(Z_\alpha)\subset\cN^\perp$. Now, suppose that $(Z_\alpha)$ has no convergent subnet. In this case, we find a subnet of $(Z_\alpha)$ consisting of nonzero elements with strictly-positive diverging norms. (Indeed, it suffices to consider the index set $B=\{(\alpha,n) \,; \ \alpha\in A, \ n\in\N, \ \|Z_\alpha\|>n\}$ equipped with the direction defined by $(\alpha,n)\succeq(\beta,m)$ if and only if $\alpha\succeq\beta$ and $m\geq n$ and take $Z_{(\alpha,n)}=Z_\alpha$ for every $(\alpha,n)\in B$). We still denote this subnet by $(Z_\alpha)$. Since the unit sphere in $\cS$ is compact, we can assume that $\frac{Z_\alpha}{\|Z_\alpha\|} \to Z$ for a suitable nonzero $Z\in\cM^\infty$ by~\eqref{eq: recession cones 1}. As $(X_\alpha)$ is a convergent net by assumption,
\[
\frac{Z_\alpha-X_\alpha}{\|Z_\alpha\|} \to Z.
\]
This implies that $Z\in\cA^\infty$ again by~\eqref{eq: recession cones 1}. We claim that $\pi^\infty(Z)\leq0$. Otherwise, we must find $\lambda>0$ such that $\pi(\lambda Z)>0$. Without loss of generality we may assume that $\|Z_\alpha\|>\lambda$ for every $\alpha\in A$. Since $(m_\alpha)$ is a convergent net, we can use the lower semicontinuity and convexity of $\pi$ to get
\[
0 < \pi(\lambda Z) \leq \liminf_{\alpha}\pi\left(\frac{\lambda Z_\alpha}{\|Z_\alpha\|}\right) \leq \liminf_{\alpha}\frac{\lambda\pi(Z_\alpha)}{\|Z_\alpha\|} \leq \liminf_{\alpha}\frac{-\lambda m_\alpha}{\|Z_\alpha\|} = 0.
\]
This yields $\pi^\infty(Z)\leq0$. As a result, it follows that $Z$ belongs to $\cN$. However, this is not possible because $Z$ is a nonzero element in $\cN^\perp$. To avoid this contradiction, the net $(Z_\alpha)$ must admit a convergent subnet, which we still denote by $(Z_\alpha)$ for convenience. By closedness of $\cM$, the limit $Z$ also belongs to $\cM$. As we clearly have $Z_\alpha-X_\alpha\to Z-X$, it follows that $Z-X\in\cA$ by closedness of $\cA\cap\cX$. Moreover,
\[
\pi(Z) \leq \liminf_{\alpha}\pi(Z_\alpha) \leq \liminf_{\alpha}-m_\alpha = -m
\]
by lower semicontinuity of $\pi$. This shows that $(X,m)\in\cC$ and establishes that $\cC$ is closed.

\smallskip

Finally, we show that $(0,n)\notin\cC$ for some $n\in\N$. To this effect, assume to the contrary that for every $n\in\N$ there exists $Z_n\in\cA\cap\cM$ such that $\pi(Z_n)\leq-n$. If the sequence $(Z_n)$ is bounded, then we may assume without loss of generality that $Z_n\to Z$ for some $Z\in\cA\cap\cM$. The lower semicontinuity of $\pi$ implies $\pi(Z) \leq \liminf_{n\to\infty}\pi(Z_n) = -\infty$, which cannot hold. Hence, the sequence $(Z_n)$ must be unbounded. As argued above, we can assume that $(Z_n)\subset\cN^\perp$ without loss of generality. Moreover, we find a suitable subsequence, which we still denote by $(Z_n)$, that has strictly-positive divergent norms satisfying $\frac{Z_n}{\|Z_n\|}\to Z$ for some nonzero $Z$ belonging to $\cA^\infty\cap\cM^\infty$. We claim that $\pi^\infty(Z)\leq0$. Otherwise, we must find $\lambda>0$ such that $\pi(\lambda Z)>0$. Without loss of generality we may assume that $\|Z_n\|>\lambda$ for every $n\in\N$. The lower semicontinuity and convexity of $\pi$ imply
\[
0 < \pi(\lambda Z) \leq \liminf_{n\to\infty}\pi\left(\frac{\lambda Z_n}{\|Z_n\|}\right) \leq \liminf_{n\to\infty}\frac{\lambda\pi(Z_n)}{\|Z_n\|} \leq \liminf_{n\to\infty}\frac{-\lambda n}{\|Z_n\|} \leq 0.
\]
This shows that $\pi^\infty(Z)\leq0$ must hold. As a result, it follows that $Z$ belongs to $\cN$. However, this is not possible because $Z$ is a nonzero element in $\cN^\perp$. Hence, we must have $(0,n)\notin\cC$ for some $n\in\N$.
\end{proof}

\subsection{Direct characterization of market-consistent prices}

As observed in Proposition~\ref{prop: interval MCP}, the set of market-consistent prices with acceptable risk is an interval that is bounded from above by the corresponding superreplication price.
In this section we are concerned with establishing when the superreplication price is itself market consistent. This will yield a direct characterization of market-consistent prices. In Example \ref{ex: superreplication under replicability} we show that in general the superreplication price can be market consistent or not regardless of whether the underlying payoff is attainable or not. This is based on the following simple characterization of market consistency.

\begin{proposition}
\label{prop: characterization mcp superreplication}
For every $X\in\cX$ such that $\pi^+(X)\in\R$ we have $\pi^+(X)\in MCP(X)$ if and only if $(\cA+X)\cap\{Z\in\cM \,; \ \pi(Z)=\pi^+(X)\}\subset\{X\}$.
\end{proposition}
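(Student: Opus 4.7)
The plan is to unpack the definition of a market-consistent price at $p=\pi^+(X)$ and match it term-by-term with the set inclusion on the right-hand side. Since the whole statement only involves the value $\pi^+(X)$ (assumed finite), there is no genuine topological or duality input: this is really a book-keeping lemma reading off Proposition~\ref{prop: interval MCP}.

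First, I would dispose of condition~(2) in the definition of $\MCP(X)$. If $X\in\cM$, then $Z=X$ is admissible in the infimum defining $\pi^+(X)$ because $Z-X=0\in\cA$, so $\pi^+(X)\leq\pi(X)$. Thus condition~(2) holds automatically at $p=\pi^+(X)$, whether or not $X\in\cM$. Consequently $\pi^+(X)\in\MCP(X)$ is equivalent to condition~(1) at $p=\pi^+(X)$, namely
\[
\pi^+(X)<\pi(Z)\quad\text{for every } Z\in\cM \text{ with } Z-X\in\cA\setminus\{0\}.
\]

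Next, I would use the definition of $\pi^+$ to translate this strict inequality. By construction, $\pi^+(X)\leq\pi(Z)$ for every $Z\in\cM$ with $Z-X\in\cA$. Hence condition~(1) above fails exactly when there exists some $Z\in\cM$ with $Z-X\in\cA\setminus\{0\}$ and $\pi(Z)=\pi^+(X)$. Equivalently, it holds precisely when every $Z\in\cM$ satisfying $Z-X\in\cA$ and $\pi(Z)=\pi^+(X)$ must already satisfy $Z=X$. Rewriting $Z-X\in\cA$ as $Z\in\cA+X$, this is exactly the inclusion
\[
(\cA+X)\cap\{Z\in\cM \,;\ \pi(Z)=\pi^+(X)\}\subset\{X\},
\]
which is what we wanted.

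There is no real obstacle here; the only subtlety to keep in mind is the distinction between the strict inequality in condition~(1) of the definition and the non-strict inequality $\pi^+(X)\leq\pi(Z)$ produced by the infimum, which is exactly why the critical case is the boundary $\pi(Z)=\pi^+(X)$. The hypothesis $\pi^+(X)\in\R$ is used so that the comparison $\pi^+(X)<\pi(Z)$ makes sense as an inequality of real numbers and so that $\pi^+(X)$ is a candidate element of $\MCP(X)\subset\R$ in the first place.
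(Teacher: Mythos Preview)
Your argument is correct and matches the paper's proof in substance: both dispose of condition~(2) via $\pi^+(X)\leq\pi(X)$ when $X\in\cM$, and both reduce condition~(1) at $p=\pi^+(X)$ to the set inclusion by using that $\pi^+(X)\leq\pi(Z)$ for admissible $Z$, so the only way strict inequality can fail is equality. If anything, you are slightly more explicit than the paper about invoking the infimum bound $\pi^+(X)\leq\pi(Z)$, which the paper uses tacitly when asserting $\pi(Z)>\pi^+(X)$.
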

\begin{proof}
First, assume that $(\cA+X)\cap\{Z\in\cM \,; \ \pi(Z)=\pi^+(X)\}\subset\{X\}$. Then, for every $Z\in\cM$ satisfying $Z-X\in\cA\setminus\{0\}$ we must have $\pi(Z)>\pi^+(X)$. Since $\pi^+(X)\leq\pi(X)$ whenever $X\in\cM$, it follows that $\pi^+(X)\in\MCP(X)$, proving the ``if'' implication. Conversely, assume that $\pi^+(X)\in\MCP(X)$ and take any payoff $Z\in(\cA+X)\cap\cM$. If we happen to have $\pi(Z)=\pi^+(X)$, then $Z$ must be equal to $X$ by market consistency of $\pi^+(X)$. This proves the ``only if'' implication.
\end{proof}

The previous proposition shows that market consistency of the superreplication price is strongly linked with the attainability of the infimum in the definition of superreplication price. We therefore target sufficient conditions for attainability to hold. The next lemma suggests a strategy to tackle this problem.

\begin{lemma}
\label{lem: superreplication C}
For every $X\in\cX$ we have $\pi^+(X) = \inf\{m\in\R \,; \ (X,-m)\in\cC\}$.
\end{lemma}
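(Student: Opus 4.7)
The lemma is essentially a bookkeeping exercise: the set $\cC$ was deliberately designed as the epigraph (flipped in the second coordinate) of the superreplication functional $\pi^+$, so the identity should fall out of unravelling the two definitions. Nothing topological is needed here; in particular, the closedness result of Lemma~\ref{lem: closedness C} plays no role.

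First, I would rewrite the membership condition in $\cC$. By definition of $\cC$, for any $m \in \R$ one has $(X, -m) \in \cC$ if and only if there exists $Z \in \cM$ such that $Z - X \in \cA$ and $\pi(Z) \leq m$. Denote the set on the right-hand side of the lemma by $M(X) := \{m \in \R \,; \ (X, -m) \in \cC\}$, and the set appearing in the definition of $\pi^+(X)$ by $P(X) := \{\pi(Z) \,; \ Z \in \cM, \ Z - X \in \cA\}$, so that $\pi^+(X) = \inf P(X)$ with the usual convention $\inf \emptyset = +\infty$.

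For the inequality $\pi^+(X) \leq \inf M(X)$, I would take any $m \in M(X)$, produce a witness $Z \in \cM$ with $Z - X \in \cA$ and $\pi(Z) \leq m$, and conclude $\pi^+(X) \leq \pi(Z) \leq m$; passing to the infimum over $m \in M(X)$ yields the claim. For the reverse inequality $\inf M(X) \leq \pi^+(X)$, I would take any $Z \in \cM$ with $Z - X \in \cA$, observe that $m := \pi(Z)$ automatically satisfies $\pi(Z) \leq m$, hence $m \in M(X)$, and therefore $\inf M(X) \leq m = \pi(Z)$; passing to the infimum over such $Z$ gives the claim.

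Finally, I would check the degenerate cases to make sure the convention $\inf \emptyset = +\infty$ is handled consistently. If no $Z \in \cM$ satisfies $Z - X \in \cA$, then both $P(X)$ and $M(X)$ are empty, so both sides equal $+\infty$. If $P(X)$ is unbounded below, then $M(X)$ is unbounded below as well (since $P(X) \subset M(X)$), so both sides equal $-\infty$. The only potential subtlety, and the one minor point I would make explicit, is that $M(X)$ can contain values $m$ strictly larger than any $\pi(Z)$ witnessing membership; but this is harmless because the two argued inequalities pin the infima down to the same value.
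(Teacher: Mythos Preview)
Your proof is correct and follows essentially the same approach as the paper: both unravel the definition of $\cC$ to see that $(X,-m)\in\cC$ is equivalent to the existence of $Z\in\cM$ with $Z-X\in\cA$ and $\pi(Z)\leq m$, from which the equality of infima is immediate. The paper compresses your two inequalities into the single intermediate identity $\pi^+(X)=\inf\{m\in\R \,; \ \exists Z\in\cM \,:\, Z-X\in\cA, \ \pi(Z)\leq m\}$, but the content is identical; one tiny caveat is that when you write $P(X)\subset M(X)$ you should restrict to finite values of $\pi(Z)$, since $\pi$ may take the value $+\infty$, though this does not affect the infimum.
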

\begin{proof}
For every $m\in\R$ we have $(X,-m)\in\cC$ if and only if there exists $Z\in\cM$ such that $Z-X\in\cA$ and $\pi(Z)\leq m$. As a result, we get
\[
\pi^+(X)
=
\inf\{m\in\R \,; \ \exists Z\in\cM \,:\, Z-X\in\cA, \ \pi(Z)\leq m\}
=
\inf\{m\in\R \,; \ (X,-m)\in\cC\}.\qedhere
\]
\end{proof}

The closedness criterion for the set $\cC$ established in Lemma~\ref{lem: closedness C} yields the following attainability result. 

\begin{proposition}
\label{prop: direct FTAP}
If there exists no strong scalable good deal, then for every $X\in\cX$ with $\pi^+(X)<\infty$ there exists $Z\in\cM$ such that $Z-X\in\cA$ and $\pi(Z)=\pi^+(X)$.
\end{proposition}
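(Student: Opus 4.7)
The plan is to combine Lemma~\ref{lem: superreplication C}, which rewrites $\pi^+(X) = \inf\{m \in \R \,;\, (X,-m) \in \cC\}$, with the closedness of $\cC$ furnished by Lemma~\ref{lem: closedness C}. Writing $m^* := \pi^+(X)$ and assuming provisionally $m^* \in \R$, I would extract a sequence $m_n \downarrow m^*$ with $(X,-m_n) \in \cC$; since $\cC$ is closed and $(X,-m_n) \to (X,-m^*)$ in the product topology of $\cX \times \R$, the limit point $(X,-m^*)$ belongs to $\cC$. Unpacking the definition of $\cC$ then produces $Z \in \cM$ with $Z - X \in \cA$ and $\pi(Z) \leq m^* = \pi^+(X)$, while the reverse inequality $\pi(Z) \geq \pi^+(X)$ is immediate from the infimum definition of the superreplication price.

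The main obstacle is ruling out the degenerate case $m^* = -\infty$, in which the conclusion would be vacuous since $\pi$ takes values in $(-\infty,\infty]$ and no $Z \in \cS$ can satisfy $\pi(Z) = -\infty$. I expect this step to require re-running the argument in the second half of the proof of Lemma~\ref{lem: closedness C}, with $X$ in the role played there by $0$. Assuming $m^* = -\infty$ for contradiction, I would pick $(Z_n) \subset \cM$ with $Z_n - X \in \cA$ and $\pi(Z_n) \to -\infty$. Proposition~\ref{prop: characterization no strong acc} guarantees that the convex cone $\cN := \cA^\infty \cap \{W \in \cM^\infty \,;\, \pi^\infty(W) \leq 0\}$ is actually a vector space, and the preliminary claim in the proof of Lemma~\ref{lem: closedness C} allows me to replace each $Z_n$ by its $\cN^\perp$-component in $\cS$ without spoiling any of the properties $Z_n \in \cM$, $Z_n - X \in \cA$, or $\pi(Z_n) \leq -n$.

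From here a bounded-versus-unbounded dichotomy in $\cS$ would close the argument, exploiting the finite dimensionality of $\cS$ via compactness of its unit sphere. If the $Z_n$ remain bounded, a cluster point $Z \in \cM$ would satisfy $Z - X \in \cA$ by closedness of $\cA \cap \cX$ and $\pi(Z) \leq \liminf_n \pi(Z_n) = -\infty$ by lower semicontinuity, contradicting $\pi(Z) > -\infty$. If the $Z_n$ diverge in norm, then along a subsequence $Z_n/\|Z_n\| \to Z$ for some nonzero $Z \in \cN^\perp$; the shift by $X$ is asymptotically negligible, so $(Z_n - X)/\|Z_n\| \to Z$ as well, and the recession-cone identities place $Z \in \cA^\infty \cap \cM^\infty$, while the same convexity-and-lower-semicontinuity estimate used in Lemma~\ref{lem: closedness C} yields $\pi^\infty(Z) \leq 0$. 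This would force $Z \in \cN \cap \cN^\perp = \{0\}$, contradicting $Z \neq 0$, and excluding the case $m^* = -\infty$ as needed so that the first paragraph applies.
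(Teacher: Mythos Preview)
Your argument is correct and matches the paper's for the main step: once $\pi^+(X)\in\R$, closedness of $\cC$ (Lemma~\ref{lem: closedness C}) together with Lemma~\ref{lem: superreplication C} forces the infimum to be attained. The only genuine difference is in how you rule out $\pi^+(X)=-\infty$. You re-run the bounded/unbounded dichotomy from the second half of the proof of Lemma~\ref{lem: closedness C} with $X$ in place of $0$; this works, and the preliminary $\cN^\perp$-projection claim in that proof is indeed stated for arbitrary $(X,m)\in\cC$, so you may invoke it. The paper instead observes that the epigraph of $\pi^+$ is (up to a sign flip) the closed set $\cC$, so $\pi^+$ is convex and lower semicontinuous; the second conclusion of Lemma~\ref{lem: closedness C}, namely $(0,n)\notin\cC$ for some $n$, gives $\pi^+(0)>-\infty$, and a convex lower semicontinuous function finite at one point can never take the value $-\infty$ anywhere. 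This is shorter and avoids duplicating the compactness argument for each $X$, while your approach is more self-contained and does not appeal to the general convex-analytic fact.
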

\begin{proof}
First of all, we note that $\pi^+$ is lower semicontinuous as, by virtue of Lemma~\ref{lem: closedness C}, $\cC$ is closed and the epigraph of $\pi^+$ coincides with $\{(X,m)\in\cX\times\R \, ; \ (X,-m)\in\cC\}$. Next, we claim that $\pi^+$ does not attain the value $-\infty$. To this end, note first that $\pi^+(0)>-\infty$ by Lemma \ref{lem: closedness C}. Since $\pi^+(0)\leq0$, it follows that $\pi^+$ is finite at $0$. It is readily seen that $\pi^+$ is convex. Hence, being lower semicontinuous, $\pi^+$ can never attain the value $-\infty$ on the space $\cX$.  
To show the desired attainability, take a payoff $X\in\cX$ such that $\pi^+(X)<\infty$. Since $\pi^+(X)$ is finite, it follows from the closedness of $\cC$ established in Lemma \ref{lem: closedness C} that the infimum in Lemma~\ref{lem: superreplication C} is attained. By definition of $\cC$, this implies that $\pi^+(X)=\pi(Z)$ for a suitable $Z\in\cM$ such that $Z-X\in\cA$.
\end{proof}


The next theorem provides a characterization of market-consistent prices under the assumption that the market does not admit strong scalable good deals. In this case, we show that for a payoff outside $\cM$ the superreplication price is never market consistent and, hence, the set of market-consistent prices is an open interval. For a payoff in $\cM$ the superreplication price may or may not be market consistent, so that the corresponding set of market-consistent prices may or may not be a closed interval.

\begin{proposition}[{\bf Direct characterization of market-consistent prices}]
\label{theo: characterization mcp superreplication}
If there exists no strong scalable good deal, then for every $X\in\cX$ we have $\MCP(X)\neq\emptyset$ and the following statements hold:
\begin{enumerate}
\item[(i)] If $X\in\cM$, then $\pi^+(X)\leq\pi(X)$ and both $\pi^+(X)\notin\MCP(X)$ and $\pi^+(X)\in\MCP(X)$ can hold.
\item[(ii)] If $X\in\cM$ and $\pi^+(X)\notin\MCP(X)$, then both $\pi^+(X)=\pi(X)$ and $\pi^+(X)<\pi(X)$ can hold.
\item[(iii)] If $X\in\cM$ and $\pi^+(X)\in\MCP(X)$, then $\pi^+(X)=\pi(X)$.
\item[(iv)] If $X\notin\cM$, then $\pi^+(X)\notin\MCP(X)$.
\end{enumerate}
The alternatives in (i) and (ii) can hold even if there exists no good deal.
\end{proposition}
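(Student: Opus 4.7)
The plan is to separate the nonemptiness assertion from the four structural items, handle them by assembling Propositions~\ref{prop: interval MCP}, \ref{prop: characterization mcp superreplication}, and \ref{prop: direct FTAP}, and then exhibit two-state examples to witness the sharpness alternatives in (i) and (ii).

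I would first observe that the proof of Proposition~\ref{prop: direct FTAP} actually shows $\pi^+$ is proper on $\cX$, so $\pi^+(X)>-\infty$ for every $X\in\cX$; combined with Proposition~\ref{prop: interval MCP}, which yields $(-\infty,\pi^+(X))\subset\MCP(X)$, this gives nonemptiness. The inequality $\pi^+(X)\leq\pi(X)$ in (i) is immediate by taking $Z=X$ as a feasible choice in the definition of $\pi^+$. For (iv), the case $\pi^+(X)=\infty$ is trivial since $\MCP(X)\subset\R$; if instead $\pi^+(X)\in\R$, Proposition~\ref{prop: direct FTAP} produces $Z\in\cM$ with $Z-X\in\cA$ and $\pi(Z)=\pi^+(X)$, and the hypothesis $X\notin\cM$ forces $Z\neq X$, so $Z-X\in\cA\setminus\{0\}$ and condition (1) in the definition of a market-consistent price fails at $p=\pi^+(X)$. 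For (iii), given $X\in\cM$ and $\pi^+(X)\in\MCP(X)\subset\R$, Proposition~\ref{prop: characterization mcp superreplication} gives $(\cA+X)\cap\{Z\in\cM \,; \ \pi(Z)=\pi^+(X)\}\subset\{X\}$, while Proposition~\ref{prop: direct FTAP} shows this intersection is nonempty; thus $X$ is its unique element and $\pi(X)=\pi^+(X)$.

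The sharpness alternatives in (i) and (ii), together with the stipulation that they persist in the absence of good deals, require concrete two-state constructions. The alternative $\pi^+(X)\in\MCP(X)$ in (i) is witnessed by any frictionless no-arbitrage linear model with $\cA=L^0_+$ and any $X\in\cM$: linearity and absence of arbitrage make $Z=X$ the unique attainer in the superreplication problem, so Proposition~\ref{prop: characterization mcp superreplication} yields $\pi^+(X)\in\MCP(X)$. The remaining alternatives, which demand $X\in\cM$, $\pi^+(X)\notin\MCP(X)$, and simultaneously no good deal, I would achieve by introducing either nonproportional transaction costs or a nonconic acceptance set of utility type as in Example~\ref{ex: acceptance sets}, tuned so that (a) every nonzero $U\in\cA\cap\cM$ satisfies $\pi(U)>0$, and (b) the superreplication problem for some specific $X\in\cM$ admits a second minimizer $Z\neq X$; then (a) rules out good deals while (b) together with Proposition~\ref{prop: characterization mcp superreplication} delivers $\pi^+(X)\notin\MCP(X)$, with the subcases $\pi^+(X)=\pi(X)$ and $\pi^+(X)<\pi(X)$ distinguished by whether or not $X$ itself is a minimizer.

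The main obstacle I anticipate lies in this last step: in a frictionless linear model the no-good-deal condition automatically makes the superreplication attainer unique and equal to $X$ whenever $X\in\cM$, forcing $\pi^+(X)\in\MCP(X)$; breaking this entanglement requires engineering genuine frictions or a nonconic acceptance set and then balancing the choice of $\pi$, $\cA$, and $\cM$ in a two-state model against the no-good-deal requirement. In contrast, the logical content of the four structural items is a direct assembly of the previously established propositions and presents no real difficulty.
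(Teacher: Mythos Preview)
Your proof of the structural content (nonemptiness, the inequality in (i), and items (iii) and (iv)) is correct and essentially identical to the paper's: both rely on Proposition~\ref{prop: direct FTAP} to get finiteness of $\pi^+$ and attainment of the infimum, and then invoke Proposition~\ref{prop: characterization mcp superreplication} to read off the consequences.

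For the sharpness examples your plan is sound, but you are making the constructions harder than necessary. The paper's Example~\ref{ex: superreplication under replicability} achieves all the alternatives in (i) and (ii), without good deals, using a \emph{conic} acceptance set $\cA=\{(x,y)\in\R^2 \,; \ y\geq\max\{-x,0\}\}$ together with \emph{sublinear} (proportional-type) pricing rules of the form $\pi(x,y)=\max\{ax+by,cx+dy\}$. So neither nonproportional transaction costs nor a nonconic acceptance set is required; ordinary bid-ask-spread frictions already break the uniqueness of the superreplication attainer while remaining compatible with the absence of good deals. Your diagnosis that a frictionless linear model cannot work is exactly right, but the step up to sublinear $\pi$ is already enough.
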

\begin{proof}
It follows from Proposition~\ref{prop: direct FTAP} that for every $X\in\cX$ we must have $\pi^+(X)>-\infty$, showing that $\MCP(X)\neq\emptyset$. Now, take $X\in\cM$. Since $X-X=0\in\cA$, we easily infer from the definition of superreplication price that $\pi^+(X)\leq\pi(X)$. It is shown in Example~\ref{ex: superreplication under replicability} that all the situations in (i) and (ii) may hold (even if there exist no good deals). To establish (iii) and (iv), take an arbitrary $X\in\cX$ and assume that $\pi^+(X)\in\MCP(X)$. As Proposition~\ref{prop: direct FTAP} implies that $(\cA+X)\cap\{Z\in\cM \,; \ \pi(Z)=\pi^+(X)\}$ is not empty, it follows from Proposition~\ref{prop: characterization mcp superreplication} that $X$ must belong to $\cM$ and that the infimum in the definition of superreplication price must be attained by $X$ alone, establishing the desired implications.
\end{proof}

\begin{example}
\label{ex: superreplication under replicability}
Let $\Omega=\{\omega_1,\omega_2\}$ and assume that $\cF$ is the power set of $\Omega$ and that $\probp$ is specified by $\probp(\omega_1)=\probp(\omega_2)=\frac{1}{2}$. In this simple setting, we take $\cX=L^0$ and identify every element of $\cX$ with a vector of $\R^2$. Set $\cS=\R^2$ and consider the acceptance set defined by
\[
\cA = \{(x,y)\in\R^2 \,; \ y\geq\max\{-x,0\}\}.
\]

\smallskip

(i) Set $\pi(x,y)=\max\{2x+y,x+2y\}$ for every $(x,y)\in\R^2$ and $\cM=\R^2$. It is immediate to verify that no good deal exists. Set $X=(-2,1)\in\cM$ and observe that $\pi^+(X)=0$ and
\[
(\cA+X)\cap\{Z\in\cM \,; \ \pi(Z)=0\}=\{X\}.
\]
It follows from Proposition \ref{prop: characterization mcp superreplication} that $\pi^+(X)\in MCP(X)$. Next, take $Y=(1,-2)\in\cM$. In this case, an explicit calculation shows that
\[
\pi^+(Y) = \inf_{x\in\R}\max\{2x-2+\max\{1-x,0\},x-4+2\max\{1-x,0\}\} = -\frac{3}{2}.
\]
Moreover, setting $W=(-\frac{1}{2},-\frac{1}{2})\in\cM$, we have
\[
(\cA+Y)\cap\bigg\{Z\in\cM \,; \ \pi(Z)=-\frac{3}{2}\bigg\}=\{W\}.
\]
It follows from Proposition \ref{prop: characterization mcp superreplication} that $\pi^+(Y)\notin\MCP(Y)$. Note also that $\pi(X)=\pi^+(X)$ and $\pi(Y)>\pi^+(Y)$.

\smallskip

(ii) Set $\pi(x,y)=\max\{x+y,x+2y\}$ for every $(x,y)\in\R^2$ and $\cM=\{(x,y)\in\R^2\,; \ x\leq1\}$. Observe that no good deal exists. Set $X=(1,-1)\in\cM$ and $Y=(2,-2)\notin \cM$. Then, $\pi^+(X)=\pi^+(Y)=0$ and
\[
(\cA+X)\cap\{Z\in\cM \,; \ \pi(Z)=0\}=(\cA+Y)\cap\{Z\in\cM \,; \ \pi(Z)=0\}=\{\lambda X \,; \ \lambda\in[0,1]\}.
\]
It follows from Proposition \ref{prop: characterization mcp superreplication} that $\pi^+(X)\notin \MCP(X)$ and $\pi^+(Y)\notin \MCP(Y)$. Note also that $\pi(X)=0$ so that $\pi(X)=\pi^+(X)$.

\smallskip

(iii) Set $\pi(x,y)=e^x-1$ for every $(x,y)\in\R^2$ and $\cM=\R\times\R_+$. Any $X\in\cX$ satisfies $\pi^+(X)=-1$ and
\[
(\cA+X)\cap\{Z\in\cM \,; \ \pi(Z)=-1\} = \emptyset.
\]
It follows from Proposition \ref{prop: characterization mcp superreplication} that $\pi^+(X)\in MCP(X)$ regardless of whether $X$ belongs to $\cM$ or not. Note that in this case there exist strong scalable good deals.
\end{example}

The previous proposition unveils a stark contrast between our general setting and the classical frictionless setting. In a frictionless market, the superreplication price of every replicable payoff is market consistent and coincides with the associated replication cost. In our case, for an attainable payoff, the superreplication price may be strictly lower than the associated replication cost. This is in line with the findings in Bensaid et al.\ (1992), where the focus was on a multi-period Cox-Ross-Rubinstein model with proportional transaction costs and no portfolio constraints and the acceptance set was taken to be the standard positive cone. As explained in that paper, the discrepancy between the superreplication price and the replication cost is a direct consequence of the fact that trading is costly and it may therefore ``pay to weigh the benefits of replication against those of potential savings on transaction costs''. What also follows from the previous result and was only implicitly highlighted in Bensaid et al.\ (1992) is that, contrary to the frictionless case, the superreplication price of an attainable payoff and, {\em a fortiori}, its replication cost may fail to be market consistent. This is another implication of transaction costs, which allow the infimum in the definition of superreplication price to be attained by multiple replicable payoffs even if the market admits no good deals. Motivated by this discussion, we provide sufficient conditions for the replication cost of a payoff in $\cM$ to be market consistent and, hence, to coincide with the corresponding superreplication price. More precisely, we show that this holds for every payoff with ``zero bid-ask spread'' provided the market admits no good deals.

\begin{proposition}
If there exists no good deal, then $\pi(X)=\pi^+(X)\in\MCP(X)$ for every $X\in\cM\cap(-\cM)$ such that $\pi(-X)=-\pi(X)$.
\end{proposition}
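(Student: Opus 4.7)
The plan is to prove both claims (namely $\pi(X)=\pi^+(X)$ and $\pi(X)\in\MCP(X)$) simultaneously through a single convexity argument that exploits the symmetry $\pi(-X)=-\pi(X)$ together with the absence of good deals. Throughout, I take $X\in\cM\cap(-\cM)$ with $\pi(-X)=-\pi(X)$. Observe first that $\pi(X)$ must be real-valued: if $\pi(X)=\infty$, the assumed symmetry forces $\pi(-X)=-\infty$, contradicting that $\pi$ is valued in $(-\infty,\infty]$. Second, since $X\in\cM$ and $X-X=0\in\cA$, the definition of the superreplication price yields $\pi^+(X)\leq\pi(X)$ at once.

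The heart of the proof is the following midpoint construction. Fix an arbitrary $Z\in\cM$ with $Z-X\in\cA\setminus\{0\}$ and set
\[
Y := \tfrac{1}{2}(Z-X) = \tfrac{1}{2}Z + \tfrac{1}{2}(-X).
\]
Since $Z,-X\in\cM$ and $\cM$ is convex, $Y\in\cM$. Since $Z-X\in\cA$ and $0\in\cA$ and $\cA$ is convex, $Y\in\cA$. Clearly $Y\neq 0$ because $Z\neq X$. Now using convexity of $\pi$ together with the hypothesis $\pi(-X)=-\pi(X)$,
\[
\pi(Y) \;\leq\; \tfrac{1}{2}\pi(Z)+\tfrac{1}{2}\pi(-X) \;=\; \tfrac{1}{2}\bigl(\pi(Z)-\pi(X)\bigr).
\]
If $\pi(Z)\leq\pi(X)$, then $\pi(Y)\leq 0$, and $Y$ would be a nonzero element of $\cA\cap\cM$ with nonpositive price, i.e.\ a good deal. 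This contradicts the standing assumption. Therefore $\pi(Z)>\pi(X)$.

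Two conclusions follow immediately. On the one hand, the inequality $\pi(Z)>\pi(X)$ for all eligible $Z$ gives $\pi^+(X)\geq\pi(X)$, and combined with the reverse inequality above we get $\pi^+(X)=\pi(X)$. On the other hand, the very same inequality verifies condition~(1) in the definition of a market-consistent price for the candidate $p=\pi(X)$, while condition~(2) is trivial since $X\in\cM$ and $p=\pi(X)\leq\pi(X)$. Hence $\pi(X)\in\MCP(X)$, which together with the equality above yields $\pi^+(X)=\pi(X)\in\MCP(X)$.

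I do not foresee any real obstacle: the only delicate point is ensuring that the midpoint $Y$ is genuinely a good deal candidate, which reduces to checking $Y\neq 0$, $Y\in\cA$, and $Y\in\cM$; each is immediate from $Z\neq X$, the convexity of $\cA$ (with $0\in\cA$), and the convexity of $\cM$, respectively. The symmetry assumption $\pi(-X)=-\pi(X)$ is used precisely to convert the convexity estimate on $\pi(Y)$ into the clean bound $\tfrac{1}{2}(\pi(Z)-\pi(X))$, and without it the argument would not close.
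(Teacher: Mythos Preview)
Your proof is correct and follows essentially the same approach as the paper: both form the midpoint $\tfrac{1}{2}(Z-X)$, verify it lies in $\cA\cap\cM$ by convexity, and use the estimate $\pi(\tfrac{1}{2}(Z-X))\leq\tfrac{1}{2}\pi(Z)+\tfrac{1}{2}\pi(-X)=\tfrac{1}{2}(\pi(Z)-\pi(X))$ together with the absence of good deals to force $\pi(Z)>\pi(X)$. Your version is slightly more explicit in checking that $\pi(X)$ is finite and that the midpoint lies in $\cM$, while the paper shortcuts the deduction of $\pi^+(X)=\pi(X)$ by invoking $\pi^+(X)=\sup\MCP(X)$ once $\pi(X)\in\MCP(X)$ is established; these are cosmetic differences only.
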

\begin{proof}
Take an arbitrary $X\in\cM\cap(-\cM)$ such that $\pi(-X)=-\pi(X)$. Since $\pi^+(X)$ is the supremum of the set $\MCP(X)$ and $\pi^+(X)\leq\pi(X)$, it suffices to show that $\pi(X)\in\MCP(X)$. To this effect, take any $Z\in\cM$ satisfying $Z-X\in\cA\setminus\{0\}$. Note that $\frac{1}{2}Z-\frac{1}{2}X=\frac{1}{2}(Z-X)+\frac{1}{2}0\in\cA\cap\cM$. As a result, the absence of good deals implies that
\[
0 < \pi\left(\frac{1}{2}Z-\frac{1}{2}X\right) \leq \frac{1}{2}\pi(Z)+\frac{1}{2}\pi(-X) = \frac{1}{2}\pi(Z)-\frac{1}{2}\pi(X).
\]
This yields $\pi(X)<\pi(Z)$ and proves that $\pi(X)$ is a market-consistent price for $X$.
\end{proof}

\subsection{Consistent price deflators}

In this subsection we start our journey towards a dual characterization of market-consistent prices with acceptable risk. As already mentioned, a key step is to establish the appropriate extension of the Fundamental Theorem of Asset Pricing. Both results will be expressed in terms of suitable dual elements, called consistent price deflators. Here, consistency refers to the acceptance set. We mainly distinguish between two types of price deflators, namely consistent and strictly consistent ones. These notions are encountered in the literature under special assumptions on the market model and/or on the acceptance set. In a frictionless setting, a consistent price deflator corresponds to a representative state pricing function in Carr et al.\ (2001) and to a Riesz density of a no-good-deal pricing functional in \v{C}ern\'{y} and Hodges (2002). In a market with proportional frictions, it corresponds to a Riesz density of an underlying frictionless pricing rule in Jouini and Kallal (1995), to a consistent price system in Jaschke and K\"{u}chler (2001), to a consistent pricing kernel in Staum (2004), and is related to a risk-neutral measure in Cherny (2008). In a market with nonproportional frictions, it corresponds to a marginal price deflator in Pennanen (2011a). Strictly consistent price deflators have been considered in Jouini and Kallal (1995), \v{C}ern\'{y} and Hodges (2002), and Pennanen (2011a). Note that the acceptance set in Jouini and Kallal (1995) and Pennanen (2011a) is the standard positive cone. The formal definition of a price deflator is as follows.

\begin{definition}
\label{def: pricing density}
A random variable $D\in L^0$ is a {\em price deflator} if the following conditions hold:
\begin{enumerate}
  \item[(1)] $DX\in L^1$ for every $X\in\cS$.
  \item[(2)] $\sup\{\E[DX]-\pi(X) \,; \ X\in\cM\}<\infty$.
\end{enumerate}
In this case, we say that $D$ is:
\begin{enumerate}
  \item[(3)] {\em weakly consistent} if $\inf\{\E[DX] \,; \ X\in\cA\cap\cX\}>-\infty$.
  \item[(4)] {\em consistent} if $\E[DX]\geq0$ for every $X\in\cA\cap\cX$.
  \item[(5)] {\em strictly consistent} if $\E[DX]>0$ for every nonzero $X\in\cA\cap\cX$.
\end{enumerate}
\end{definition}


It should be clear that a price deflator is a natural extension of a classical price deflator to our market with frictions. To illustrate this, consider a price deflator $D\in L^0$ and define $\cL=\{X\in L^0 \,; \ DX\in L^1\}$. Note that every replicable payoff belongs to the vector space $\cL$. Moreover, define $\psi(X)=\E[DX]$ for $X\in\cL$. By definition, there exists a constant $\gamma_{\pi,\cM}\geq0$ such that for every attainable payoff $X\in\cM\cap(-\cM)$
\[
-\pi(-X)-\gamma_{\pi,\cM}\leq\psi(X)\leq\pi(X)+\gamma_{\pi,\cM}.
\]
The functional $\psi$ can therefore be viewed as the pricing rule of an ``artificial'' frictionless market where every payoff in $\cL$ is ``replicable'' and the attainable payoffs are ``priced'', up to a suitable enlargement, consistently with their market bid-ask spread. No enlargement is needed when $\psi$ is already dominated from above by $\pi$. This happens, for instance, if both $\pi$ and $\cM$ are conic in the first place. In particular, this holds if $\pi$ is linear and $\cM$ coincides with the entire $\cS$, in which case $\psi$ is a linear extension of the pricing rule beyond the space of replicable payoffs. This shows that, in a frictionless setting, the notion of a price deflator boils down to the classical notion from arbitrage pricing theory. Consistency with the acceptance set is, of course, specific to good deal pricing theory. The interpretation is as follows. If $D$ is weakly consistent, then we find a constant $\gamma_\cA\leq0$ such that for every acceptable payoff $X\in\cA\cap\cX\cap\cL$
\[
\psi(X) \geq \gamma_\cA.
\]
This means that prices of acceptable payoffs in the ``artificial'' frictionless market with pricing rule $\psi$ cannot be arbitrarily negative. A simple situation where such ``artificial'' prices are nonnegative is when $\cA$ is a cone in the first place. In this case, weak consistency is equivalent to consistency. In particular, if $\cA$ is taken to be the standard positive cone, then (strict) consistency boils down to the (strict) positivity of $\psi$. Hence, consistency with the acceptance set requires that the pricing rule in the ``artificial'' frictionless market assigns prices to acceptable payoffs that are bounded from below, positive, or strictly positive depending on the type of consistency. If the acceptance set is the standard positive cone, then (strict) consistency boils down to (strict) positivity. This shows that a (strictly) consistent price deflator is a direct extension of a (strictly) positive price deflator, or equivalently of an (equivalent) martingale measure, in the classical theory. We summarize this discussion in the following proposition, which highlights the role of conicity in simplifying the formulation of a consistent price deflator.

\begin{proposition}
\label{prop: pricing density}
Let $D\in L^0$ be a price deflator. Then, the following statements hold:
\begin{enumerate}
  \item[(i)] $\E[DX]\leq\pi(X)$ for every $X\in\cM^\infty$ such that $\pi$ is conic on $\cone(X)$.
  \item[(ii)] $\E[DX]=\pi(X)$ for every $X\in\cM^\infty\cap(-\cM^\infty)$ such that $\pi$ is linear on $\Span(X)$.
\end{enumerate}
If $D$ is weakly consistent, then the following statement holds:
\begin{enumerate}
  \item[(iii)] $\E[DX]\geq0$ for every $X\in\cA^\infty\cap\cX$.
\end{enumerate}
\end{proposition}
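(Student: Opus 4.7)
The plan is to exploit the recession-cone structure of $\cM$ and $\cA$ together with the uniform bounds encoded in the definition of a price deflator. The key observation is that whenever $X$ is a recession direction of a convex set containing $0$, all positive multiples $\lambda X$ lie in that set, so any inequality satisfied uniformly over the set translates into a constraint that must hold in the limit $\lambda\to\infty$.

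For statement (i), I would start by invoking the definition of a price deflator to fix a constant $\gamma\in\R$ with $\E[DY]-\pi(Y)\leq\gamma$ for every $Y\in\cM$. Since $0\in\cM$ and $X\in\cM^\infty$, the recession-cone property gives $\lambda X\in\cM$ for every $\lambda\geq 0$, so
\[
\lambda\E[DX]-\pi(\lambda X)\leq\gamma.
\]
Conicity of $\pi$ on $\cone(X)$ yields $\pi(\lambda X)=\lambda\pi(X)$, hence $\lambda(\E[DX]-\pi(X))\leq\gamma$ for all $\lambda\geq 0$, which forces $\E[DX]\leq\pi(X)$ by letting $\lambda\to\infty$. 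Note $DX\in L^1$ by (1) in Definition~\ref{def: pricing density} since $X\in\cM^\infty\subset\cS$.

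For statement (ii), the linearity of $\pi$ on $\Span(X)$ means $\pi$ is conic on both $\cone(X)$ and $\cone(-X)$, and the symmetry $X\in\cM^\infty\cap(-\cM^\infty)$ lets me apply (i) to each of $X$ and $-X$. This gives $\E[DX]\leq\pi(X)$ and $-\E[DX]=\E[D(-X)]\leq\pi(-X)=-\pi(X)$, which together yield equality.

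Statement (iii) follows the same template but uses weak consistency in place of the price-deflator bound. Fix $\gamma'\in\R$ with $\E[DY]\geq\gamma'$ for every $Y\in\cA\cap\cX$. For $X\in\cA^\infty\cap\cX$, the recession-cone property (together with $0\in\cA$) gives $\lambda X\in\cA$ for every $\lambda\geq 0$, and $\lambda X\in\cX$ because $\cX$ is a linear subspace; hence $\lambda X\in\cA\cap\cX$ and $\E[\lambda DX]\geq\gamma'$. The main subtlety here is that $X\notin\cS$ in general, so $DX$ need not be integrable, but the paper's convention $\E[DX]=\E[(DX)^+]-\E[(DX)^-]$ with $\infty-\infty=-\infty$ makes $\E[DX]$ a well-defined element of $[-\infty,\infty]$, and positive homogeneity $\E[\lambda DX]=\lambda\E[DX]$ for $\lambda\geq 0$ is preserved. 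If $\E[DX]<0$ (including $-\infty$), then $\lambda\E[DX]\to-\infty$ as $\lambda\to\infty$, contradicting $\lambda\E[DX]\geq\gamma'$; therefore $\E[DX]\geq 0$. No serious obstacle is expected; the only care needed is handling the extended-real convention in (iii).
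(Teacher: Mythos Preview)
Your proof is correct and follows essentially the same approach as the paper's: both arguments use that $X\in\cM^\infty$ (resp.\ $\cA^\infty$) puts all positive multiples $\lambda X$ into $\cM$ (resp.\ $\cA$), then exploit the uniform bound in the definition of a (weakly consistent) price deflator to force $\E[DX]-\pi(X)\leq 0$ (resp.\ $\E[DX]\geq 0$) by letting the scale tend to infinity. The only cosmetic differences are that the paper takes $\lambda=n\in\N$ rather than continuous $\lambda$, and you add the helpful observations that $DX\in L^1$ in~(i) and that the extended-real convention handles~(iii); otherwise the proofs coincide.
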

\begin{proof}
Take an arbitrary $X\in\cX$. Since $\Span(X)=\cone(X)\cup\cone(-X)$, it is clear that (i) implies (ii). To prove (i), assume that $X\in\cM^\infty$ and $\pi$ is conic on $\cone(X)$. Then, by definition of a pricing density,
\[
\sup_{n\in\N}\{n(\E[DX]-\pi(X))\} = \sup_{n\in\N}\{\E[D(nX)]-\pi(nX)\} < \infty.
\]
This is only possible if $\E[DX]-\pi(X)\leq0$, showing the desired claim. Finally, to establish (iii), assume that $D$ is weakly consistent and $X\in\cA^\infty$. Then, by definition of weak consistency,
\[
\inf_{n\in\N}\{n\E[DX]\} = \inf_{n\in\N}\E[D(nX)] > -\infty.
\]
This is only possible if $\E[DX]\geq0$, proving the desired claim and concluding the proof.
\end{proof}

Later, we will show that, contrary to the focus on consistent price deflators of the bulk of the literature, strictly consistent price deflators are the right dual objects to use in order to obtain a version of the Fundamental Theorem of Asset Pricing in a good deal pricing setting. For the time being, we show that the existence of strictly consistent price deflators always implies the absence of scalable good deals. However, contrary to the classical frictionless setting, it does not generally imply the absence of good deals unless the price deflators satisfy suitable extra assumptions.

\begin{proposition}
\label{prop: density implies no good deal}
If there exists a strictly consistent price deflator $D\in L^0$, then there exists no scalable good deal. If, additionally, $\E[DX]\leq\pi(X)$ for every $X\in\cM$, then there exists no good deal either.
\end{proposition}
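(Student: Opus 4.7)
The plan is to argue by contradiction in both claims, exploiting the defining bound of a price deflator together with strict consistency. The central observation is that strict consistency controls the dual value of every nonzero acceptable payoff from below by a strictly positive number, while the deal structure controls it from above; pushing the scaling will then produce the contradiction in the first claim, whereas in the second claim the contradiction is immediate from the additional inequality.

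For the first claim, I would suppose a scalable good deal $X$ exists, so that $X$ is a nonzero element of $\cA^\infty\cap\cM^\infty$ with $\pi^\infty(X)\leq 0$. The first step is to place $X$ inside $\cA\cap\cX$ so that strict consistency applies: this uses the paper's identification of $\cA^\infty$ as the largest convex cone contained in $\cA$ (hence $X\in\cA$) together with $\cM^\infty\subset\cS\subset\cX$ from Assumption~\ref{assumption direct}. Strict consistency then forces $\E[DX]>0$. The second step is to exploit the scalable structure: conicity of $\cM^\infty\subset\cM$ gives $nX\in\cM$ for every $n\in\N$, and the identity $\pi^\infty(X)=\sup_{\lambda>0}\pi(\lambda X)/\lambda$ (valid for proper convex lower semicontinuous $\pi$ with $\pi(0)=0$) combined with $\pi^\infty(X)\leq 0$ yields $\pi(nX)\leq 0$ for every $n$. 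Plugging $Y=nX$ into the price-deflator bound
\[
n\E[DX]-\pi(nX)\leq \sup_{Y\in\cM}\{\E[DY]-\pi(Y)\}=:c<\infty
\]
gives $n\E[DX]\leq c+\pi(nX)\leq c$ for all $n$, contradicting $\E[DX]>0$ as $n\to\infty$.

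For the second claim, I would suppose a good deal $X$ exists, i.e.\ a nonzero $X\in\cA\cap\cM$ with $\pi(X)\leq 0$. Since $\cM\subset\cS\subset\cX$, strict consistency again yields $\E[DX]>0$, while the additional hypothesis $\E[DY]\leq\pi(Y)$ for every $Y\in\cM$ applied at $Y=X$ gives $\E[DX]\leq\pi(X)\leq 0$, a direct contradiction. No step is technically delicate; the only subtlety worth flagging is the set-theoretic inclusion $\cA^\infty\cap\cM^\infty\subset\cA\cap\cX$, which is precisely what unlocks strict consistency for the candidate scalable good deal, and without which one would be stuck with only the weak-consistency conclusion $\E[DX]\geq 0$ from Proposition~\ref{prop: pricing density}(iii).
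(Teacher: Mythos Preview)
Your proof is correct and follows essentially the same line as the paper's: both arguments feed $nX$ into the price-deflator bound $\sup_{Y\in\cM}\{\E[DY]-\pi(Y)\}<\infty$ and combine it with strict consistency $\E[DX]>0$ to force a contradiction with scalability. The only cosmetic difference is that the paper phrases the first part contrapositively (showing $\pi^\infty(X)\geq\E[DX]>0$ for any nonzero $X\in\cA\cap\cM^\infty$ via $\pi^\infty\geq\pi$), whereas you assume $\pi^\infty(X)\leq 0$ and deduce $\pi(nX)\leq 0$ before reaching the same bound; the second part is identical.
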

\begin{proof}
Take a nonzero payoff $X\in\cA\cap\cM^\infty$. To show that no scalable good deal exists, we have to show that $\pi^\infty(X)>0$. To this effect, note that, by definition of a price deflator,
\[
\sup_{n\in\N}\{n(\E[DX]-\pi^\infty(X))\} =
\sup_{n\in\N}\{\E[D(nX)]-\pi^\infty(nX)\} \leq \sup_{n\in\N}\{\E[D(nX)]-\pi(nX)\} < \infty,
\]
where we used that $\pi^\infty$ dominates $\pi$. This is only possible if $\E[DX]-\pi^\infty(X)\leq0$. As a result, we obtain $\pi^\infty(X)\geq\E[DX]>0$. Next, assume that $\E[DX]\leq\pi(X)$ for every payoff $X\in\cM$ and take a nonzero payoff $X\in\cA\cap\cM$. Then, $\pi(X)\geq\E[DX]>0$, showing that no good deal exists.
\end{proof}

\begin{example}
\label{ex: scpricingdens with acc deals}
Let $\Omega=\{\omega_1,\omega_2\}$ and assume that $\cF$ is the power set of $\Omega$ and that $\probp$ is specified by $\probp(\omega_1)=\probp(\omega_2)=\frac{1}{2}$. In this simple setting, we take $\cX=L^0$ and identify every element of $\cX$ with a vector of $\R^2$. Set $\cS=\R^2$ and consider the acceptance set defined by
\[
\cA = \{(x,y)\in\R^2 \,; \ y\geq\max\{-x,0\}\}.
\]
We show that the existence of strictly consistent price deflators is not sufficient to rule out good deals. In view of the previous proposition, this can occur only if either the pricing rule or the set of attainable payoffs fails to be conic and the supremum in Definition~\ref{def: pricing density} is strictly positive. We provide an example in both cases.

\smallskip

(i) Set $\pi(x,y)=x+y^2$ for every $(x,y)\in\R^2$ and $\cM=\R^2$. Note that $\cM$ is conic while $\pi$ is not. It is clear that $D=(2,4)$ is a strictly consistent price deflator. In particular, we have
\[
\sup_{X\in\cM}\{\E[DX]-\pi(X)\} = \sup_{y\in\R}\{2y-y^2\} = 1.
\]
However, $X=(-1,1)\in\cA\cap\cM$ satisfies $\pi(X)=0$ and is thus a good deal.

\smallskip

(ii) Set $\pi(x,y)=x+y$ for every $(x,y)\in\R^2$ and $\cM=\{(x,y)\in\R^2 \,; \ x\geq-1, \ 0\leq y\leq 1\}$. Note that $\pi$ is conic while $\cM$ is not. It is clear that $D=(2,4)$ is a strictly consistent price deflator. In particular,
\[
\sup_{X\in\cM}\{\E[DX]-\pi(X)\} = \sup_{0\leq y\leq1}y = 1.
\]
However, $X=(-1,1)\in\cA\cap\cM$ satisfies $\pi(X)=0$ and is thus a good deal.
\end{example}


\subsection{The reference set of price deflators}

We turn to the more challenging problem of investigating if and under which assumptions the converse implication holds, i.e., the absence of (scalable) good deals implies the existence of strictly consistent price deflators. To this effect, we rely on duality theory and we therefore have to choose a suitable topology on the reference payoff space. As remarked below, our framework is flexible enough to accommodate the standard model spaces. We refer to the appendix for the necessary details on weak topologies. 

\begin{assumption}
\label{standing assumption}
We denote by $\cX'$ a linear subspace of $L^0$. We assume that $\cX$ and $\cX'$ contain $L^\infty$ and satisfy $XY\in L^1$ for all $X\in\cX$ and $Y\in\cX'$. These spaces are in separating duality through the bilinear form $(X,Y) \mapsto \E[XY]$. The topology on $\cX$ fixed in Assumption~\ref{assumption direct} is taken to be $\sigma(\cX,\cX')$. Similarly, we equip $\cX'$ with the topology $\sigma(\cX',\cX)$. In addition, we assume that $\cX'$ is the norm dual of a normed space $\cY\subset L^0$ (which need not coincide with $\cX$) and that $\sigma(\cX',\cX)$ is weaker than the associated weak-star topology $\sigma(\cX',\cY)$.
\end{assumption}

\begin{remark}
\label{rem: setting}
(i) Under our assumption the payoff space $\cX$ could be any Lebesgue space or, more generally, any Orlicz space and the dual space $\cX'$ could be $L^\infty$, in which case $\cY$ is taken to be $L^1$.

\smallskip

(ii) Note that, under our assumption, both topologies $\sigma(\cX,\cX')$ and $\sigma(\cX',\cX)$ are Hausdorff and locally convex. Hence, the standard machinery of convex duality applies to them.

\smallskip

(iii) Under our standing assumptions the set $\cA\cap\cX$ has to be $\sigma(\cX,\cX')$-closed. For the common payoff spaces and acceptance sets, this is fulfilled even in the (generally restrictive) situation where $\cX'$ is a small space. For concreteness, let $(\Omega,\cF,\probp)$ be nonatomic and let $\cX$ be an Orlicz space. Moreover, let $\cX'=L^\infty$. The set $\cA\cap\cX$ is closed with respect to $\sigma(\cX,\cX')$ in any of the following cases:
\begin{enumerate}
  \item[(a)] $\cA\cap L^1$ is closed with respect to the norm topology of $L^1$.
  \item[(b)] $\cA$ is either law invariant under $\probp$ or surplus invariant, and for all $(X_n)\subset\cA\cap\cX$ and $X\in\cX$ such that $X_n\to X$ $\probp$-almost surely and $\sup_{n\in\N}|X_n|\in\cX$ it follows that $X\in \cA$.
\end{enumerate}
The condition in point (a) clearly implies $\sigma(\cX,\cX')$-closedness of $\cA\cap\cX$. In point (b), law invariance stipulates that acceptability is only driven by the probability distribution of a payoff while surplus invariance, introduced in Koch-Medina et al.\ (2015) and studied more thoroughly in Koch-Medina et al.\ (2017), stipulates that acceptability is only driven by the downside profile of a payoff. The closedness under dominated $\probp$-almost sure convergence is sometimes referred to as Fatou closedness. In these cases the desired $\sigma(\cX,\cX')$-closedness of $\cA\cap\cX$ follows from the results in Svindland (2010) and Gao et al.\ (2018) under law invariance and from those in Gao and Munari (2020) under surplus invariance.
\end{remark}


We define the sets of weakly and strictly consistent price deflators belonging to $\cX'$ as follows:
\[
\cD := \{D\in\cX' \,; \ \mbox{$D$ is a weakly consistent price deflator}\},
\]
\[
\cD_{str} := \{D\in\cX' \,; \ \mbox{$D$ is a strictly consistent price deflator}\}.
\]
It is also convenient to introduce the maps $\gamma_{\pi,\cM}:\cX'\to(-\infty,\infty]$ and $\gamma_\cA:\cX'\to[-\infty,\infty)$ defined by
\[
\gamma_{\pi,\cM}(Y) := \sup_{X\in\cM}\{\E[XY]-\pi(X)\},
\]
\[
\gamma_\cA(Y) := \inf_{X\in\cA\cap\cX}\E[XY].
\]
Note that $\gamma_{\pi,\cM}$ coincides with the conjugate function of the restriction to $\cM$ of the pricing rule $\pi$ whereas $\gamma_\cA$ is, up to a sign, the support function of the set $-(\cA\cap\cX)$. These maps appear in the definition of a weakly consistent price deflator. A key role in our analysis is again played by the set $\cC$ introduced in Section~\ref{sect: superreplication C}. In particular, weakly consistent price deflators appear naturally in the dual representation of $\cC$. We denote by $\cl(\cC)$ the closure of $\cC$ (with respect to the natural product topology on $\cX\times\R$) and refer to the appendix for the notation on support functions and barrier cones.

\begin{lemma}
\label{lem: elementary properties C}
The sets $\cC$ and $\cD$ are convex and the following statements hold:
\begin{enumerate}
  \item[(i)] $-((\cA\cap\cX)\times\R_+)\subset\cC$ and $\barr(\cC)\subset\cX'_+\times\R_+$.
  \item[(ii)] $\sigma_\cC(Y,1)=\gamma_{\pi,\cM}(Y)-\gamma_\cA(Y)$ for every $Y\in\cX'$.
  \item[(iii)] $\cD=\{Y\in\cX'_+ \,; \ \sigma_\cC(Y,1)<\infty\} = \{Y\in\cX'_+ \,; \ (Y,1)\in\barr(\cC)\}$.
  \item[(iv)] If $(0,n)\notin\cl(\cC)$ for some $n\in\N$, then we can represent $\cl(\cC)$ as
\[
\cl(\cC) = \bigcap_{Y\in\cD}\{(X,m)\in\cX\times\R \,; \ \E[XY]+m\leq\gamma_{\pi,\cM}(Y)-\gamma_\cA(Y)\}.
\]
\end{enumerate}
\end{lemma}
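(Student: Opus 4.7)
The plan is to dispatch the four assertions in the listed order, with the real work concentrated in (iv). Convexity of $\cC$ will follow by averaging witnesses: given $(X_i,m_i)\in\cC$ with witnesses $Z_i\in\cM$, the convex combination of the $Z_i$'s remains in $\cM$, still superreplicates the convex combination of the $X_i$'s modulo $\cA$, and has small enough price by convexity of $\pi$. Convexity of $\cD$ will be immediate from sublinearity of $\gamma_{\pi,\cM}$ and superlinearity of $\gamma_\cA$. For (i), I will take $Z=0\in\cM$ as a witness whenever $-X\in\cA\cap\cX$ and $m\leq 0$; then to see $\barr(\cC)\subset\cX'_+\times\R_+$ I will probe the inequality $\E[XY]+mt\leq\sigma_\cC(Y,t)$ with the points $(0,-n)\in\cC$ as $n\to\infty$, forcing $t\geq 0$, and with $(-\lambda W,0)\in\cC$ for $W\in L^\infty_+\subset\cA\cap\cX$ (using $L^0_+\subset\cA$ from the monotonicity of $\cA$) and $\lambda\to\infty$, forcing $\E[WY]\geq 0$ and hence $Y\in\cX'_+$.

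For (ii) I will reparametrize $\cC=\{(Z-U,m)\,:\,Z\in\cM,\ U\in\cA\cap\cX,\ m+\pi(Z)\leq 0\}$, using $\cS\subset\cX$ to ensure $U=Z-X$ lives in $\cX$, and then compute
\[
\sigma_\cC(Y,1)=\sup_{Z\in\cM,\,U\in\cA\cap\cX}\{\E[ZY]-\pi(Z)-\E[UY]\}=\gamma_{\pi,\cM}(Y)-\gamma_\cA(Y),
\]
noting that the slack in $m$ is saturated at $m=-\pi(Z)$ and that the two suprema decouple. Part (iii) is then immediate: $\sigma_\cC(Y,1)<\infty$ forces both $\gamma_{\pi,\cM}(Y)<\infty$ and $\gamma_\cA(Y)>-\infty$, which together with the positivity $Y\in\cX'_+$ from (i) is precisely the definition of $Y\in\cD$; the second equality just restates $\barr(\cC)=\{(Y,t)\,:\,\sigma_\cC(Y,t)<\infty\}$.

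The challenging piece is (iv). I will start from the bipolar representation in the locally convex pairing $(\cX\times\R,\cX'\times\R)$: the closed convex set $\cl(\cC)$ equals the intersection over $(Y,t)\in\barr(\cC)$ of the half-spaces $\{(X,m)\,:\,\E[XY]+mt\leq\sigma_\cC(Y,t)\}$. By (i) such $t$ is nonnegative, and when $t>0$ positive homogeneity of $\sigma_\cC$ lets me normalize to $(Y/t,1)$ with $Y/t\in\cD$, which is exactly the family appearing in (iv). The main obstacle is that the $t=0$ slice $\{(Y,0)\in\barr(\cC)\}$ could in principle contribute genuinely new constraints. This is where the hypothesis $(0,n)\notin\cl(\cC)$ enters: Hahn--Banach separation combined with (i) furnishes, after normalization, some $Y^{\ast}\in\cD$. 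Given any $(X_0,m_0)\notin\cl(\cC)$ only separable by a degenerate pair $(Y_0,0)$ with $\E[X_0 Y_0]>\sigma_\cC(Y_0,0)$, I will consider the perturbed functional $Y_\lambda:=(Y_0+\lambda Y^{\ast})/\lambda$ for small $\lambda>0$; positivity and subadditivity of $\sigma_\cC$ place $Y_\lambda$ in $\cD$, and the estimate
\[
\E[X_0 Y_\lambda]+m_0-\sigma_\cC(Y_\lambda,1)\ \geq\ \tfrac{1}{\lambda}\bigl(\E[X_0 Y_0]-\sigma_\cC(Y_0,0)\bigr)+\bigl(\E[X_0 Y^{\ast}]+m_0-\sigma_\cC(Y^{\ast},1)\bigr)
\]
diverges to $+\infty$ as $\lambda\to 0^+$, yielding the desired separating element of $\cD$ and closing the $\supseteq$ inclusion.
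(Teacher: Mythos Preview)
Your proposal is correct and follows essentially the same route as the paper. The paper handles (i)--(iii) by observing the decomposition $\cC=\{(Z,m)\in\cM\times\R\,;\,\pi(Z)\leq-m\}-(\cA\cap\cX)\times\R_+$, which is precisely your reparametrization, and for (iv) it too uses the bipolar representation together with the existence (via the hypothesis $(0,n)\notin\cl(\cC)$) of a barrier-cone element with strictly positive second coordinate; the paper is simply terser, writing only ``since $\sigma_\cC$ is sublinear and $\barr(\cC)$ is a convex cone'' where you spell out the perturbation $Y_\lambda=(Y_0+\lambda Y^\ast)/\lambda$ that makes the $r=0$ constraints redundant.
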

\begin{proof}
The convexity of $\cC$ and $\cD$ is clear. Points (i), (ii), and (iii) follow easily from rewriting $\cC$ as
\begin{equation*}
    \label{eq: C difference}
    \cC = \{(Z,m)\in\cM\times\R \, ; \ \pi(Z)\leq-m\}-(\cA\cap\cX)\times\R_+.
\end{equation*}
Note that no problems with nonfinite values arise as $0\in\cM$, $\pi(0)=0$, and $\cA$ contains the cone of positive random variables.
To show (iv), assume that $\cl(\cC)$ is strictly contained in $\cX\times\R$. The dual representation of closed convex sets recorded in Theorem 7.51 of Aliprantis and Border (2006) yields
\begin{equation}
\label{eq: repr C}
\cl(\cC) = \bigcap_{(Y,r)\in\cX'\times\R}\{(X,m)\in\cX\times\R \,; \ \E[XY]+mr\leq \sigma_\cC(Y,r)\}.
\end{equation}
Here, we have used that $\sigma_{\cl(\cC)}=\sigma_\cC$. We claim that $\barr(\cC)\cap(\cX'\times(0,\infty))\neq\emptyset$. To show this, take $n\in\N$ such that $(0,n) \notin \cl(\cC)$. Then, it follows from~\eqref{eq: repr C} that there must exist $(Y,r)\in\barr(\cC)$ satisfying $nr = \E[0\cdot Y]+nr > \sigma_\cC(Y,r) \geq 0$. This establishes the desired claim. Now, recall from point (i) that $\barr(\cC)\subset\cX'_+\times\R_+$. Since $\sigma_\cC$ is sublinear and $\barr(\cC)$ is a convex cone, it follows that
\[
\cl(\cC) = \bigcap_{Y\in\cX'_+}\{(X,m)\in\cX\times\R \,; \ \E[XY]+m\leq \sigma_\cC(Y,1)\}.
\]
The desired representation is now a direct consequence of point (ii).
\end{proof}


\subsection{Fundamental theorem of asset pricing}

The key tool to establish the Fundamental Theorem of Asset Pricing is the following convenient version of the classical results by Yan (1980) and Kreps (1981). We refer to Clark (1993), Jouini et al.\ (2005), Rokhlin (2005), Cassese (2007), Rokhlin (2009), and Gao and Xanthos (2017) for a variety of versions of the same principle. A simple inspection of our formulation shows that, taking $\cL'=\cD$, the theorem delivers existence of price deflators that assign a strictly positive ``price'' to every nonzero payoff in $\cL$. Depending on the choice of $\cL$ we obtain different types of price deflators. The choice $\cL=\cX_+$ yields strictly positive price deflators whereas the choice $\cL=\cA\cap\cX$ yields strictly consistent ones.

\begin{theorem}[{\bf Kreps-Yan}]
\label{theo: kreps yan in our setting}
Let $\cL\subset\cX$ and $\cL'\subset\cX'$ and assume that the following properties hold:
\begin{enumerate}
  \item[(i)] Completeness: For every sequence $(Y_n)\subset\cL'$ there exist a sequence $(\lambda_n)\subset(0,\infty)$ and $Y\in\cL'$ such that $\sum_{k=1}^n\lambda_kY_k\to Y$.
  \item[(ii)] Countable separation: There exists a sequence $(Y_n)\subset\cL'\cap(-\barr(\cone(\cL)))$ such that for every nonzero $X\in\cL$ we have $\E[XY_n]>0$ for some $n\in\N$.
\end{enumerate}
Then, there exists $Y\in\cL'$ such that $\E[XY]>0$ for every nonzero $X\in\cL$.
\end{theorem}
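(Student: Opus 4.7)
The plan is to build the desired functional $Y$ as a convergent infinite convex combination of the countable separating family $(Y_n)$ supplied by hypothesis (ii), using hypothesis (i) to guarantee that such a series converges inside $\cL'$. The basic intuition is the classical Kreps-Yan idea: each $Y_n$ only ``sees'' (strictly separates) some nonzero payoffs in $\cL$, but since the family is countable one can aggregate them into a single functional with strictly positive weights, and as long as each individual $Y_n$ is nonnegative on $\cL$, no cancellation can destroy the strict positivity contributed by any single $Y_n$.

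First I would unpack the condition $Y_n \in -\barr(\cone(\cL))$. By the definition of the barrier cone, this says that $\sup_{X\in\cone(\cL)}\E[-XY_n] < \infty$, i.e.\ $\inf_{X\in\cone(\cL)}\E[XY_n] > -\infty$. Since $\cone(\cL)$ is a convex cone, a finite infimum of a linear functional on it is necessarily zero, so $\E[XY_n] \geq 0$ for every $X\in\cone(\cL)$, hence in particular for every $X\in\cL$. Next, I would apply the completeness condition (i) directly to the separating sequence $(Y_n)$ to produce strictly positive weights $(\lambda_n)\subset(0,\infty)$ and a limit $Y\in\cL'$ such that the partial sums $S_n := \sum_{k=1}^{n}\lambda_k Y_k$ converge to $Y$ in the topology of $\cX'$.

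Then, for any fixed $X\in\cX$, the topology on $\cX'$ is (at least as strong as) $\sigma(\cX',\cX)$, so convergence of $S_n$ to $Y$ forces $\E[XS_n]\to\E[XY]$, i.e.\
\[
\E[XY] \;=\; \sum_{k=1}^{\infty}\lambda_k \,\E[XY_k].
\]
Finally, given a nonzero $X\in\cL$, every term in this series is nonnegative by the first step, and by the countable separation property there is some index $n_0$ with $\E[XY_{n_0}]>0$; since $\lambda_{n_0}>0$, we conclude $\E[XY]\geq \lambda_{n_0}\E[XY_{n_0}]>0$, which is the conclusion.

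I do not expect a serious obstacle here: the argument is essentially a bookkeeping exercise once the two hypotheses are interpreted correctly. The only subtle point is the identification of $-\barr(\cone(\cL))$ with the set of functionals that are nonnegative on $\cL$ (which depends on $\cL$ being contained in the cone it generates, and on the standard fact that a linear functional bounded below on a cone is nonnegative on it); apart from that, the proof is a direct verification.
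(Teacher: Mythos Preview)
Your proposal is correct and follows essentially the same approach as the paper's proof: take the separating sequence from (ii), observe that membership in $-\barr(\cone(\cL))$ forces $\E[XY_n]\geq 0$ on $\cL$, apply completeness (i) to obtain strictly positive weights and a limit $Y\in\cL'$, and then pass to the limit in the pairing to conclude strict positivity. Your write-up is slightly more detailed in justifying the passage to the limit via $\sigma(\cX',\cX)$-convergence and in spelling out why a linear functional bounded below on a cone must be nonnegative on it, but the argument is identical.
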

\begin{proof}
By the countable separation property, there exists a sequence $(Y_n)\subset\cL'\cap(-\barr(\cone(\cL)))$ such that for every nonzero $X\in\cL$ we have $\E[XY_n]>0$ for some $n\in\N$. In particular, note that $\E[XY_n]\geq0$ for all $X\in\cL$ and $n\in\N$ because $(Y_n)\subset-\barr(\cone(\cL))$. Moreover, by the completeness property, there exist a sequence $(\lambda_n)\subset(0,\infty)$ and $Y\in\cL'$ such that $\sum_{k=1}^n\lambda_kY_k\to Y$. It is immediate to see that $\E[XY]>0$ for every nonzero $X\in\cL$.
\end{proof}


\begin{remark}
The preceding theorem holds for every pair of vector spaces $\cX$ and $\cX'$ equipped with a bilinear mapping $\langle\cdot,\cdot\rangle:\cX\times\cX'\to\R$. In this respect, our statement is a minor extension of the abstract version of the result obtained by Jouini et al.\ (2005). In that paper, the set $\cL$ was assumed to be a pointed convex cone satisfying $\cL-\cL=\cX$ and the dual set $\cL'$ was taken to coincide with $-\barr(\cL) = \{Y\in\cX' \,; \ \langle X,Y\rangle\geq0, \ \forall X\in\cL\}$. Incidentally, note that pointedness is automatically implied by the countable separation property (regardless of the special choice of $\cL$). 
\end{remark}

The ``conification'' in the Kreps-Yan theorem leads us to work with the modified acceptance set
\[
\cK(\cA) := \cl(\cone(\cA\cap\cX))+L^0_+
\]
where $\cl$ is the closure operator with respect to the reference topology $\sigma(\cX,\cX')$. A similar conification was considered in \v{C}ern\'{y} and Hodges (2002) and Staum (2004) and is necessary to obtain a version of the Fundamental Theorem that applies to nonconic acceptance sets. The next proposition contains a list of useful properties of this enlarged acceptance set.

\begin{proposition}
\label{prop: conified A}
The set $\cK(\cA)$ satisfies the properties in Assumption~\ref{ass: acceptance set} provided it is a strict subset of $L^0$. Moreover, it is a cone and satisfies $\cK(\cA)\cap\cX=\cl(\cone(\cA)\cap\cX)$. In particular, if $\cA$ is a cone, then $\cK(\cA)\cap\cX=\cA\cap\cX$.
\end{proposition}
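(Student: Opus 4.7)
The plan is to verify each of the four claims in order, relying almost entirely on elementary cone and closure manipulations. The only delicate point is to reconcile the sum $+L^0_+$ (taken in $L^0$) with the closure $\cl(\cdot)$ (taken in $\cX$ with respect to $\sigma(\cX,\cX')$).

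First I would check that $\cK(\cA)$ is a convex cone. Since $\cA$ is convex and contains $0$, the intersection $\cA\cap\cX$ is convex and contains $0$, so $\cone(\cA\cap\cX)$ is a convex cone. Because $\sigma(\cX,\cX')$ is a linear Hausdorff topology, $\cl(\cone(\cA\cap\cX))$ is again a convex cone in $\cX$. The Minkowski sum with the convex cone $L^0_+$ is then a convex cone in $L^0$, which establishes the cone assertion. Assumption~\ref{ass: acceptance set} is then immediate: $0=0+0\in\cK(\cA)$, convexity comes from the convex-cone property, and monotonicity $\cK(\cA)+L^0_+\subset\cK(\cA)$ follows from $L^0_++L^0_+\subset L^0_+$ and the definition; strictness of the inclusion in $L^0$ is the standing hypothesis.

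Next I would establish the identity $\cK(\cA)\cap\cX=\cl(\cone(\cA\cap\cX))$, noting first that $\cone(\cA)\cap\cX=\cone(\cA\cap\cX)$ because $\cX$ is a vector space (so any $\lambda X\in\cX$ with $\lambda>0$ forces $X\in\cX$), which reconciles the notation. The inclusion $\supset$ is trivial: take the $L^0_+$-summand equal to $0$. For $\subset$, pick $W\in\cK(\cA)\cap\cX$ and write $W=U+V$ with $U\in\cl(\cone(\cA\cap\cX))\subset\cX$ and $V\in L^0_+$; then $V=W-U\in\cX\cap L^0_+=\cX_+$. The key observation is that monotonicity of $\cA$ and $0\in\cA$ force $L^0_+\subset\cA$, so that
\[
\cX_+\subset\cA\cap\cX\subset\cone(\cA\cap\cX),
\]
and hence $\cone(\cA\cap\cX)+\cX_+=\cone(\cA\cap\cX)$. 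A short net argument using continuity of addition in $\sigma(\cX,\cX')$ upgrades this to $\cl(\cone(\cA\cap\cX))+\cX_+\subset\cl(\cone(\cA\cap\cX))$, which places $W=U+V$ in $\cl(\cone(\cA\cap\cX))$.

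Finally, when $\cA$ is itself a cone, the intersection $\cA\cap\cX$ is a cone (both factors are), so $\cone(\cA\cap\cX)=\cA\cap\cX$; by Assumption~\ref{assumption direct} the set $\cA\cap\cX$ is already $\sigma(\cX,\cX')$-closed, and combining with the identity just proved gives $\cK(\cA)\cap\cX=\cA\cap\cX$. The principal (and only) obstacle is the absorption step in the middle paragraph, which requires noting that the positive cone $\cX_+$ is already absorbed by the object whose closure is being taken; everything else reduces to routine cone and vector-space manipulations.
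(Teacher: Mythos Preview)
Your proof is correct and follows essentially the same route as the paper's. The only cosmetic difference is in the absorption step: the paper shows $\cl(\cone(\cA\cap\cX))+\cX_+\subset\cl(\cone(\cA\cap\cX))$ by taking an approximating net $\lambda_\alpha X_\alpha$ and checking via a case split on $\lambda_\alpha=0$ versus $\lambda_\alpha>0$ that $\lambda_\alpha X_\alpha+U\in\cone(\cA\cap\cX)$, whereas you first observe that $\cone(\cA\cap\cX)$ is a convex cone containing $\cX_+$ (hence closed under addition with $\cX_+$) and then pass to the closure. Your argument is slightly cleaner, and your explicit reconciliation of $\cone(\cA)\cap\cX$ with $\cone(\cA\cap\cX)$ is a detail the paper leaves implicit.
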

\begin{proof}
It is readily seen that $\cK(\cA)$ satisfies the properties in Assumption~\ref{ass: acceptance set} and is a cone. Note that $\cK(\cA)\cap\cX=\cl(\cone(\cA\cap\cX))+\cX_+$. Hence, it remains to show that $\cl(\cone(\cA\cap\cX))+\cX_+\subset\cl(\cone(\cA\cap\cX))$. To this end, take arbitrary $X\in\cl(\cone(\cA\cap\cX))$ and $U\in\cX_+$. By assumption, we find nets $(X_\alpha)\subset\cA\cap\cX$ and $(\lambda_\alpha)\subset\R_+$ such that $\lambda_\alpha X_\alpha\to X$. Clearly, $\lambda_\alpha X_\alpha+U\to X+U$. We conclude by showing that for every $\alpha$ we have $\lambda_\alpha X_\alpha+U\in\cone(\cA)$. This is obvious if $\lambda_\alpha=0$ because $U\in\cX_+\subset\cA$. Otherwise, assume that $\lambda_\alpha>0$. In this case, we have $X_\alpha+\frac{1}{\lambda_\alpha}U\in\cA+\cX_+\subset\cA$ by monotonicity of $\cA$. Hence, it follows that $\lambda_\alpha X_\alpha+U=\lambda_\alpha(X_\alpha+\frac{1}{\lambda_\alpha}U)\in\cone(\cA)$. This concludes the proof.
\end{proof}

To establish the existence of strictly consistent price deflators through the Kreps-Yan theorem we have to verify the completeness and countable separation properties. We start by showing that completeness always holds in our setting. This is a direct consequence of the fact that, by assumption, the space $\cX'$ is a norm dual and $\sigma(\cX',\cX)$ is weaker than the corresponding weak-star topology.

\begin{proposition}
\label{prop: completeness property}
For every sequence $(Y_n)\subset\cD$ there exist a sequence $(\lambda_n)\subset(0,\infty)$ and $Y\in\cD$ such that $\sum_{k=1}^n\lambda_kY_k\to Y$.
\end{proposition}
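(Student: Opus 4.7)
The plan is to exploit the fact that $\cX'$ is a Banach space (being the norm dual of $\cY$) together with the characterization $\cD = \{Y \in \cX'_+ \,;\, \sigma_\cC(Y,1) < \infty\}$ established in Lemma~\ref{lem: elementary properties C}(iii). First I would pick weights $\lambda_n > 0$ small enough that the partial sums $S_n := \sum_{k=1}^n \lambda_k Y_k$ form a norm-Cauchy sequence, and then verify that the limit $Y$ again lies in $\cD$ by combining the sublinearity and lower semicontinuity of the support function $\sigma_\cC$.

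Concretely, set $M_n := \sigma_\cC(Y_n, 1) \in [0,\infty)$ and choose $\lambda_n > 0$ satisfying simultaneously
\[
\sum_{n=1}^\infty \lambda_n = 1, \qquad \sum_{n=1}^\infty \lambda_n \|Y_n\|_{\cX'} < \infty, \qquad \sum_{n=1}^\infty \lambda_n M_n < \infty,
\]
for instance $\lambda_n = c \cdot 2^{-n}(1 + \|Y_n\|_{\cX'} + M_n)^{-1}$ with a suitable normalizing constant $c > 0$. Summability of $\lambda_n\|Y_n\|_{\cX'}$ makes $(S_n)$ norm-Cauchy in the Banach space $\cX'$, so $S_n \to Y$ in norm for some $Y \in \cX'$. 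Since the norm topology dominates the weak-star topology $\sigma(\cX',\cY)$, which by hypothesis dominates $\sigma(\cX',\cX)$, we also have $S_n \to Y$ in $\sigma(\cX',\cX)$. Positivity of $Y$ is immediate by testing against indicators $\one_E \in L^\infty \subset \cX$: since $S_n \geq 0$ for every $n$, passing to the limit yields $\E[\one_E Y] \geq 0$ for every $E \in \cF$, hence $Y \in \cX'_+$.

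For finiteness of $\sigma_\cC(Y, 1)$ I would use that $\sigma_\cC$ is sublinear and lower semicontinuous with respect to the product topology on $\cX' \times \R$, as it is a support function. Setting $\Lambda_n := \sum_{k=1}^n \lambda_k$, sublinearity gives
\[
\sigma_\cC(S_n, \Lambda_n) \leq \sum_{k=1}^n \lambda_k \sigma_\cC(Y_k, 1) = \sum_{k=1}^n \lambda_k M_k \leq \sum_{k=1}^\infty \lambda_k M_k < \infty,
\]
and since $(S_n, \Lambda_n) \to (Y, 1)$, lower semicontinuity delivers $\sigma_\cC(Y, 1) \leq \liminf_n \sigma_\cC(S_n, \Lambda_n) < \infty$, so $Y \in \cD$. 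The main subtlety---and the reason for insisting on $\sum_n \lambda_n = 1$---is that $\sigma_\cC$ is not monotone in its second argument, because $\cC$ contains pairs $(X,m)$ with $m$ of either sign; without this normalization, the sublinearity estimate would only yield finiteness of $\sigma_\cC(Y, \Lambda)$ for some $\Lambda \in (0,1]$, which does not directly transfer to finiteness of $\sigma_\cC(Y, 1)$.
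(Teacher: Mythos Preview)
Your proof is correct and follows essentially the same strategy as the paper: choose weights making the series norm-summable in the Banach space $\cX'$, then use sublinearity and lower semicontinuity of $\sigma_\cC$ to conclude that the limit lies in $\cD$. The only cosmetic difference is that the paper does not normalize $\sum_n\lambda_n=1$ upfront but instead obtains $\sigma_\cC(Z,r)<\infty$ for $r=\sum_n\alpha_n$ and then rescales by setting $Y=Z/r$ and $\lambda_n=\alpha_n/r$---so your caution about non-monotonicity in the second argument, while not incorrect, is handled in the paper via the positive homogeneity of $\sigma_\cC$ rather than by normalization.
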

\begin{proof}
Recall that $\cD\subset\cX'_+$ by Lemma~\ref{lem: elementary properties C} and note that $\sigma_\cC(Y,1)\geq0$ for every $Y\in\cD$. Moreover, recall that $\cX'$ is a norm dual and denote by $\|\cdot\|_{\cX'}$ the corresponding dual norm. Let $S_n=\sum_{k=1}^n\alpha_kY_k$ and $\alpha_n=(1+\|Y_n\|_{\cX'})^{-1}(1+\sigma_\cC(Y_n,1))^{-1}2^{-n}>0$ for every $n\in\N$. Since $\cX'$ is complete with respect to its norm topology, we have $S_n\to Z$ for a suitable $Z\in\cX'$ with respect to said topology. Hence, by our standing assumptions, we also have $S_n\to Z$ with respect to the reference topology $\sigma(\cX',\cX)$. To conclude the proof, note that $\sum_{k=1}^n\alpha_k\to r$ for some $r>0$ and
\[
\sigma_\cC(Z,r) \leq \liminf_{n\to\infty}\sum_{k=1}^n\alpha_k\sigma_\cC(Y_k,1) < \infty
\]
by lower semicontinuity and sublinearity of $\sigma_\cC$. This yields $(Z,r)\in\barr(\cC)$. The desired statement follows by setting $\lambda_n=\frac{\alpha_n}{r}>0$ for every $n\in\N$ and $Y=\frac{Z}{r}\in\cD$.
\end{proof}

Establishing the countable separation property is more challenging and requires an additional assumption, namely the absence of scalable good deals. In the next proposition we state a useful equivalent condition for this to hold in the case of a pointed conic acceptance set. We show that, in this case, there are no scalable good deals if and only if every good deal that is diminished by arbitrary multiples of any given (nonzero) acceptable payoff ceases to be a good deal. If the acceptance set is the standard positive cone, this condition corresponds to the ``no scalable arbitrage'' condition in Pennanen (2011a).

\begin{proposition}
\label{prop: on scalable deals}
Let $\cA$ be a pointed cone. Then, there exists no scalable good deal if and only if for every nonzero $X\in\cA\cap\cX$ there is $\lambda>0$ such that $(\cA+\lambda X)\cap\{Z\in\cM \,; \ \pi(Z)\leq0\}=\emptyset$.
\end{proposition}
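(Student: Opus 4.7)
The plan is to prove the equivalence by establishing the backward direction directly and the forward direction by contraposition, using the finite dimensionality of $\cS$ to extract limits.

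For the backward direction, suppose there exists a scalable good deal, i.e., a nonzero $X\in\cA\cap\cM^\infty$ with $\pi^\infty(X)\leq0$. Since $\cM^\infty\subset\cS\subset\cX$ and $\cA$ is a cone containing $X$, one has $X\in\cA\cap\cX$ nonzero. For any $\lambda>0$, the point $\lambda X$ lies in $\cM$ because $0\in\cM$ and $\lambda X\in\cM^\infty$; it satisfies $\pi(\lambda X)\leq\pi^\infty(\lambda X)=\lambda\pi^\infty(X)\leq0$ since $\pi^\infty$ dominates $\pi$ and is sublinear; and trivially $\lambda X-\lambda X=0\in\cA$. Thus $\lambda X\in(\cA+\lambda X)\cap\{Z\in\cM\,;\,\pi(Z)\leq0\}$ for every $\lambda>0$, showing the condition of the proposition fails at this $X$.

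For the forward direction I argue by contrapositive: suppose there exists a nonzero $X\in\cA\cap\cX$ such that for every $n\in\N$ one finds $Z_n\in\cM$ with $Z_n-nX\in\cA$ and $\pi(Z_n)\leq0$. I want to produce a scalable good deal as a limit of suitably normalized $Z_n$'s, using that the unit sphere in $\cS$ is compact. Dividing the membership $Z_n-nX\in\cA$ by a positive scalar keeps it in $\cA$ by conicity, which is the key flexibility afforded by the hypothesis on $\cA$. The case split is on whether $(\|Z_n\|)$ is bounded, and within the unbounded case on the growth of $n/\|Z_n\|$:
\begin{itemize}
\item If $(\|Z_n\|)$ is bounded (or more generally $\|Z_n\|/n\to0$), dividing by $n$ gives $Z_n/n-X\in\cA$; passing to a convergent subsequence yields $-X\in\cA$ by closedness of $\cA\cap\cX$, which together with $X\in\cA$ nonzero contradicts pointedness of $\cA$.
\item If $\|Z_n\|\to\infty$ and $n/\|Z_n\|$ is bounded, I extract a subsequence with $Z_n/\|Z_n\|\to Y$ on the unit sphere and $n/\|Z_n\|\to c\geq0$. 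Since $\cM$ is closed, convex, and contains $0$, the standard recession characterization gives $Y\in\cM^\infty$, with $Y\neq0$. Conicity of $\cA$ yields $(Z_n-nX)/\|Z_n\|=Z_n/\|Z_n\|-(n/\|Z_n\|)X\in\cA$; taking the limit and using closedness gives $Y-cX\in\cA$, hence $Y=(Y-cX)+cX\in\cA$ as $\cA$ is a convex cone and $X\in\cA$. Finally, the sequential characterization of $\pi^\infty$ (smallest sublinear majorant of the lower semicontinuous convex $\pi$) yields $\pi^\infty(Y)\leq\liminf_n\pi(Z_n)/\|Z_n\|\leq0$, so $Y$ is a scalable good deal, a contradiction.
\end{itemize}

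The main obstacle I anticipate is the unbounded case: one must pick the right normalization ($\|Z_n\|$ rather than $n$) to land on the unit sphere and recover a nonzero limit in $\cM^\infty$, while simultaneously tracking the ratio $n/\|Z_n\|$ to ensure the limit still lies in $\cA$. Conicity of $\cA$ is what lets us rescale $Z_n-nX$ freely, pointedness is what converts $-X\in\cA$ into a contradiction in the bounded case, and the closedness of $\cA\cap\cX$ together with finite dimensionality of $\cS$ is what secures the passage to the limit.
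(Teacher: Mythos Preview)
Your proof is correct and uses the same ingredients as the paper's---compactness of the unit sphere in $\cS$, pointedness of $\cA$, and closedness of $\cA\cap\cX$---but organizes them differently. The paper proceeds in two decoupled steps: it first shows (independently of any particular $X$) that the set $\{Z\in\cA\cap\cM\,;\,\pi(Z)\leq0\}$ is bounded by extracting a scalable good deal from any unbounded sequence in it; then, for a fixed $X$ at which the condition fails, it observes that $Z_\lambda/\lambda\in(\cA+X)\cap\cS$, and since $0\notin\cA+X$ by pointedness this set is bounded away from zero in norm, forcing $\|Z_\lambda\|\to\infty$ and contradicting step one. Your proof instead merges these into a single case analysis on $\|Z_n\|/n$: your Case~1 is the contrapositive of the paper's lower-bound argument (if $\|Z_n\|/n\to0$ then $-X\in\cA$), and your Case~2 is the paper's boundedness step (if $\|Z_n\|\to\infty$ with $n/\|Z_n\|$ bounded, a scalable good deal appears in the limit). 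The paper's decomposition isolates the boundedness of the good-deal set as a reusable fact, while your version is slightly more direct; both are fine. One small point: your case split becomes exhaustive only after passing to a subsequence (either $\liminf\|Z_n\|/n=0$ or it is strictly positive), which you could make explicit.
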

\begin{proof}
To prove the ``if'' implication, assume that for every nonzero $X\in\cA\cap\cX$ there is $\lambda>0$ such that $(\cA+\lambda X)\cap\{Z\in\cM \,; \ \pi(Z)\leq0\}=\emptyset$. Take a nonzero $X\in\cA\cap\cX$. By assumption, we find $\lambda>0$ such that $\lambda X\notin\{Z\in\cM \,; \ \pi(Z)\leq0\}$. This implies that $X\notin\{Z\in\cM^\infty \,; \ \pi^\infty(Z)\leq0\}$. In particular, for every $X\in\cA^\infty\cap\cM^\infty$ such that $\pi^\infty(X)\leq0$ we must have $X\in\cS\subset\cX$ and, hence, $X=0$. This shows that no scalable good deal can exist. To prove the ``only if'' implication, assume conversely that no scalable good deal exists. First, we claim that $\{Z\in\cA\cap\cM \,; \ \pi(Z)\leq0\}$ is bounded. If this is not the case, for every $n\in\N$ we find $Z_n\in\cA\cap\cM$ such that $\pi(Z_n)\leq0$ and $\|Z_n\|\geq n$. As the unit sphere in $\cS$ is compact, there exists a nonzero $Z\in\cS$ such that $\frac{Z_n}{\|Z_n\|}\to Z$. Note that $Z\in\cA^\infty\cap\cM^\infty$ by~\eqref{eq: recession cones 1}. Note also that the lower semicontinuity and convexity of $\pi$ yield
\[
\pi(Z) \leq \liminf_{n\to\infty}\pi\left(\frac{Z_n}{\|Z_n\|}\right) \leq \liminf_{n\to\infty}\frac{\pi(Z_n)}{\|Z_n\|} \leq 0.
\]
This shows that $Z$ is a scalable good deal, contradicting our assumption. Hence, $\{Z\in\cA\cap\cM \,; \ \pi(Z)\leq0\}$ is bounded. Now, suppose we find a nonzero $X\in\cA\cap\cX$ such that for every $\lambda>0$ there exists $Z_\lambda\in\cM$ with $\pi(Z_\lambda)\leq0$ and $Z_\lambda-\lambda X\in\cA$. In particular, $Z_\lambda\in\cA$ and $\frac{Z_\lambda}{\lambda}\in\cA+X$ for every $\lambda>0$. As $(\cA+X)\cap\cS$ is closed and does not contain the zero payoff, the norm $\|\cdot\|$ must be bounded from below by a suitable $\varepsilon>0$ on the set $(\cA+X)\cap\cS$. In particular, $\frac{\|Z_\lambda\|}{\lambda}\geq\varepsilon$ for every $\lambda>0$. This implies that $\{Z_\lambda \,; \ \lambda>0\}$ is unbounded. However, this is against what we have prove above, showing that for every nonzero $X\in\cA\cap\cX$ there must be $\lambda>0$ such that $(\cA+\lambda X)\cap\{Z\in\cM \,; \ \pi(Z)\leq0\}=\emptyset$.
\end{proof}

We are finally in a position to state
sufficient conditions for the existence of strictly consistent price deflators. As a first step, we provide two sets of sufficient conditions for the existence of consistent price deflators that are strictly positive. This is achieved by proving the countable separation property for $\cL=\cX_+$ and $\cL'=\cD$. In order to move from strict positivity to strict consistency, we need an additional assumption on the dual space $\cX'$, namely the separability of its norm predual. In this case, we are able to establish the countable separation property for $\cL=\cA\cap\cX$ and $\cL'=\cD$. We refer to the accompanying remark for a detailed discussion about the proof strategy and the separability assumption.

\begin{theorem}
\label{theo: dual ftap A convex}
Assume that one of the following holds:
\begin{enumerate}
  \item[(i)] $\cA$ is a pointed cone and there exists no scalable good deal.
  \item[(ii)] $\cK(\cA)$ is pointed and there exists no scalable good deal with respect to $\cK(\cA)$.
\end{enumerate}
Then, there exists a strictly positive consistent price deflator $D$ in $\cX'$. If, in addition, the norm predual of $\cX'$ is separable with respect to its norm topology, then $D$ can be taken to be strictly consistent.
\end{theorem}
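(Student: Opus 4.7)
The plan is to apply the Kreps-Yan theorem (Theorem \ref{theo: kreps yan in our setting}) with $\cL'=\cD$, taking $\cL=\cX_+$ for the strictly positive consistent price deflator and $\cL=\cA\cap\cX$ for the strictly consistent one. Completeness of $\cD$ is granted by Proposition \ref{prop: completeness property}, so only the countable separation property needs to be verified. As a preliminary step I would reduce (ii) to (i): by Proposition \ref{prop: conified A}, $\cK(\cA)$ is a pointed cone satisfying the standing acceptance-set assumptions; the no-scalable-good-deal hypothesis for $\cK(\cA)$ places us in the setting of (i) with $\cK(\cA)$ in the role of $\cA$, and the inclusion $\cA\cap\cX\subset\cK(\cA)\cap\cX$ transfers strict (positive) consistency from $\cK(\cA)$ back to $\cA$. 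Working henceforth under (i), the absence of scalable good deals coincides with the absence of strong scalable good deals, so Lemma \ref{lem: closedness C} gives that $\cC$ is closed with $(0,n)\notin\cC$ for some $n$, and Lemma \ref{lem: elementary properties C}(iv) provides the dual representation of $\cC$.

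The crux is the following pointwise separation claim: for every nonzero $X_0\in\cA\cap\cX$ there exists $Y\in\cD$ with $\E[X_0Y]>0$. I would establish this by first showing that $(\lambda X_0,0)\notin\cC$ for some $\lambda>0$. Otherwise, for every $\lambda>0$ there would exist $Z_\lambda\in\cM$ with $Z_\lambda-\lambda X_0\in\cA$ and $\pi(Z_\lambda)\leq0$; along a sequence $\lambda_k\to\infty$, boundedness of $\|Z_{\lambda_k}\|$ would yield $(Z_{\lambda_k}-\lambda_kX_0)/\lambda_k\to-X_0\in\cA$ (contradicting pointedness of $\cA$), while unboundedness of $\|Z_{\lambda_k}\|$ would, via the same recession machinery used in the proof of Lemma \ref{lem: closedness C}, produce a nonzero $W\in\cA\cap\cM^\infty$ with $\pi^\infty(W)\leq0$, i.e.\ a scalable good deal. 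Once $(\lambda X_0,0)\notin\cC$, Hahn-Banach together with Lemma \ref{lem: elementary properties C}(i) yields $(Y,r)\in\barr(\cC)\subset\cX'_+\times\R_+$ with $\lambda\E[X_0Y]>\sigma_\cC(Y,r)$. To upgrade $r\geq0$ to $r>0$ I would add a small multiple of the separator $(Y_0,r_0)$ of $(0,n)$ from $\cC$ (necessarily $r_0>0$): for sufficiently small $t>0$ the combination $(Y+tY_0,r+tr_0)$ still strictly separates $(\lambda X_0,0)$ from $\cC$, and normalizing its second coordinate to one delivers the desired element of $\cD$.

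For the first assertion, I would specialize the pointwise claim to $X_0\in\cX_+$ and run a Halmos-Savage exhaustion on $\{\{Y>0\}\,;\,Y\in\cD\}$: a countable $(Y_n)\subset\cD$ whose supports have essential union equal to the essential supremum $S$ of that family must satisfy $S=\Omega$ almost surely, since otherwise $\one_{S^c}\in L^\infty\subset\cX_+$ would be a nonzero positive payoff on which every $Y\in\cD$ vanishes, contradicting the pointwise claim. Proposition \ref{prop: completeness property} then produces a strictly positive $D=\sum_n\lambda_nY_n\in\cD$, and because $\cA$ is a cone every element of $\cD$ is in fact consistent, so $D$ is the desired strictly positive consistent price deflator.

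For the second assertion I would exploit the separability of the norm predual $\cY$. Bounded subsets of $\cX'$ are $\sigma(\cX',\cY)$-metrizable and, since $\sigma(\cX',\cX)$ is weaker, $\sigma(\cX',\cX)$-separable; enumerating a countable $\sigma(\cX',\cX)$-dense subfamily $(Y_n)\subset\cD$ layer by layer on balls of radius $n$ therefore suffices. Since the evaluation $Y\mapsto\E[XY]$ is $\sigma(\cX',\cX)$-continuous for each fixed $X\in\cX$, the pointwise claim together with density forces some $Y_n$ to satisfy $\E[XY_n]>0$ for every nonzero $X\in\cA\cap\cX$. Theorem \ref{theo: kreps yan in our setting} applied with $(\cL,\cL')=(\cA\cap\cX,\cD)$ now returns a strictly consistent element of $\cD$. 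The main obstacle is the pointwise separation together with forcing $r>0$: naive Hahn-Banach only gives $r\geq0$ and the payoff space is in general not rich enough to exclude the degenerate case outright, so the perturbation through the guaranteed exterior point $(0,n)\notin\cC$ of Lemma \ref{lem: closedness C} is the key device that bridges the geometry of $\cC$ with membership in $\cD$.
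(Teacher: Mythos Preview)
Your proposal is correct and follows the same overall architecture as the paper: reduce (ii) to (i) via Proposition~\ref{prop: conified A}, establish pointwise separation of nonzero $X\in\cA\cap\cX$ by elements of $\cD$, run a Halmos--Savage exhaustion for strict positivity, and exploit separability of the predual for strict consistency.

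The executional differences are minor but worth noting. First, your inline argument that $(\lambda X_0,0)\notin\cC$ for some $\lambda>0$ is precisely the content of Proposition~\ref{prop: on scalable deals}, which the paper proves separately and then cites. Second, your Hahn--Banach-plus-perturbation device to force $r>0$ is unnecessary: once $(0,n)\notin\cC$ is known, Lemma~\ref{lem: elementary properties C}(iv) already represents the closed convex set $\cC$ as an intersection indexed by $\cD$, so any point outside $\cC$ is separated by some $(Y,1)$ with $Y\in\cD$ directly---no perturbation is needed. This is the cleaner route and is what the paper does. Third, for the separability step the paper normalizes each separator $Y_X$ to the unit sphere of $\cX'$ and extracts a countable $\sigma(\cX',\cX)$-dense subset from that single bounded family, whereas you cover $\cD$ by balls of increasing radius; both work, but the paper's normalization avoids the layer-by-layer bookkeeping.
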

\begin{proof}
It follows from Proposition \ref{prop: conified A} that $\cK(\cA)$ is a conic acceptance set such that $\cK(\cA)\cap\cX$ is closed and coincides with $\cl(\cone(\cA)\cap\cX)$. Note that every price deflator $D$ that is (strictly) consistent with $\cK(\cA)$ is also (strictly) consistent with $\cA$. As a result, it suffices to prove the stated claims under condition (i). Hence, assume that $\cA$ is a pointed cone and there exists no scalable good deal.

\smallskip

We first show that we can always find a strictly positive consistent price deflator in $\cX'$. To this effect, we apply Theorem \ref{theo: kreps yan in our setting} to $\cL=\cX_+$ and $\cL'=\cD$, in which case $\cL'\cap(-\barr(\cone(\cL)))=\cD$ by Lemma~\ref{lem: elementary properties C}. In view of this result and of Proposition~\ref{prop: completeness property}, to establish our claim it suffices to exhibit a sequence $(Y_n)\subset\cD$ of price deflators such that
\begin{equation}
\label{eq: countable separation}
\mbox{for every nonzero $X\in\cX_+$ there exists $n\in\N$ such that $\E[XY_n]>0$}.
\end{equation}
By Proposition~\ref{prop: on scalable deals}, for every nonzero $X\in\cX_+$ there exists $\lambda>0$ such that $(\lambda X,0)\notin\cC$. Since $\cC$ is closed and $(0,n)\notin\cC$ for some $n\in\N$ by Lemma~\ref{lem: closedness C}, we can use the representation of (the closure of) $\cC$ in Lemma~\ref{lem: elementary properties C} to find an element $Y_X\in\cD$ such that $\E[\lambda XY_X]>\sigma_\cC(Y_X,1)\geq0$. Equivalently, we have that
\begin{equation}
\label{eq: countable separation preliminary}
\mbox{for every nonzero $X\in\cX_+$ there exists $Y_X\in\cD$ such that $\E[XY_X]>0$}.
\end{equation}
To establish \eqref{eq: countable separation}, we start by showing that the family $\cG = \{\{Y>0\} \,; \ Y\in\cD\}$ is nonempty and closed under countable unions. That $\cG$ is nonempty follows from~\eqref{eq: countable separation preliminary}. To show that $\cG$ is closed under countable unions, take an arbitrary sequence $(Y_n)\subset\cD\setminus\{0\}$. By Proposition~\ref{prop: completeness property}, we find a sequence $(\lambda_n)\subset(0,\infty)$ and an element $Y\in\cD$ such that $S_n=\sum_{k=1}^n\lambda_kY_k\to Y$. It is easy to see that
\begin{equation}
\label{eq: exhaustion 1}
\{Y>0\} = \bigcup_{n\in\N}\{Y_n>0\} \ \ \ \mbox{$\probp$-almost surely}.
\end{equation}
Indeed, consider first the event $E=\{Y>0\}\cap\bigcap_{n\in\N}\{Y_n=0\}$. We must have $\probp(E)=0$ for otherwise
\[
0 < \E[\one_EY] = \lim_{n\to\infty}\E[\one_ES_n] = 0.
\]
As a result, the inclusion ``$\subset$'' in~\eqref{eq: exhaustion 1} must hold. Next, we claim that $\probp(Y\geq S_n)=1$ for every $n\in\N$. If not, we find $k\in\N$ and $\varepsilon>0$ such that the event $E=\{Y\leq S_k-\varepsilon\}$ satisfies
\[
0 < \varepsilon\probp(E) \leq \E[\one_E(S_k-Y)] \leq \lim_{n\to\infty}\E[\one_E(S_n-Y)] = 0.
\]
This delivers the inclusion ``$\supset$'' in~\eqref{eq: exhaustion 1} and shows that $\cG$ is closed under countable unions as desired. Now, set $s=\sup\{\probp(E) \,; \ E\in\cG\}$. Take any sequence $(Y_n)\subset\cD$ such that $\probp(Y_n>0)\uparrow s$. By closedness under countable unions, there must exist $Y^\ast\in\cD$ such that $\{Y^\ast>0\}=\bigcup_{n\in\N}\{Y_n>0\}$ $\probp$-almost surely. Take an arbitrary nonzero $X\in\cX_+$ and assume that $\E[XY_n]=0$ for every $n\in\N$. This would imply that $\E[XY^\ast]=0$ and, thus, the element $\frac{1}{2}Y^\ast+\frac{1}{2}Y_X\in\cD$ would satisfy
\[
\probp\left(\frac{1}{2}Y^\ast+\frac{1}{2}Y_X>0\right) \geq \probp(Y^\ast>0)+\probp(\{Y^\ast=0\}\cap\{Y_X>0\}) > \probp(Y^\ast>0) = s,
\]
which cannot hold. In conclusion, we must have $\E[XY_n]>0$ for some $n\in\N$, showing~\eqref{eq: countable separation}.

\smallskip

To conclude the proof, we show that there exist a strictly consistent price deflator in $\cX'$ if we additionally assume that the norm predual of $\cX'$ is separable with respect to its norm topology. To this end, we apply Theorem \ref{theo: kreps yan in our setting} to $\cL=\cA\cap\cX$ and $\cL'=\cD$, in which case $\cL'\cap(-\barr(\cone(\cL)))=\cD$ by Lemma~\ref{lem: elementary properties C}. In view of this result and of Proposition~\ref{prop: completeness property}, we are done if we exhibit a sequence $(Y_n)\subset\cD$ such that
\begin{equation}
\label{eq: countable separation A}
\mbox{for every nonzero $X\in\cA\cap\cX$ there exists $n\in\N$ such that $\E[XY_n]>0$}.
\end{equation}
By Proposition~\ref{prop: on scalable deals}, for every nonzero $X\in\cA\cap\cX$ there exists $\lambda>0$ such that $(\lambda X,0)\notin\cC$. Since $\cC$ is closed and $(0,n)\notin\cC$ for some $n\in\N$ by Lemma~\ref{lem: closedness C}, we can use the representation of (the closure of) $\cC$ in Lemma~\ref{lem: elementary properties C} to find an element $Y_X\in\cD$ such that $\E[\lambda XY_X]>\sigma_\cC(Y_X,1)\geq0$. Equivalently, we have that
\begin{equation}
\label{eq: countable separation A preliminary}
\mbox{for every nonzero $X\in\cA\cap\cX$ there exists $Y_X\in\cD$ such that $\E[XY_X]>0$}.
\end{equation}
Recall that $\cX'$ is a norm dual and denote by $\|\cdot\|_{\cX'}$ the corresponding dual norm. For every nonzero $X\in\cA\cap\cX$ consider the rescaled couple
\[
(Z_X,r_X) = \bigg(\frac{Y_X}{\|Y_X\|_{\cX'}},\frac{1}{\|Y_X\|_{\cX'}}\bigg) \in \barr(\cC).
\]
As the norm predual of $\cX'$ is separable by assumption, the unit ball in $\cX'$ is weak-star metrizable by Theorem 6.30 in Aliprantis and Border (2006). Being weak-star compact by virtue of the Banach-Alaoglu Theorem, see e.g.\ Theorem 6.21 in Aliprantis and Border (2006), the unit ball together with any of its subsets is therefore weak-star separable. In particular, this is true for $\{Z_X \,; \ X\in(\cA\cap\cX)\setminus\{0\}\}$. Since our reference topology on $\cX'$, namely $\sigma(\cX',\cX)$, was assumed to be weaker than the weak-star topology, it follows that $\{Z_X \,; \ X\in(\cA\cap\cX)\setminus\{0\}\}$ is also separable with respect to $\sigma(\cX',\cX)$. Let $\{Z_{X_n} \,; \ n\in\N\}$ be a countable dense subset. Then, for every nonzero $X\in\cA\cap\cX$, it follows immediately from~\eqref{eq: countable separation A preliminary} that we must have $\E[XY_{X_n}]>0$ for some $n\in\N$ by density. This delivers~\eqref{eq: countable separation A}.
\end{proof}

\begin{remark}
\label{rem: assumptions dual ftap}
(i) The pointedness requirement can be slightly weakened. Indeed, it suffices that $\cA\cap\cX$ and $\cK(\cA)\cap\cX$ are pointed, respectively. In view of Proposition~\ref{prop: conified A}, the latter condition is equivalent to the pointedness of $\cl(\cone(\cA))\cap\cX$. Note that, under pointedness, the absence of scalable good deals is equivalent to the generally weaker absence of strong scalable good deals. Note also that pointedness of $\cA\cap\cX$ is necessary for the existence of strictly consistent price deflators. One can verify that pointedness of $\cA\cap\cX$ is satisfied by most of the standard acceptance sets. For instance, by Proposition 5.9 in Bellini et al.\ (2021), this holds whenever $\cX$ is law invariant and $\cA$ is a law-invariant cone such that $\cA\cap\cX\neq\{X\in\cX \,; \ \E[X]\geq0\}$.

\smallskip

(ii) The separability of the norm predual of $\cX'$ is typically ensured by suitable assumptions on the underlying $\sigma$-field. For concreteness, consider the case where $\cX'=L^\infty$, which is interesting because it delivers bounded price deflators. In this case, the norm predual is $L^1$. A simple sufficient condition for separability is that $\cF$ is countably generated. A characterization of separability in the nonatomic setting can be found, e.g., in Theorem 13.16 in Aliprantis and Border (2006). It is worthwhile highlighting that separability is not required of the reference payoff space $\cX$ and may hold even if $\cX$ is not separable with respect to a pre-specified natural topology. For instance, if $\cX$ is an Orlicz space, then separability with respect to the norm topology may fail even if $\cF$ is countably generated; see, e.g., Theorem 1 in Section 3.5 in Rao and Ren (1991).

\smallskip

(iii) To establish the existence of a strictly consistent price deflator we had to ``conify'' the acceptance set $\cA$ so as to obtain another acceptance set $\cK(\cA)$ satisfying the same standing assumptions. A direct way to see that a ``conification'' is necessary is to observe that every strictly consistent price deflator is automatically strictly consistent for the acceptance set $\cK(\cA)$. This is also true for the more natural ``conified'' acceptance set $\cone(\cA)$, but the the intersection $\cone(\cA)\cap\cX$ need not be closed and, hence, our standing assumptions need not hold.

\smallskip

(iv) The proof of the existence of  {\em strictly positive} consistent price deflators builds on the exhaustion argument underpinning the classical result on equivalent probability measures in Halmos and Savage (1949). In fact, a direct application of that result provides an alternative proof of the countable separation property in~\eqref{eq: countable separation}. To see this, note that every element $Y_X\in\cD$ in~\eqref{eq: countable separation preliminary} is associated with a probability measure on $(\Omega,\cF)$ defined by $d\probp_X = \frac{Y_X}{\E_\probp[Y_X]}d\probp$. Since the family of such probability measures is dominated by $\probp$, it follows from Lemma 7 in Halmos and Savage (1949) that there exists a sequence $(X_n)\subset\cX_+\setminus\{0\}$ such that for every $E\in\cF$ we have that $\probp_{X_n}(E)=0$ for every $n\in\N$ if and only if $\probp_X(E)=0$ for every nonzero $X\in\cX_+$. For every nonzero $X\in\cX_+$ we clearly have $\probp_X(X>0)>0$ and, hence, there must exist $n\in\N$ such that $\probp_{X_n}(X>0)>0$ or, equivalently, $\E[XY_{X_n}]>0$. The countable separation property is thus fulfilled by the sequence $(Y_{X_n})$. It is worth noting that neither this argument nor the argument in the proof above can be used to ensure the existence of {\em strictly consistent} price deflators when the acceptance sets is strictly larger than the standard positive cone and, thus, contains nonpositive payoffs. This is because controlling probabilities alone is not sufficient to control the sign of expectations. To deal with strict consistency in the general case we therefore had to pursue a different strategy based on the separability of the norm predual of $\cX'$, which was inspired by the original work by Kreps (1981) and by the related work by Clark (1993) in the setting of frictionless markets.
\end{remark}

We are finally in a position to establish the announced version of the Fundamental Theorem of Asset Pricing for markets with frictions and general acceptance sets. The previous theorem is the heart of our Fundamental Theorem, which we state in the usual form of an equivalence and where, for concreteness, we focus on bounded price deflators to mimic its classical formulation. The theorem follows at once by combining Proposition~\ref{prop: density implies no good deal} and Theorem~\ref{theo: dual ftap A convex}. We split the theorem in three parts. In a first part, we focus on the situation where the acceptance set is the standard positive cone. In this case, we obtain a different proof of the one-period version of the Fundamental Theorem in markets with frictions established in Theorem 5.4 in Pennanen (2011a). As already said, the absence of scalable arbitrage opportunities corresponds to the ``no scalable arbitrage'' condition and a price deflator corresponds to a marginal price deflator in that paper. In a second and third part, we focus on conic and general acceptance sets, respectively. The corresponding versions of the Fundamental Theorem are new. We refer to the accompanying remark for a detailed embedding in the literature and to the example below for a proof of the necessity of our assumptions on the acceptance set.

\begin{theorem}[{\bf Fundamental Theorem of Asset Pricing}]
\label{theo: FTAP}
\begin{enumerate}
\item[(i)] There exists no scalable arbitrage opportunity if and only if there exists a strictly positive price deflator in $L^\infty$.
\item[(ii)] Let $L^1$ be separable with respect to its norm topology (e.g., $\cF$ is countably generated).
\begin{enumerate}
    \item[(a)] Let $\cA$ be a pointed cone. Then, there exists no scalable good deal if and only if there exists a strictly consistent price deflator in $L^\infty$.
    \item[(b)] Let $\cK(\cA)$ be pointed. If there exists no scalable good deal with respect to $\cK(\cA)$, then there exists a strictly consistent price deflator in $L^\infty$. If there exists a strictly consistent price deflator in $L^\infty$, then there exists no scalable good deal.
\end{enumerate}
\end{enumerate}
\end{theorem}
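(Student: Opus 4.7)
The plan is to derive Theorem~\ref{theo: FTAP} as a direct corollary of the two structural results already in place: the forward implications (``no scalable good deal'' $\Rightarrow$ existence of a price deflator) will come from Theorem~\ref{theo: dual ftap A convex}, while the converse implications will come from Proposition~\ref{prop: density implies no good deal}. The only non-routine step is verifying that Theorem~\ref{theo: dual ftap A convex} can be applied with $\cX' = L^\infty$, so that the price deflator produced is bounded.

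For part (i), I would set $\cA = L^0_+$, which is automatically a pointed cone. Since $L^0_+$ is invariant under equivalent changes of probability, I apply the device in Remark~\ref{rem: on S contained in X}: pick an equivalent measure $\probq$ with $d\probq/d\probp$ proportional to $(1+\sum_{i=1}^N|S_i|)^{-1}$ so that every $S\in\cS$ is integrable under $\probq$. Setting $\cX = L^1(\probq)$ and $\cX' = L^\infty$ (with predual $\cY = L^1(\probq)$), the standing assumptions of Assumption~\ref{standing assumption} are met and $\cA\cap\cX = L^1(\probq)_+$ is $\sigma(\cX,\cX')$-closed. The first (separability-free) half of Theorem~\ref{theo: dual ftap A convex}(i) then yields a strictly positive consistent price deflator in $L^\infty$ from absence of scalable arbitrages. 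Conversely, any strictly positive $D\in L^\infty$ is strictly consistent for $L^0_+$, so Proposition~\ref{prop: density implies no good deal} rules out scalable arbitrages.

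For (ii)(a) and (ii)(b), I take the dual pair $(L^1,L^\infty)$ directly (together with $\cY = L^1$); this requires $\cS \subset L^1$ and the appropriate closure of $\cA\cap L^1$ under $\sigma(L^1,L^\infty)$, which are the compatibility conditions of Assumption~\ref{standing assumption} (see Remark~\ref{rem: setting} for easily verifiable sufficient conditions). Under the assumed separability of $L^1$, the norm predual of $L^\infty$ is separable as required by the separability clause of Theorem~\ref{theo: dual ftap A convex}. That theorem, case (i) applied to the pointed cone $\cA$, handles (ii)(a), and case (ii) applied to $\cK(\cA)$ (which has the required pointedness by hypothesis and for which the absence of scalable good deals is assumed) handles the forward direction of (ii)(b). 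Both converse implications follow from Proposition~\ref{prop: density implies no good deal} applied to $\cA$; note that in (ii)(b) the converse concerns scalable good deals with respect to $\cA$, so there is no need to invoke $\cK(\cA)$ on that side.

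The main obstacle is bookkeeping rather than mathematics: ensuring that the abstract dual-space machinery of Theorem~\ref{theo: dual ftap A convex} can be instantiated with $\cX' = L^\infty$ while accommodating potentially non-integrable replicable payoffs. In part (i) this is resolved by the equivalent-measure trick precisely because $L^0_+$ is numeraire/change-of-measure invariant. In part (ii) that trick is unavailable, as flagged in Remark~\ref{rem: on S contained in X}, and one must instead verify compatibility with the pair $(L^1,L^\infty)$ directly.
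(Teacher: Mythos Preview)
Your proposal is correct and takes essentially the same approach as the paper, which proves the theorem in one line by combining Proposition~\ref{prop: density implies no good deal} and Theorem~\ref{theo: dual ftap A convex} (instantiated with $\cX'=L^\infty$ under the standing Assumptions~\ref{assumption direct} and~\ref{standing assumption}). The only difference is that the paper leaves the choice of $\cX$ to the standing assumptions already in force, whereas you spell out explicit instantiations; in particular, your change-of-measure device in part (i) is correct but unnecessary here since $\cS\subset\cX$ is a standing hypothesis---and if you do use it, note that the resulting deflator is a priori a $\probq$-price-deflator (the bilinear form becomes $\E_\probq$), so you must multiply by $d\probq/d\probp\in L^\infty$ to obtain a $\probp$-price-deflator in $L^\infty$, a routine conversion you glossed over.
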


\vspace{0.01cm}

\begin{remark}
\label{rem: FTAP}
We provide a detailed comparison of our version of the Fundamental Theorem of Asset Pricing with the various versions obtained in the good deal pricing literature.

\smallskip

(i) The focus of Carr et al.\ (2001) is on one-period frictionless markets. The reference acceptance set is convex and defined in terms of finitely many test probabilities. The reference probability space is finite. In Theorem 1 the authors establish a Fundamental Theorem under the absence of a special type of good deals that is specific to the polyhedral structure of the acceptance set and that is stronger than the absence of scalable good deals. The statement is in terms of representative state pricing functions, which correspond to special (in general not strictly) consistent price deflators.

\smallskip

(ii) The focus of Jaschke and K\"{u}chler (2001) is on multi-period markets with proportional frictions admitting a frictionless asset. The reference acceptance set is assumed to be a convex cone. The reference probability space is general. In fact, the payoff space is an abstract topological vector space. In Corollary 8 the authors establish a Fundamental Theorem under the assumption of absence of good deals of second kind. In our setting, this is equivalent to the absence of payoffs $X\in\cA\cap\cM$ such that $\pi(X)<0$. The statement is in terms of consistent (not strictly consistent) price deflators. To deal with the infinite dimensionality of $\cM$, which follows from the multi-period nature of the market model, the Fundamental Theorem is stated under an additional assumption that corresponds to the closedness of $\cC$. No sufficient conditions for this are provided. It should be noted that the absence of good deals of second kind is not sufficient to ensure closedness of $\cC$ even when $\cM$ is finite dimensional. To show this, let $\Omega=\{\omega_1,\omega_2,\omega_3\}$ and assume that $\cF$ is the power set of $\Omega$ and that $\probp(\omega_1)=\probp(\omega_2)=\probp(\omega_3)=\frac{1}{3}$. We take $\cX=L^0$ and identify every element of $L^0$ with a vector of $\R^3$. Let $\cM$ coincide with $\cS=\{(x,y,z)\in\R^3 \, ; \ x=0\}$ and let $\pi:\cS\to\R$ be defined by $\pi(x,y,z)=y$. Consider the closed convex conic acceptance set
\[
\cA=\left\{(x,y,z)\in\R^3 \,; \ x^2+y^2+6xy+2\sqrt{6}xz+2\sqrt{6}yz\geq0, \ \sqrt{3}x+\sqrt{3}y+\sqrt{2}z\geq0 \right\},
\]
obtained by rotating the cone $\cA'=\{(x,y,z)\in\R^3 \, ; \ x^2+y^2\leq 3 z^2, \ z\geq0\}$ by $\pi/3$ around the direction $(-1,1,0)$. It is easy to verify that if $X\in\cA\cap\cM$, then $\pi(X)\geq0$ and, hence, there are no good deals of second kind. We show that $\cC$ is not closed. For every $n\in\N$ define $X_n=\left(1-\frac{1}{n},-1,0\right)$ and note that $(X_n,0)\in\cC$ because $Z_n=(0,0,n^2)\in\cM$ satisfies $\pi(Z_n)=0$ and $Z_n-X_n\in\cA$. Clearly, we have $(X_n,0)\to(X,0)$ with $X=(1,-1,0)$. We conclude that $\cC$ is not closed as $(X,0)\notin\cC$.

\smallskip

(iii) The focus of \v{C}ern\'{y} and Hodges (2002) is on one-period frictionless markets. The reference acceptance set is convex. The reference probability space is general. In fact, the payoff space is an abstract locally-convex topological vector space. In Theorem 2.5 the authors establish a Fundamental Theorem under the absence of good deals with respect to the ``conified'' acceptance set. The statement is expressed in terms of strictly consistent price deflators and is proved under the additional assumption that $\cX$ is an $L^p$ space for some $1<p<\infty$ and that $\cA$ is boundedly generated, i.e., is included in the cone generated by a bounded set. This condition typically fails when the underlying probability space is not finite.

\smallskip

(iv) The focus of Staum (2004) is on multi-period markets with convex frictions. The reference acceptance set is convex. The reference probability space is general. In fact, the payoff space is an abstract locally-convex topological vector space. In Theorem 6.2 the author establishes a Fundamental Theorem under the assumption that for all payoffs $X\in\cX$ and nonzero $Z\in\cX_+$
\[
\inf\{\pi(Z) \,; \ Z\in\cM, \ Z-X\in\cA\}+\inf\{\pi(Z) \,; \ Z\in\cM, \ Z-X\in\cX_+\} > 0.
\]
The link with the absence of good deals is not discussed. The statement is in terms of strictly positive (not strictly consistent) price deflators. To deal with the infinite dimensionality of $\cM$, which follows from the multi-period nature of the market model, the Fundamental Theorem is stated under the additional assumption that $\pi^+$ is lower semicontinuous. Sufficient conditions for this are provided when $\cX=L^\infty$ (with respect the standard norm topology). Unfortunately, the proof of Lemma 6.1, which is key to deriving the Fundamental Theorem, is flawed. On the one side, Zorn's Lemma is evoked to infer that a family of sets that is closed under countable unions admits a maximal element. However, this is not true as illustrated, for instance, by the family of all countable subsets of $\R$. On the other side, it is tacitly assumed that, for a generic dual pair $(\cX,\cX')$, the series $\sum_{n\in\N}2^{-n}Y_n$ converges in the topology $\sigma(\cX',\cX)$ for every choice of $(Y_n)\subset\cX'$, which cannot hold unless special assumptions are required of the pair $(\cX,\cX')$ (as those stipulated, e.g., in Assumption~\ref{standing assumption}). The underlying strategy of reproducing the exhaustion argument used in the classical proof of the Fundamental Theorem seems unlikely to work because it heavily relies on the existence of a (dominating) probability measure and, as highlighted in Remark~\ref{rem: assumptions dual ftap}, breaks down in the presence of nonpositive acceptable payoffs.

\smallskip

(v) The focus of Cherny (2008) is on one-period markets with convex frictions. The reference acceptance set is a convex cone. The reference payoff space is tailored to the chosen acceptance set by way of a duality construction, which often delivers standard $L^p$ spaces, for example, when the acceptance set is based on expected shortfall. In Theorem 3.1 the author establishes a version of the Fundamental Theorem under the absence of special good deals. In our setting, they correspond to payoffs $X\in\cM$ with $\pi(X)\leq0$ and
\[
\inf\{m\in\R \,; \ X+m\in\cA\} < 0.
\]
The statement is in terms of a special class of (not necessarily strictly positive) price deflators. The proof uses the additional assumption that the barrier cone of the acceptance set is compactly generated.

\smallskip

(vi) The focus of Madan and Cherny (2010) is on one-period frictionless markets. The reference acceptance set is induced by an acceptability index. The reference payoff space consists of suitably integrable random variables. In Theorem 1 the authors provide a version of the Fundamental Theorem under the absence of good deals. The statement is in terms of (not necessarily strictly positive) price deflators.

\smallskip

(vii) The focus of Cheridito et al.\ (2017) is on multi-period markets with general frictions admitting a frictionless asset. The reference acceptance set is also general but is required to ensure convexity of a set that, in our notation, corresponds to
\[
\{X\in\cX \,; \ \exists Z\in\cM, \ \pi(Z)\leq0, \ Z-X\in\cA\} = \{X\in\cX \,; \ (X,0)\in\cC\}.
\]
The payoff space consists of suitable regular stochastic processes. Notably, no dominating probability measure is assumed to exist. In Theorem 2.1 the authors establish a Fundamental Theorem under the absence of a suitable class of strong good deals. To deal with the infinite dimensionality of $\cM$, which follows from the multi-period nature of the market model, the Fundamental Theorem is stated under additional regularity assumptions on the market model and the acceptance set ensuring finiteness of superreplication prices of special call options. The statement is in terms of (not necessarily strictly) consistent
price deflators.
\end{remark}

\begin{example}
Let $\Omega=\{\omega_1,\omega_2\}$ and assume that $\cF$ is the power set of $\Omega$ and that $\probp$ satisfies $\probp(\omega_1)=\probp(\omega_2)=\frac{1}{2}$. In this setting, we take $\cX=\cX'=L^0$ and identify every element of $L^0$ with a vector of $\R^2$.

\smallskip

(i) Set $\cM=\R^2$ and $\pi(x,y)=\max\{x,y\}$ for every $(x,y)\in\R^2$ and define
\[
\cA = \R^2_+\cup\{(x,y)\in\R^2 \,; \ x<0, \ y\geq x^2\}.
\]
Note that $\cA$ is not a cone. Note also that no scalable good deal exists. However, there exists no strictly consistent price deflator $D=(d_1,d_2)$. Indeed, for every $\lambda>0$ we could otherwise take $X_\lambda=(-\lambda,\lambda^2)\in\cA$ and note that $\E[DX_\lambda]>0$ implies $ d_2\lambda>d_1$, which contradicts the strict positivity, hence the strict consistency, of $D$. This shows that, if we remove conicity, then the ``only if'' implication in assertion (a) in Theorem \ref{theo: FTAP} generally fails. It also shows that the converse of the second implication in assertion (b) in the same result generally fails as well.

\smallskip

(ii) Set $\cM=\R^2$ and $\pi(x,y)=x+y$ for every $(x,y)\in\R^2$ and define
\[
\cA = \R^2_+\cup\{(x,y)\in\R^2 \,; \ x<0, \ y\geq e^{-x}-1\}.
\]
Note that $\cA$ is not a cone and $\cK(\cA)=\R^2_+\cup\{(x,y)\in\R^2 \,; \ x<0, \ y\geq -x\}$ is pointed. Note also that $D=(2,2)$ is a (in fact, the only) strictly consistent price deflator. However, $X=(-1,1)\in\cK(\cA)\cap\cM$ satisfies $\pi(X)=0$ and is therefore a scalable good deal with respect to $\cK(\cA)$. This shows that the converse of the first implication in assertion (b) in Theorem \ref{theo: FTAP} generally fails.
\end{example}


\subsection{Superreplication duality}

In this section, we first derive a dual representation of superreplication prices based on consistent price deflators under the assumption that the market is free of strong scalable good deals. We refer to Corollary 8 in Jaschke and K\"{u}chler (2001), Theorem 4.1 in Staum (2004), and Theorem 2.1 in Cheridito et al.\ (2017) for similar representations under the assumption of absence of good deals. We also refer to Proposition 3.9 in Frittelli and Scandolo (2006) for a similar representation in a risk measure setting. These representations were obtained under the assumption of lower semicontinuity of $\pi^+$. As mentioned in the proof of Proposition~\ref{prop: direct FTAP}, a sufficient condition for this to hold is precisely the absence of strong scalable good deals. In a second step, we improve the dual representation by replacing consistency with strict consistency. In a frictionless setting where the acceptance set is the standard positive cone, this is equivalent to moving from price deflators to strictly positive price deflators (equivalently, from martingale measures to equivalent martingale measures). This sharper result therefore extends the classical result on superreplication duality to markets with frictions and general acceptance sets.

\begin{theorem}[{\bf Superreplication duality}]
\label{theo: superhedging theorem}
The following statements hold:
\begin{enumerate}
  \item[(i)] If there exists no strong scalable good deal, then for every $X\in\cX$
\[
\pi^+(X) = \sup_{D\in\cD}\{\E[DX]-\gamma_{\pi,\cM}(D)+\gamma_\cA(D)\}.
\]
  \item[(ii)] If there exists no scalable good deal and if either $\cA=L^0_+$ or $\cA$ is a pointed cone and the norm predual of $\cX'$ is separable with respect to its norm topology, then for every $X\in\cX$
\begin{equation}
\label{eq: dual representation superreplication}
\pi^+(X) = \sup_{D\in\cD_{str}}\{\E[DX]-\gamma_{\pi,\cM}(D)\}.
\end{equation}
\end{enumerate}
\end{theorem}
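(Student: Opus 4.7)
The plan is to reduce both assertions to the machinery already developed. Part~(i) will be essentially a rewriting of the dual representation of the auxiliary set $\cC$ from Lemma~\ref{lem: elementary properties C}(iv) in terms of the superreplication functional via Lemma~\ref{lem: superreplication C}. Part~(ii) will then follow from part~(i) by using a strictly consistent price deflator delivered by the Fundamental Theorem of Asset Pricing to perturb an arbitrary weakly consistent price deflator into a strictly consistent one without losing too much in the dual functional.

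For part~(i), I first invoke Lemma~\ref{lem: closedness C} to conclude that, under the absence of strong scalable good deals, $\cC$ is closed and $(0,n)\notin\cC$ for some $n\in\N$. This is precisely the hypothesis activating the dual representation in Lemma~\ref{lem: elementary properties C}(iv), namely
\[
\cC=\bigcap_{D\in\cD}\{(X,m)\in\cX\times\R \,;\ \E[DX]+m\leq\gamma_{\pi,\cM}(D)-\gamma_\cA(D)\}.
\]
Fixing $X\in\cX$, Lemma~\ref{lem: superreplication C} gives $\pi^+(X)=\inf\{m\in\R \,;\ (X,-m)\in\cC\}$. Since $(X,-m)\in\cC$ is equivalent to $m\geq\E[DX]-\gamma_{\pi,\cM}(D)+\gamma_\cA(D)$ for every $D\in\cD$, taking the infimum over admissible $m$ yields exactly the representation in~(i).

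For part~(ii), the absence of scalable good deals implies the absence of strong scalable good deals, so part~(i) applies. Since $\cA$ is a cone in each of the two scenarios of~(ii), every $D\in\cD$ satisfies $\gamma_\cA(D)=0$: the presence of $0$ in $\cA\cap\cX$ gives $\gamma_\cA(D)\leq 0$, while for $X\in\cA\cap\cX$ and $n\in\N$ the conicity of $\cA$ produces $nX\in\cA\cap\cX$ and hence $\E[DX]\geq\gamma_\cA(D)/n$, so sending $n\to\infty$ forces $\E[DX]\geq 0$ and $\gamma_\cA(D)\geq 0$. Thus part~(i) reduces to $\pi^+(X)=\sup_{D\in\cD}\{\E[DX]-\gamma_{\pi,\cM}(D)\}$. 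Since $\cD_{str}\subset\cD$, one inequality is immediate. For the reverse, Theorem~\ref{theo: FTAP} supplies a strictly consistent price deflator $D^\ast\in\cD_{str}$: when $\cA=L^0_+$ strict positivity already yields strict consistency and no separability is needed, while in the conic case the separability assumption on the norm predual of $\cX'$ is precisely what allows the upgrade from strictly positive to strictly consistent in the Fundamental Theorem. Given $D\in\cD$ and $\varepsilon\in(0,1)$, I set $D_\varepsilon=(1-\varepsilon)D+\varepsilon D^\ast$. For any nonzero $X\in\cA\cap\cX$ one has $\E[D_\varepsilon X]\geq(1-\varepsilon)\cdot 0+\varepsilon\E[D^\ast X]>0$, so $D_\varepsilon\in\cD_{str}$, and by convexity of $\gamma_{\pi,\cM}$
\[
\E[D_\varepsilon X]-\gamma_{\pi,\cM}(D_\varepsilon)\geq(1-\varepsilon)\bigl(\E[DX]-\gamma_{\pi,\cM}(D)\bigr)+\varepsilon\bigl(\E[D^\ast X]-\gamma_{\pi,\cM}(D^\ast)\bigr).
\]
Since $\gamma_{\pi,\cM}(D^\ast)<\infty$ (because $D^\ast$ is a price deflator), letting $\varepsilon\downarrow 0$ gives $\sup_{D'\in\cD_{str}}\{\E[D'X]-\gamma_{\pi,\cM}(D')\}\geq\E[DX]-\gamma_{\pi,\cM}(D)$, and a final supremum over $D\in\cD$ closes the argument.

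The technical heart of the proof sits in the auxiliary results, not in the duality formulae themselves: the closedness of $\cC$ in Lemma~\ref{lem: closedness C} is what makes the representation in~(i) possible, and the Kreps--Yan exhaustion argument behind the Fundamental Theorem is what makes the strictly consistent perturbation in~(ii) available. Once those are in hand, the passage to the stated dual representations is a routine piece of convex duality; the only subtlety I would watch in writing~(ii) is verifying that the perturbation $D_\varepsilon$ remains in $\cD_{str}$ and that the dual functional evaluated at $D_\varepsilon$ behaves continuously enough as $\varepsilon\downarrow 0$ to transfer the supremum over $\cD$ to a supremum over the strictly smaller set $\cD_{str}$.
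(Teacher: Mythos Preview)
Your proof is correct and follows essentially the same route as the paper: part~(i) is identical, and in part~(ii) both you and the paper use the convex combination $D_\varepsilon=(1-\varepsilon)D+\varepsilon D^\ast$ with $\varepsilon\downarrow 0$, the only cosmetic difference being that the paper applies the perturbation to the dual representation of $\cC$ (showing $\cC=\bigcap_{D\in\cD_{str}}\{\cdots\}$) before extracting $\pi^+$, whereas you apply it directly to the supremum formula coming from part~(i). One minor point: the existence of $D^\ast\in\cD_{str}$ should be drawn from Theorem~\ref{theo: dual ftap A convex} rather than Theorem~\ref{theo: FTAP}, since the latter is stated under separability of $L^1$ specifically while the hypothesis of part~(ii) assumes only separability of the norm predual of $\cX'$.
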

\begin{proof}
Assume the market is free of strong scalable good deals. It follows from Lemma~\ref{lem: closedness C} that $\cC$ is closed and $(0,n)\notin\cC$ for some $n\in\N$. Now, take an arbitrary $X\in\cX$. Combining the representation of $\pi^+(X)$ in Lemma~\ref{lem: superreplication C} with the representation of (the closure of) $\cC$ obtained in Lemma~\ref{lem: elementary properties C}, we infer that
\begin{align*}
\pi^+(X)
&=
\inf\{m\in\R \,; \ \E[DX]-m-\gamma_{\pi,\cM}(D)+\gamma_\cA(D)\leq0, \ \forall D\in\cD\} \\
&=
\inf\{m\in\R \,; \ m\geq\E[DX]-\gamma_{\pi,\cM}(D)+\gamma_\cA(D), \ \forall D\in\cD\} \\
&=
\sup\{\E[DX]-\gamma_{\pi,\cM}(D)+\gamma_\cA(D) \,; \ D\in\cD\}.
\end{align*}
This proves (i). Now, let the assumptions in point (ii) hold. It follows from Theorem~\ref{theo: dual ftap A convex} that $\cD_{str}$ is nonempty. Moreover, by Lemma~\ref{lem: closedness C}, $\cC$ is closed and $(0,n)\notin\cC$ for some $n\in\N$. We claim that the representation in Lemma~\ref{lem: elementary properties C} for (the closure of) $\cC$ can be rewritten as
\begin{equation}
    \label{eq: dual repc C Dstr}
    \cC = \bigcap_{Y\in\cD_{str}}\{(X,m)\in\cX\times\R \,; \ \E[XY]+m\leq\gamma_{\pi,\cM}(Y)\}.
\end{equation}
Note that $\gamma_\cA(Y)=0$ for every $Y\in\cD$ by conicity of $\cA$. Clearly, we only need to establish the inclusion ``$\supset$''. To this end, take any $(X,m)\in\cX\times\R$ such that $\E[XY]+m\leq\gamma_{\pi,\cM}(Y)$ for every $Y\in\cD_{str}$. Fix $Y^\ast\in\cD_{str}$ and take any $Y\in\cD$. For every $\lambda\in(0,1)$ we have $\lambda Y^\ast+(1-\lambda)Y\in\cD_{str}$ so that
\begin{align*}
\lambda(\E[XY^\ast]+m)+(1-\lambda)(\E[XY]+m)
&=
\E[X(\lambda Y^\ast+(1-\lambda)Y)]+m \\
&\leq
\gamma_{\pi,\cM}(\lambda Y^\ast+(1-\lambda)Y) \\
&\leq
\lambda\gamma_{\pi,\cM}(Y^\ast)+(1-\lambda)\gamma_{\pi,\cM}(Y).
\end{align*}
Letting $\lambda\downarrow0$ delivers $\E[XY]+m\leq\gamma_{\pi,\cM}(Y)$ and shows the desired inclusion. Now, take any payoff $X\in\cX$. It follows from Lemma~\ref{lem: superreplication C} and \eqref{eq: dual repc C Dstr} that
\begin{align*}
\pi^+(X)
&=
\inf\{m\in\R \,; \ \E[DX]-m\leq\gamma_{\pi,\cM}(D), \ \forall D\in\cD_{str}\} \\
&=
\inf\{m\in\R \,; \ m\geq\E[DX]-\gamma_{\pi,\cM}(D), \ \forall D\in\cD_{str}\} \\
&=
\sup\{\E[DX]-\gamma_{\pi,\cM}(D) \,; \ D\in\cD_{str}\}.
\end{align*}
This establishes (ii) and concludes the proof.
\end{proof}


\subsection{Dual characterization of market-consistent prices}

The Fundamental Theorem also allows to derive our desired dual characterization of market-consistent prices with acceptable risk, which extends the classical characterization of arbitrage-free prices in terms of strictly positive price deflators. We complement this by showing that, contrary to the standard frictionless setting, for an attainable payoff with market-consistent superreplication price the supremum in the dual representation of the corresponding superreplication price need not be attained. Interestingly enough, this implies that a dual characterization of market-consistent prices for replicable payoffs in terms of strictly consistent price deflators is not always possible. The accompanying proposition shows a situation where the dual characterization holds also for replicable payoffs.

\begin{proposition}[{\bf Dual characterization of market-consistent prices}]
\label{theo: dual MCP}
If there exists no scalable good deal and if either $\cA=L^0_+$ or $\cA$ is a pointed cone and the norm predual of $\cX'$ is separable with respect to its norm topology, then the following statements hold:
\begin{enumerate}
  \item[(i)] If $\pi^+(X)\in\MCP(X)$ and the supremum in \eqref{eq: dual representation superreplication} is attained or if $\pi^+(X)\notin\MCP(X)$, then
\begin{equation}
\label{eq: dual representation MCP}
\MCP(X) = \{p\in\R \,; \ \exists D\in\cD_{str} \,:\, p\leq\E[DX]-\gamma_{\pi,\cM}(D)\}.
\end{equation}
  \item[(ii)] If $\pi^+(X)\in\MCP(X)$ and the supremum in \eqref{eq: dual representation superreplication} is not attained, then the strict inclusion ``$\supset$'' holds in \eqref{eq: dual representation MCP}. This can occur even if both $\pi$ and $\cM$ are conic and there exists no good deal.
\end{enumerate}
\end{proposition}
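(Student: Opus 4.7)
The plan is to start from the superreplication duality in Theorem~\ref{theo: superhedging theorem}(ii), which applies under the current hypotheses and gives
\[
\pi^+(X) = \sup_{D \in \cD_{str}} \{\E[DX]-\gamma_{\pi,\cM}(D)\}.
\]
Combined with Proposition~\ref{prop: interval MCP}, the set $\MCP(X)$ is an interval with infimum $-\infty$ and supremum $\pi^+(X)$, and therefore equals $(-\infty,\pi^+(X)]$ or $(-\infty,\pi^+(X))$ depending on whether $\pi^+(X)\in\MCP(X)$. The right-hand side
\[
R(X) := \{p\in\R \,;\ \exists D\in\cD_{str}\,:\, p\leq \E[DX]-\gamma_{\pi,\cM}(D)\}
\]
of the candidate representation is a union of half-lines; hence $R(X)=(-\infty,\pi^+(X)]$ when the dual supremum is attained and $R(X)=(-\infty,\pi^+(X))$ otherwise. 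Matching these two descriptions according to the cases will deliver both (i) and (ii).

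The central step is the following claim: \emph{if the supremum is attained by some $D^*\in\cD_{str}$, then $\pi^+(X)\in\MCP(X)$.} To see this, take any $Z\in\cM$ with $Z-X\in\cA\setminus\{0\}$; the definition of $\gamma_{\pi,\cM}$ gives
\[
\pi(Z) \geq \E[D^*Z]-\gamma_{\pi,\cM}(D^*) = \pi^+(X)+\E[D^*(Z-X)] > \pi^+(X),
\]
where the strict inequality uses strict consistency of $D^*$ applied to the nonzero acceptable payoff $Z-X$. Hence condition~(1) of market-consistency holds for $p=\pi^+(X)$; condition~(2), relevant only when $X\in\cM$, is automatic since then $\pi^+(X)\leq\pi(X)$ by taking $Z=X$. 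Statement~(i) follows by case analysis: if $\pi^+(X)\notin\MCP(X)$, the contrapositive of the claim forces non-attainment of the sup, so $\MCP(X)=R(X)=(-\infty,\pi^+(X))$; if $\pi^+(X)\in\MCP(X)$ and the sup is attained, both sets equal $(-\infty,\pi^+(X)]$.

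Statement~(ii) is the remaining configuration: if $\pi^+(X)\in\MCP(X)$ and the supremum is not attained, then $\MCP(X)=(-\infty,\pi^+(X)]$ while $R(X)=(-\infty,\pi^+(X))$, yielding the strict inclusion $\supsetneq$. To exhibit such a configuration with $\pi$ and $\cM$ conic and no good deals, I would construct a finite-state model in which $X\in\cM$ is the unique attainer of the primal superreplication problem (securing $\pi^+(X)\in\MCP(X)$ via Proposition~\ref{prop: characterization mcp superreplication}), while the dual objective $D\mapsto\E[DX]-\gamma_{\pi,\cM}(D)$ is maximized only on the relative boundary of $\cD$ where some coordinate of $D$ vanishes, i.e., precisely on points excluded from $\cD_{str}$ by strict consistency. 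The delicate point in such a construction is reconciling the uniqueness of the primal attainer at $X$, which tends to require strict monotonicity-type behaviour of $\pi$ along directions in $\cA$ and pushes $\cD$ away from $\{d_i=0\}$, with the requirement that the dual optimum lie on that very boundary; this will typically be achieved by choosing $\cM$ strictly smaller than $\cS$ so that portfolio constraints, rather than $\pi$ alone, rule out cheap replications and thereby decouple primal strictness from dual edge-attainment.
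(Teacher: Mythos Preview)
Your treatment of (i) and of the strict-inclusion part of (ii) is correct and follows essentially the paper's route: use the superreplication duality $\pi^+(X)=\sup_{D\in\cD_{str}}\{\E[DX]-\gamma_{\pi,\cM}(D)\}$ together with the interval structure of $\MCP(X)$, and verify via the chain $\pi(Z)\ge\E[D^*Z]-\gamma_{\pi,\cM}(D^*)=\pi^+(X)+\E[D^*(Z-X)]>\pi^+(X)$ that attainment of the dual supremum forces $\pi^+(X)\in\MCP(X)$. The paper establishes the inclusion $R(X)\subset\MCP(X)$ for \emph{every} $D\in\cD_{str}$ and then argues case by case, whereas you prove the single implication and recover everything from the interval characterization; these are minor organizational variants of the same argument.

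The gap lies in the ``this can occur'' clause of (ii), which requires a concrete witness that you do not supply. More importantly, the mechanism you propose---the dual maximizer sitting on a face of $\cD$ where some coordinate of $D$ vanishes---is not what drives the paper's Example~\ref{ex: MCP are not represented via pricing densities}. There $\cA=\R^2_+$, $\cM=\{(x,y):0\le y\le -x\}$ is conic, and $\pi(x,y)=-\sqrt{x^2+xy}$ on $\cM$ is conic but nonlinear; for $X=(-1,1)$ one has $(\cA+X)\cap\cM=\{X\}$, so $\pi^+(X)=\pi(X)=0\in\MCP(X)$. The price-deflator constraint $\E[DZ]\le\pi(Z)$ on $\cM$ forces $d_1\ge d_2+1/d_2$ whenever $d_2>1$, so the dual objective $\E[DX]=(d_2-d_1)/2$ can approach $0$ only as $d_2\to\infty$: the supremum is lost \emph{at infinity}, and the limiting ``optimizer'' is not a price deflator at all rather than merely failing strict consistency. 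Your heuristic of decoupling primal uniqueness from dual edge-attainment through portfolio constraints therefore aims at a different, unverified construction; the paper instead exploits the curvature of a nonlinear conic $\pi$.
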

\begin{proof}
It follows from Theorem~\ref{theo: dual ftap A convex} that $\cD_{str}$ is nonempty. First, we show the inclusion ``$\supset$'' in \eqref{eq: dual representation MCP}. Let $D\in\cD_{str}$. Note that for every attainable payoff $Z\in\cM$ such that $Z-X\in\cA\setminus\{0\}$ we have
\[
\pi(Z)
\geq
\E[DZ]-\gamma_{\pi,\cM}(D)
=
\E[D(Z-X)]+\E[DX]-\gamma_{\pi,\cM}(D)
>
\E[DX]-\gamma_{\pi,\cM}(D)
\]
by strict consistency. Note also that $\E[DX]-\gamma_{\pi,\cM}(D)\leq\pi(X)$ in the case that $X\in\cM$. This shows that $\E[DX]-\gamma_{\pi,\cM}(D)$ is a market-consistent price for $X$ and yields the desired inclusion. Now, recall from Proposition~\ref{prop: interval MCP} that $\pi^+(X)$ is the supremum of the set $\MCP(X)$. If $\pi^+(X)$ belongs to $\MCP(X)$, then the inclusion ``$\supset$'' in \eqref{eq: dual representation MCP} is an equality if and only if the supremum in \eqref{eq: dual representation superreplication} is attained. We refer to Example~\ref{ex: MCP are not represented via pricing densities} for a concrete situation where the latter condition fails even if both $\pi$ and $\cM$ are conic and the market admits no good deals. Finally, assume that $\pi^+(X)$ does not belong to $\MCP(X)$. To complete the proof we only have to show the inclusion ``$\subset$'' in \eqref{eq: dual representation MCP}. To this effect, take an arbitrary market-consistent price $p\in\MCP(X)$ and note that we must have $p<\pi^+(X)$. Hence, it follows from the representation \eqref{eq: dual representation superreplication} that $p<\E[DX]-\gamma_{\pi,\cM}(D)$ for a suitable $D\in\cD_{str}$. This concludes the proof.
\end{proof}

\begin{example}
\label{ex: MCP are not represented via pricing densities}
Let $\Omega=\{\omega_1,\omega_2\}$ and assume that $\cF$ is the power set of $\Omega$ and that $\probp$ is specified by $\probp(\omega_1)=\probp(\omega_2)=\frac{1}{2}$. In this simple setting, we take $\cX=\cX'=L^0$ and identify every element of $L^0$ with a vector of $\R^2$. Take $\cA=\R^2_+$, $\cS=\R^2$ and $\cM=\{(x,y)\in\R^2 \, ; \ 0\leq y\leq-x\}$. Define
\[
\pi(x,y)=
\begin{cases}
-\sqrt{x^2+xy} & \mbox{if} \ (x,y)\in\cM\\
\infty & \mbox{otherwise}
\end{cases},
\]
which is convex because it is continuous on $\cM$ and its Hessian matrix in the interior of $\cM$ 
has nonnegative eigenvalues, namely $0$ and $\frac{1}{4}(x^2+y^2)(x^2+xy)^{-3/2}$. Both $\cA$ and $\cM$ are cones and $\pi$ is conic. Moreover, there exists no good deal. A direct inspection shows that strictly consistent price deflators $D\in\cX'$ exist (for instance, take $D=(2,1)$) and satisfy $\gamma_{\pi,\cM}(D)=0$ by conicity. Now, set $X=(-1,1)\in\cM$. We have that $\pi^+(X)=\pi(X)=0$ since $(\cA+X)\cap\cM=\{X\}$. This also yields $0\in\MCP(X)$ by Proposition \ref{prop: characterization mcp superreplication}. We show that there is no $D=(d_1,d_2)\in\cD_{str}$ such that $\E[DX]=0$. Indeed, we would otherwise have $d_1=d_2$ and taking $Z_\lambda=(-1,\lambda)\in\cM$ for $\lambda\in(0,1)$ would deliver
\[
\sup_{0<\lambda<1}\{\E[DZ_\lambda]-\pi(Z_\lambda)\}\leq0 \ \implies \ d_1\geq\sup_{0<\lambda<1}\frac{2}{\sqrt{1-\lambda}}=\infty.
\]
As a result, the supremum in~\eqref{eq: dual representation superreplication} is not attained.
\end{example}

\begin{proposition}
If $\cA$ is a cone and there exists a strictly consistent price deflator $D\in\cX'$ such that $\gamma_{\pi,\cM}(D)=0$, then for every $X\in\cX$ such that $\pi^+(X)\in\MCP(X)$ and such that $X\in\cM^\infty\cap(-\cM^\infty)$ and $\pi$ is linear on $\Span(X)$ we have
\[
\MCP(X) = \{p\in\R \,; \ \exists D\in\cD_{str} \,:\, p\leq\E[DX]\}.
\]
\end{proposition}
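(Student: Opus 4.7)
The plan is to reduce both sides of the claimed identity to the same half-line $(-\infty,\pi^+(X)]$. First, I would observe that since $\cM$ is closed, convex, and contains $0$, its recession cone satisfies $\cM^\infty\subset\cM$; hence the hypothesis $X\in\cM^\infty$ delivers $X\in\cM$. Combining $X\in\cM$ with $\pi^+(X)\in\MCP(X)$, part (iii) of Proposition~\ref{theo: characterization mcp superreplication} yields $\pi^+(X)=\pi(X)$.

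Next, I would apply Proposition~\ref{prop: pricing density}(ii): every price deflator $D$ satisfies $\E[DX]=\pi(X)$ whenever $X\in\cM^\infty\cap(-\cM^\infty)$ and $\pi$ is linear on $\Span(X)$. Since strictly consistent price deflators are in particular price deflators, this identity holds for each $D\in\cD_{str}$. The hypothesis provides an explicit element of $\cD_{str}$ (the distinguished $D$ with $\gamma_{\pi,\cM}(D)=0$), so $\cD_{str}\neq\emptyset$ and the common value is $\E[DX]=\pi(X)=\pi^+(X)$ for every $D\in\cD_{str}$.

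Consequently, the dual set on the right-hand side of the claimed representation collapses to
\[
\{p\in\R \,; \ \exists D\in\cD_{str} \,:\, p\leq\E[DX]\}=\{p\in\R \,; \ p\leq\pi^+(X)\}=(-\infty,\pi^+(X)].
\]
On the other hand, Proposition~\ref{prop: interval MCP} shows that $\MCP(X)$ is an interval with $\inf\MCP(X)=-\infty$ and $\sup\MCP(X)=\pi^+(X)$, and by assumption this supremum is attained, so $\MCP(X)=(-\infty,\pi^+(X)]$ as well. The two sides therefore coincide.

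No substantial obstacle is expected: the additional hypotheses $X\in\cM^\infty\cap(-\cM^\infty)$ and the linearity of $\pi$ on $\Span(X)$ force $\E[DX]$ to take the same value $\pi(X)$ across \emph{every} strictly consistent price deflator, which neutralizes the attainment issue that obstructs the general dual representation in Proposition~\ref{theo: dual MCP}(ii) and underlies the counterexample in Example~\ref{ex: MCP are not represented via pricing densities}.
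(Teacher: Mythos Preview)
Your argument is correct and in fact slightly more direct than the paper's. The paper first invokes Proposition~\ref{prop: density implies no good deal} to rule out scalable good deals, then uses Proposition~\ref{theo: characterization mcp superreplication}(iii) to get $\pi^+(X)=\pi(X)$, applies Proposition~\ref{prop: pricing density}(ii) only to the \emph{distinguished} deflator $D$ (with $\gamma_{\pi,\cM}(D)=0$) to conclude that the supremum in~\eqref{eq: dual representation superreplication} is attained, and finally appeals to Proposition~\ref{theo: dual MCP}(i) as a black box. Your route bypasses Proposition~\ref{theo: dual MCP} altogether: by exploiting Proposition~\ref{prop: pricing density}(ii) for \emph{every} $D\in\cD_{str}$ you see that $\E[DX]$ is constant equal to $\pi(X)$ across all strictly consistent deflators, which immediately collapses the right-hand side to $(-\infty,\pi^+(X)]$ and lets you match it against $\MCP(X)$ via Proposition~\ref{prop: interval MCP}. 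This is cleaner and sidesteps the additional hypotheses (pointedness, separability of the predual) that Proposition~\ref{theo: dual MCP} formally carries.

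One small point to tighten: when you invoke Proposition~\ref{theo: characterization mcp superreplication}(iii), that result is stated under the assumption that there is no strong scalable good deal. You should record, as the paper does at the outset, that the existence of a strictly consistent price deflator rules out scalable good deals by Proposition~\ref{prop: density implies no good deal}, and hence strong scalable ones as well, so the hypothesis of Proposition~\ref{theo: characterization mcp superreplication} is indeed met.
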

\begin{proof}
It follows from Proposition \ref{prop: density implies no good deal} that the market has no scalable good deals. Now, take a payoff $X\in\cX$ such that $\pi^+(X)\in\MCP(X)$ and assume that $X\in\cM^\infty\cap(-\cM^\infty)$ and $\pi$ is linear on $\Span(X)$. By Proposition~\ref{theo: characterization mcp superreplication} we have $\pi^+(X)=\pi(X)$. Moreover, by Proposition~\ref{prop: pricing density}, we know that $\pi(X)=\E[DX]$. Hence the supremum in \eqref{eq: dual representation superreplication} is attained and the thesis follows from Proposition \ref{theo: dual MCP}.
\end{proof}


The next example shows that conicity is necessary for both Theorem~\ref{theo: superhedging theorem} and Proposition~\ref{theo: dual MCP} to hold.

\begin{example}
Let $\Omega=\{\omega_1,\omega_2\}$ and assume that $\cF$ is the power set of $\Omega$ and that $\probp$ is specified by $\probp(\omega_1)=\probp(\omega_2)=\frac{1}{2}$. In this simple setting, we take $\cX=\cX'=L^0$ and identify every element of $L^0$ with a vector of $\R^2$. Define $\pi(x,y)=\max\{x,x+y\}$ for every $(x,y)\in\R^2$ and set
\[
\cM=\{(x,y)\in\R^2 \, ; \ y\geq0\}, \ \ \ \ \cA = \{(x,y)\in\R^2 \,; \ y\geq\max\{-2x,0\}, \ x\geq-1\}.
\]
Note that $\pi$ and $\cM$ are both conic while $\cA$ is not. Note also that there exists no good deal. It is not difficult to verify that strictly consistent price deflators exist. Indeed, for a strictly-positive $D=(d_1,d_2)$
\[
\begin{cases}
\sup\{\E[DX]-\pi(X) \,; \ X\in\cM\}<\infty \\
\mbox{$\E[DX]>0$ for every nonzero $X\in\cA$}
\end{cases}
 \ \iff \
\begin{cases}
d_1=2 \\
1<d_2\leq 2
\end{cases}
.
\]
Set $X=(2,-4)\in\cX$. Since $(\cA+X)\cap\cM=\{(x,y)\in\R^2 \,; \ x\geq1, \ y\geq0\}$, we see that $\pi^+(X)=\pi(1,0)=1$. As $X$ does not belong to $\cM$, we have $\MCP(X)=(-\infty,1)$ by Proposition~\ref{theo: characterization mcp superreplication}. Both \eqref{eq: dual representation superreplication} and \eqref{eq: dual representation MCP} fail, since for every strictly consistent price deflator $D=(d_1,d_2)$ we have $\gamma_{\pi,\cM}(D)=0$ by conicity and
\[
\sup_{D\in\cD_{str}}\{\E[DX]-\gamma_{\pi,\cM}(D)\} = \sup_{1<d_2\leq 2}\{2-2d_2\} = 0.
\]
\end{example}

\section{Conclusions}

We established a version of the Fundamental Theorem of Asset Pricing in incomplete markets with frictions where agents use general acceptance sets to define good deals based on their individual preferences. The basic result states that the absence of scalable good deals is equivalent to the existence of strictly consistent price deflators. This extends and sharpens the existing versions of the Fundamental Theorem in the good deal pricing literature and allows to derive the appropriate version of superreplication duality. Even though our focus in on one-period models, we had to cope with technical challenges as the standard techniques used in arbitrage pricing (changes of numeraire, exhaustion arguments) break down in the presence of general acceptance sets. The new concepts and strategies developed in the paper are meant to be the building blocks for the construction of a complete multi-period theory of good deal pricing.


\appendix

\section{Appendix}

We use the convention $\infty-\infty=-\infty$ and $0\cdot\infty=0$. A set $\cC$ in a (topological) vector space $\cX$ is {\em pointed} if $\cC\cap(-\cC)=\{0\}$, {\em convex} if $\lambda\cC+(1-\lambda)\cC\subset\cC$ for every $\lambda\in(0,1)$ and {\em conic} (or a {\em cone}) if $\lambda\cC\subset\cC$ for every $\lambda\in[0,\infty)$. The smallest linear space that contains $\cC$ is denoted by $\Span(\cC)$. Similarly, the smallest cone that contains $\cC$ is denoted by $\cone(\cC)$. If $\cC$ is convex and $0\in\cC$, its {\em recession cone} is
\[
\cC^\infty := \bigcap_{\lambda\in(0,\infty)}\lambda\cC.
\]
Note that $\cC^\infty$ is the largest convex cone contained in $\cC$. If $\cC$ is additionally closed, then $\cC^\infty$ is also closed. In this case, we can equivalently express $\cC^\infty$ as
\begin{equation}
\label{eq: recession cones 1}
\cC^\infty = \{X\in\cX \,; \ \mbox{$\exists$ nets $(X_\alpha)\subset\cC$ and $(\lambda_\alpha)\subset\R_+$} \,:\, \lambda_\alpha\downarrow0, \ \mbox{$\lambda_\alpha X_\alpha\to X$}\}=\{X\in\cX \, ; \ X+\cC\subset\cC\}.
\end{equation}
A functional $\varphi:\cX\to(-\infty,\infty]$ is {\em convex} if $\varphi(\lambda X+(1-\lambda)Y)\leq\lambda\varphi(X)+(1-\lambda)\varphi(Y)$ holds for all $X,Y\in\cX$ and $\lambda\in(0,1)$, {\em conic} if $\varphi(\lambda X)=\lambda\varphi(X)$ holds for all $X\in\cX$ and $\lambda\in[0,\infty)$, {\em sublinear} if $\varphi$ is simultaneously convex and conic, {\em lower semicontinuous} if for every net $(X_\alpha)\subset\cX$ and every $X\in\cX$ such that $X_\alpha\to X$, we have $\varphi(X)\leq\liminf_\alpha\varphi(X_\alpha)$. This is equivalent to $\{X\in\cX \,; \ \varphi(X)\leq m\}$ being closed for every $m\in\R$. If $\varphi$ is convex and $\varphi(0)=0$, its {\em recession functional} $\varphi^\infty:\cX\to[-\infty,\infty]$ is
\[
\varphi^\infty(X) := \sup_{\lambda>0}\frac{\varphi(\lambda X)}{\lambda}.
\]
It is the smallest sublinear functional dominating $\varphi$. If $\varphi$ is lower semicontinuous, then also $\varphi^\infty$ is and for every $m\in\R$ we have
\begin{equation}
\label{eq: recession cones 2}
\{X\in\cX \,; \ \varphi(X)\leq m\}^\infty = \{X\in\cX \,; \ \varphi^\infty(X)\leq0\}.
\end{equation}
Denote by $\cY$ the topological dual of $\cX$ and by $\sigma(\cX,\cY)$ the weakest linear topology on $\cX$ such that the map $\langle \cdot,Y\rangle$ is continuous for every $Y\in\cY$. The {\em (upper) support functional} of a (nonempty) set $\cC\subset\cX$ is the map $\sigma_\cC:\cY\to(-\infty,\infty]$ defined by $\sigma_\cC(Y) := \sup_{X\in\cC}\langle X,Y\rangle$. Note that $\sigma_\cC$ is sublinear and $\sigma(\cY,\cX)$-lower semicontinuous. The effective domain of $\sigma_\cC$, $\barr(\cC) := \{Y\in\cY \,; \ \sigma_\cC(Y)<\infty\}$, is called the {\em barrier cone} of $\cC$. Note that $\barr(\cC)$ is a convex cone and, unless $\cC$ is a cone, may fail to be $\sigma(\cY,\cX)$-closed. If $\cC$ is a cone, then $\barr(\cC) = \{Y\in\cY \,; \ \langle X,Y\rangle\leq0, \ \forall X\in\cC\}$. 


\end{document}